\theoremstyle{plain}
\newtheorem{theorem}{Theorem}[section]
\newtheorem{lemma}[theorem]{Lemma}
\newtheorem{corollary}[theorem]{Corollary}
\newtheorem{problem}{Problem}
\newtheorem{proposition}[theorem]{Proposition}
\theoremstyle{definition}
\newtheorem{definition}{Definition}[section]
\newtheorem{remark}{Remark}
\newtheorem{example}{Example}
\newenvironment{proofof}[1]{\begin{proof}[Proof of #1]}{\end{proof}}
\newcommand{\alg}{\mathcal{A}}
\newcommand{\apxc}{\mathcal{C}}
\newcommand{\poly}{\operatorname{poly}}
\newcommand{\polylog}{\operatorname{polylog}}
\newcommand{\polyloglog}{\operatorname{polyloglog}}
\DeclareMathOperator{\arank}{rank}
\DeclareMathOperator{\rank}{rank}
\newcommand{\BQP}{\mathsf{BQP}}
\newcommand{\AM}{\mathsf{AM}}
\newcommand{\AMcc}{\mathsf{AM}^{\cc}}
\newcommand{\PTIME}{\mathsf{P}}
\newcommand{\PH}{\mathsf{PH}}
\newcommand{\PHcc}{\mathsf{PH}^{\cc}}
\newcommand{\UPP}{\mathsf{UPP}}
\newcommand{\cc}{\mathsf{cc}}
\newcommand{\eps}{\epsilon}
\newcommand{\SATPAIR}{\textsf{-Satisfying-Pair}}
\renewcommand{\epsilon}{\varepsilon}
\newcommand{\MaxIP}{\textsf{Max-IP}}
\newcommand{\MAX}{\textsf{Max}}
\newcommand{\LCS}{\textsf{LCS}}
\newcommand{\Edit}{\textsf{Edit-Distance}}
\newcommand{\DistLCS}{\textsf{LCS}^{\cc}}
\newcommand{\DistEdit}{\textsf{Edit-Dist}^{\cc}}
\newcommand{\SAT}{\textsf{SAT}}
\newcommand{\SETH}{\textsf{SETH}}
\newcommand{\EAMcc}{\AMcc_{\sf eff}}
\newcommand{\SYM}{\textsf{SYM}}
\newcommand{\OV}{\textsf{OV}}
\newcommand{\domainX}{\mathcal{X}}
\newcommand{\domainY}{\mathcal{Y}}
\newcommand{\countkOV}{\textsf{\#$k$-OV}}
\newcommand{\countOV}{\textsf{\#OV}}
\newcommand{\MGapIP}{\textsf{Gap-Inner-Product}}
\newcommand{\AND}{\textsf{AND}}
\newcommand{\OR}{\textsf{OR}}
\newcommand{\SETDISJ}{\textsc{Set-Disjointness}}
\newcommand{\ED}{\textsc{Element Distinctness}}
\newcommand{\FormulaEval}{\textsc{Formula Evaluation}}
\newcommand{\sparsecountOV}{\textsf{\#Sparse-OV}}
\newenvironment{reminder}[1]{\bigskip
	\noindent {\bf Reminder of #1  }\em}{\smallskip}
\title{Classical Algorithms from Quantum and Arthur-Merlin Communication Protocols}
\author{Lijie Chen \\Massachusetts Institute of Technology \\ \texttt{lijieche@mit.edu} \and Ruosong Wang\\Carnegie Mellon University \\ \texttt{ruosongw@andrew.cmu.edu}}
\date{}
\begin{document}
	\maketitle
	
	\begin{abstract}
		In recent years, the polynomial method from circuit complexity has been applied to several fundamental problems and obtains the state-of-the-art running times (e.g., R. Williams's $n^3 / 2^{\Omega(\sqrt{\log n})}$ time algorithm for APSP). As observed in [Alman and Williams, STOC 2017], almost all applications of the polynomial method in algorithm design ultimately rely on certain (probabilistic) low-rank decompositions of the computation matrices corresponding to key subroutines. They suggest that making use of low-rank decompositions directly could lead to more powerful algorithms, as the polynomial method is just one way to derive such a decomposition.
		
		Inspired by their observation, in this paper, we study another way of \emph{systematically constructing low-rank decompositions of matrices} which could be used by algorithms\textemdash\emph{communication protocols}. Since their introduction, it is known that various types of communication protocols lead to certain low-rank decompositions (e.g., $\PTIME$ protocols/rank, $\BQP$ protocols/approximate rank). These are usually interpreted as approaches for proving communication lower bounds, while in this work we explore the other direction. 
		
		We have the following two generic algorithmic applications of communication protocols:
		
		\begin{itemize}
			\item \textbf{Quantum Communication Protocols and Deterministic Approximate Counting.} Our first connection is that a fast $\BQP$ communication protocol for a function $f$ implies a fast deterministic additive approximate counting algorithm for a related pair counting problem. Applying known $\BQP$ communication protocols for $\SETDISJ$, $\ED$ and $\FormulaEval$, we get fast deterministic additive approximate counting algorithms for Count-OV (\countOV), Sparse Count-OV and Formula of $\SYM$ circuits. In particular, our approximate counting algorithm for $\countOV$ runs in near-linear time for all dimensions $d = o(\log^2 n)$. Previously, even no truly-subquadratic time algorithm was known for $d = \omega(\log n)$.
			
			\item \textbf{Arthur-Merlin Communication Protocols and Faster $\textsf{Satisfying-Pair}$ Algorithms.} Our second connection is that a fast $\AMcc$ protocol for a function $f$ implies a faster-than-bruteforce algorithm for $f\SATPAIR$. Using the classical Goldwasser-Sisper $\AM$ protocols for approximating set size, we obtain a new algorithm for approximate $\MaxIP_{n,c\log n}$ in time $n^{2 - 1/O(\log c)}$, matching the state-of-the-art algorithms in [Chen, CCC 2018].
		\end{itemize}
	
		We also apply our second connection to shed some light on long-standing open problems in communication complexity. We show that if the Longest Common Subsequence ($\LCS$) problem admits a fast (computationally efficient) $\AMcc$ protocol ($\polylog(n)$ complexity), then polynomial-size Formula-$\SAT$ admits a $2^{n - n^{1-\delta}}$ time algorithm for any constant $\delta > 0$, which is conjectured to be unlikely by a recent work [Abboud and Bringmann, ICALP 2018]. The same holds even for a fast (computationally efficient) $\PHcc$ protocol.
	\end{abstract}
	
	\thispagestyle{empty}
	\newpage
\section{Introduction}

Recent works have shown that the polynomial method, a classical technique for proving circuit lower bounds~\cite{Raz87,Smo87}, can be useful in designing efficient algorithms~\cite{Wil14a,Wil14b,AWY15,AW15,ACW16,LPTWY17,Alman19LightBulb}.

At a very high level, these algorithms proceed as follows: (1) identify a key subroutine of the core algorithm which has a certain low-degree polynomial representation; (2) replace that subroutine by the corresponding polynomials, and reduce the whole problem to a certain \emph{batched evaluation problem of sparse polynomials}; (3) embed that polynomial evaluation problem to \emph{multiplication of two low-rank (rectangular) matrices}, and apply the fast rectangular matrix multiplication algorithm~\cite{coppersmith1982rapid}.

As~\cite{AW17} point out. In term of step (3), these algorithms are ultimately making use of the fact that the corresponding matrices of some circuits or subroutines have low \emph{probabilistic rank}.\footnote{See Appendix~\ref{app:prob-low-rank-OV} for an illustration with the $n^{2 - 1/O(\log c)}$ time algorithm for $\OV_{n,c\log n}$ in~\cite{AWY15}.}~\cite{AW17} suggest that the \emph{probabilistic rank}, or various low-rank decompositions of matrices in general\footnote{A low probabilistic rank implies a probabilistic low-rank decomposition of the matrix.}, could be more powerful than the polynomial method, and lead to more efficient algorithms, as the polynomial method is just one way to construct them.

It has been noted for a long time that communication protocols are closely related to various notions of rank of matrices. To list a few: deterministic communication complexity is lower bounded by the logarithm of the \emph{rank} of the matrix~\cite{MS82}; quantum communication complexity is lower bounded by the logarithm of the \emph{approximate rank} of the matrix~\cite{ASTVW03,BCW98-quantum_communication}; $\UPP$ communication complexity is equivalent to the logarithm of the \emph{sign-rank} of the matrix~\cite{PS86}.

These connections are introduced (and usually interpreted) as methods for proving communication complexity lower bounds (see, e.g. the survey by Lee and Shraibman~\cite{LS09}), but they can also be interpreted in the other direction, as a way to \emph{systematically construct low-rank decompositions of matrices}. 

In this paper, we explore the connection between different types of communication protocols and low-rank decompositions of matrices and establish several applications in algorithm design. 
For all these connections, we start with an efficient communication protocol for a problem $F$, which implies an efficiently constructible low-rank decomposition of the corresponding communication matrix of $F$, from which we can obtain fast algorithms.

In fact, in our applications of quantum communication protocols, we also consider $k$-party protocols, and our algorithms rely on the approximate low-rank decomposition of the tensor of the corresponding communication problem. To the best of our knowledge, this is the first time that \emph{approximate tensor rank} is used in algorithm design (approximate rank has been used before, see e.g.~\cite{alon2009perturbed,barak2011rank,alon2013approximate,alon2014cover} and the corresponding related works section).\footnote{We remark that a concurrent work~\cite{yu2018optimal} makes algorithmic use of \emph{non-negative tensor approximate rank} to construct an optimal data structure for the succinct rank problem.} 

\subsection{Quantum Communication Protocols and Deterministic Approximate Counting}

Our first result is a generic connection between quantum communication protocols and deterministic approximate counting algorithms.

\begin{theorem}\label{thm:apx_count-two-party}(Informal)
	Let $\domainX, \domainY$ be finite sets and $f :  \domainX \times \domainY \to \{0, 1\}$ be a Boolean function. Suppose $f$ has a quantum communication protocol $\mathcal{P}$\footnote{We need some technical condition on $\mathcal{P}$, see Corollary~\ref{cor:apx_count} for details.} with complexity $C(\mathcal{P})$ and error $\varepsilon$. Then there is a classical deterministic algorithm $\apxc$ that receives $A \subseteq \domainX, B \subseteq \domainY$ as input, and outputs a number $E$ such that
	\begin{align*}
	\left|
	\sum_{(x,y) \in A \times B} f(x,y) - E  
	\right| \le \epsilon \cdot |A| \cdot |B|.
	\end{align*}
	Furthermore, $\apxc$ runs in $(|A| + |B|) \cdot 2^{O(C(\mathcal{P}))}$ time.
\end{theorem}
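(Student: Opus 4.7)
The approach is to convert the quantum protocol $\mathcal{P}$ into an explicit approximate low-rank factorization of the communication matrix of $f$, and then exploit linearity to reduce the double sum $\sum_{(x,y) \in A \times B} f(x,y)$ to a collection of independent row-sums and column-sums that can be computed in near-linear time.

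The first step is to apply the quantum analog of the Yao/Kremer simulation lemma: any quantum communication protocol exchanging $C = C(\mathcal{P})$ qubits induces a bilinear decomposition of its acceptance probability $P_{\mathcal{P}}(x,y)$ of the form
\[
P_{\mathcal{P}}(x,y) = \sum_{k=1}^{r} u_k(x)\, v_k(y),
\]
with $r = 2^{O(C)}$ and (possibly complex) ``row functions'' $u_k$ and ``column functions'' $v_k$ depending only on Alice's and Bob's sides of the protocol respectively. Since $\mathcal{P}$ decides $f$ with error at most $\varepsilon$, one obtains the entry-wise bound $|P_{\mathcal{P}}(x,y) - f(x,y)| \le \varepsilon$ for every $(x,y) \in \domainX \times \domainY$. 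The technical condition alluded to in the footnote is what promotes this existential factorization into a \emph{constructive} one, i.e., one where the entire vector $(u_1(x), \dots, u_r(x))$ (resp.\ $(v_1(y), \dots, v_r(y))$) can be computed classically in $2^{O(C)}$ time from $x$ (resp.\ $y$), by simulating the protocol's local operations on the $2^{O(C)}$-dimensional effective ``message'' subspace.

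Given such a decomposition, the algorithm $\apxc$ is immediate. It first computes, for each $k \in [r]$, the partial sums $\alpha_k := \sum_{x \in A} u_k(x)$ and $\beta_k := \sum_{y \in B} v_k(y)$, which altogether takes $(|A|+|B|) \cdot 2^{O(C)}$ time, and then outputs $E := \sum_{k=1}^{r} \alpha_k \beta_k$ (an additional $2^{O(C)}$ work). By linearity of summation, $E = \sum_{(x,y) \in A \times B} P_{\mathcal{P}}(x,y)$, so by the triangle inequality
\[
\left| \sum_{(x,y) \in A \times B} f(x,y) - E \right| \le \sum_{(x,y) \in A \times B} |f(x,y) - P_{\mathcal{P}}(x,y)| \le \varepsilon \cdot |A| \cdot |B|,
\]
which is exactly the desired guarantee.

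The main obstacle is the \emph{algorithmic realization} of Kremer's factorization: the bare existence of such a rank-$2^{O(C)}$ decomposition is classical, but to obtain the claimed running time one must exhibit each $u_k$ and $v_k$ as an \emph{efficiently evaluable} function rather than an arbitrary amplitude. This is where the protocol's structure---specifically, that its local unitaries admit a classical simulation on the $2^{O(C)}$-dimensional communicated subspace---must be invoked, and it is precisely what the technical condition in Corollary~\ref{cor:apx_count} encodes. A secondary, minor nuisance is that $u_k$ and $v_k$ are complex-valued whereas the output $E$ should be a real number; this is automatic because $\sum_k u_k(x) v_k(y) = P_{\mathcal{P}}(x,y) \in [0,1]$, but one can always take the real part at the end to sidestep the issue.
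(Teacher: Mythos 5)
Your proposal is correct and follows essentially the same route as the paper: it passes through the Yao--Kremer-style low-rank factorization of the acceptance probability (the paper's Theorem~\ref{thm:apx_rank} and its constructive form in Corollary~\ref{cor:apx_count}), applies the entrywise $\varepsilon$-approximation and triangle inequality, and evaluates $E$ by first aggregating row- and column-sums before taking a single inner product. The only detail you elide, correctly deferred to the technical footnote, is that the per-element evaluation cost also depends on the protocol's \emph{space} complexity $S_i(\mathcal{P})$, not just on $C(\mathcal{P})$, which the paper handles explicitly in Corollary~\ref{cor:apx_count} and Theorem~\ref{thm:apx_count}.
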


We remark here that there is a simple randomized algorithm running in sub-linear time via random-sampling. Thus the above algorithm is indeed a derandomization of that randomized algorithm.

The above theorem can also be easily generalized to the (number-in-hand) $k$-party case. See Section \ref{sec:def_quantum_comm} for the definition of the multiparty quantum communication model. 

\begin{theorem}\label{thm:apx_count-general}(Informal)
	Let $\domainX_1, \domainX_2, \ldots, \domainX_k$ be finite sets and $f :  \domainX_1, \domainX_2, \ldots, \domainX_k \to \{0, 1\}$ be a Boolean function.
	Suppose $f$ has a $k$-party quantum communication protocol $\mathcal{P}$ with complexity $C(\mathcal{P})$ and error $\epsilon$. Then there is a classical deterministic algorithm $\apxc$ that receives $X_1 \subseteq \domainX_1, X_2 \subseteq \domainX_2, \ldots, X_k \subseteq \domainX_k$ as input, and outputs a number $E$ such that
	\begin{align*}
	\left|
	\sum_{x_1 \in X_1, x_2 \in X_2, \ldots, x_k \in X_k} f(x_1, x_2, \ldots, x_k) - E
	\right|
	\le \epsilon \cdot \prod_{i=1}^{k} |X_i| .
	\end{align*}
	Furthermore, $\apxc$ runs in $(|X_1| + |X_2| + \ldots + |X_k|) \cdot 2^{O(C(\mathcal{P}))}$ time.
\end{theorem}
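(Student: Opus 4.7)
The plan is to reduce the theorem to producing, from $\mathcal{P}$, an efficient \emph{approximate tensor decomposition} of the acceptance probability of $\mathcal{P}$. Concretely, I will argue that there exist an integer $r = 2^{O(C(\mathcal{P}))}$ and functions $\{u^{(i)}_j : \domainX_i \to \mathbb{C}\}_{i \in [k],\, j \in [r]}$ such that
\begin{align*}
\left| \sum_{j=1}^{r} \prod_{i=1}^{k} u^{(i)}_j(x_i) \;-\; f(x_1, \ldots, x_k) \right| \;\le\; \epsilon
\end{align*}
for every $(x_1, \ldots, x_k) \in \domainX_1 \times \cdots \times \domainX_k$, and such that each value $u^{(i)}_j(x_i)$ is evaluable in $2^{O(C(\mathcal{P}))}$ time from the description of $\mathcal{P}$ (this is where the ``technical condition'' on $\mathcal{P}$ enters). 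Given such a decomposition, $\apxc$ is immediate: for every $i \in [k]$ and $j \in [r]$ precompute the one-dimensional sum $S^{(i)}_j := \sum_{x_i \in X_i} u^{(i)}_j(x_i)$, then output $E := \sum_{j=1}^{r} \prod_{i=1}^{k} S^{(i)}_j$. By distributivity $E = \sum_{x_1 \in X_1, \ldots, x_k \in X_k} \sum_j \prod_i u^{(i)}_j(x_i)$, so the pointwise error of $\epsilon$ translates into total error at most $\epsilon \cdot \prod_i |X_i|$; the total running time is $2^{O(C(\mathcal{P}))} \cdot (|X_1| + \cdots + |X_k|)$, as required.

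The main obstacle, and the only nontrivial step, is producing the decomposition. For $k = 2$ this is classical: Kremer's expansion of a $C$-qubit protocol (see also Buhrman--Cleve--Wigderson) writes the acceptance probability as a sum of $2^{O(C)}$ products $a_j(x)\,b_j(y)$. I plan to lift that argument to the number-in-hand $k$-party setting as follows. Model $\mathcal{P}$ by a sequence of unitaries $U_1, \ldots, U_T$ applied to an initial state $|\psi_0\rangle$ which encodes each party's input $x_i$ in a private register $R_i$, together with a shared message register $M$ initialized to $|0\rangle$. Each $U_t$ acts non-trivially only on $M$ and on $R_{i(t)}$, the private register of the party speaking in round $t$. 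The acceptance probability is
\begin{align*}
p(x_1, \ldots, x_k) \;=\; \langle \psi_0 |\, U_1^\dagger \cdots U_T^\dagger\, \Pi_{\mathrm{acc}}\, U_T \cdots U_1\, | \psi_0 \rangle.
\end{align*}
Inserting a resolution of identity $\sum_m |m\rangle\langle m|$ in the computational basis of $M$ between every consecutive pair of unitaries, in both the forward and backward products, rewrites $p$ as a sum over at most $2^{O(C(\mathcal{P}))}$ classical ``message trajectories''. For each fixed trajectory, every factor $\langle m_t | U_t | m_{t-1} \rangle$ is an operator on $R_{i(t)}$ alone, so the full product of these factors, regrouped by speaker, factors as a tensor product over the parties' private registers. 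The final sum over accepting basis states likewise factors over parties, because the output bit lies in a single party's register while the other parties' final-state sums are unconstrained and therefore split independently. This yields a decomposition of $p$ as a sum of $2^{O(C(\mathcal{P}))}$ rank-one tensors of the desired form $\prod_i u^{(i)}_j(x_i)$, and $|p - f| \le \epsilon$ pointwise by the correctness of $\mathcal{P}$.

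Each scalar $u^{(i)}_j(x_i)$ is itself a matrix element of a product of local operators $\langle m_t | U_t | m_{t-1}\rangle$ (for rounds with $i(t) = i$) acting on $R_i$, evaluated at $|x_i,0\rangle$; under the technical condition that each $U_t$ admits an efficient local description, this matrix element is computable in $2^{O(C(\mathcal{P}))}$ time by classical simulation of party $i$'s portion of the protocol along the fixed trajectory. I expect the $k$-party path-summation to introduce no conceptual difficulty beyond bookkeeping---tracking which party ``owns'' each factor in a trajectory and showing that the sum over accepting basis states factors across parties---with everything else (the algorithm, its running time, and the error analysis) being immediate once the decomposition is in hand. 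The case $k=2$ recovers Theorem~\ref{thm:apx_count-two-party}, while the general $k$ gives Theorem~\ref{thm:apx_count-general}.
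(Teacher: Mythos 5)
Your proof is correct and follows essentially the same route as the paper. The paper establishes the approximate tensor decomposition (Theorem~\ref{thm:apx_rank}, proved in Appendix~\ref{apx:simulate}) by inductively writing the protocol's final state as a superposition over at most $2^{C(\mathcal{P})}$ channel-modification histories that factorizes across the parties' private registers---precisely the Kremer-style path expansion you obtain by inserting resolutions of identity on the message register---and then derives the sketching algorithm (Theorem~\ref{thm:apx_count}) that sums each party's local vectors and evaluates a $k$-fold generalized inner product, exactly as you propose.
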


\newcommand{\sketch}{\textsf{sk}}

\paragraph*{Sketching Algorithms.} In fact, Theorem~\ref{thm:apx_count-general} implies a stronger \emph{sketching algorithm}. Given subsets $X_1,X_2,\dotsc,X_k$, the algorithm first computes a $w =2^{O(C(\mathcal{P}))}$ size sketch $\sketch_i$ from each $X_i$ in $O(|X_i| \cdot w)$ time deterministically, and the number $E$ can be computed from these $\sketch_i$'s in $O(k \cdot w)$ time.

The sketch computed by the algorithm is in fact a vector in $\mathbb{R}^{w}$, and it satisfies a nice additive property. That is, the sketch of $X_1 \sqcup X_2$ (union as a multi-set) is simply $\sketch(X_1) + \sketch(X_2)$.


Applying existing quantum communication protocols, we obtain several applications of Theorem~\ref{thm:apx_count-two-party} and Theorem~\ref{thm:apx_count-general}.

\subsubsection{\SETDISJ\ and Approximate $\textsf{\#OV}$ and $\countkOV$}

We first consider the famous \SETDISJ\ problem (Alice and Bob get two vectors $u$ and $v$ in $\{0, 1\}^d$ correspondingly, and want to determine whether $\langle u , v \rangle = 0$), which has an efficient quantum communication protocol~\cite{AA05} with communication complexity $O(\sqrt{d})$.

The corresponding count problem for \SETDISJ\ is the counting version of the Orthogonal Vectors problem ($\OV$), denoted as $\textsf{\#OV}_{n,d}$. In this problem, we are given two sets of $n$ vectors $S, T\subseteq \{0, 1\}^{d}$, and the goal is to count the number of pairs $u \in S, v \in T$ such that $\langle u, v \rangle = 0$.

Applying the quantum communication protocol for $\SETDISJ$ and Theorem~\ref{thm:apx_count-general}, we immediately get an algorithm for the approximate version of $\countOV$.

\begin{theorem}\label{thm:apx-OV}
	For any $d$ and any constant $\epsilon > 0$, $\textsf{\#OV}_{n,d}$ can be approximated deterministically with additive error $\epsilon \cdot n^2$ in $n \cdot 2^{O(\sqrt{d})}$ time. In particular, it runs in $n^{1 + o(1)}$ time when $d = o(\log^2 n)$.
\end{theorem}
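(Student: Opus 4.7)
The plan is to invoke Theorem~\ref{thm:apx_count-two-party} with $f$ being the (Boolean) $\SETDISJ$ predicate on $\{0,1\}^d\times\{0,1\}^d$, namely $f(u,v)=1$ iff $\langle u,v\rangle=0$. The quantity $\sum_{(u,v)\in S\times T} f(u,v)$ is exactly $\textsf{\#OV}_{n,d}(S,T)$, so an additive $\epsilon\cdot|S|\cdot|T|=\epsilon\cdot n^2$ approximation to this sum is exactly the approximation promised in the statement.

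First I would plug in the Aaronson--Ambainis quantum communication protocol for $\SETDISJ$ from \cite{AA05}, which computes $f$ with constant error using $C(\mathcal{P})=O(\sqrt{d})$ qubits of communication. (I would also verify that this protocol satisfies the ``technical condition'' alluded to in the footnote of Theorem~\ref{thm:apx_count-two-party}; the Aaronson--Ambainis protocol is presented in a standard clean form, so this should be routine but is the one step that really needs to be checked rather than waved through.) To drive the error of the protocol below an arbitrary constant $\epsilon$, I would amplify by a constant number of parallel repetitions followed by majority, which blows up the complexity by only a constant factor and keeps $C(\mathcal{P})=O(\sqrt{d})$.

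Next I would apply Theorem~\ref{thm:apx_count-two-party} with $\mathcal{X}=\mathcal{Y}=\{0,1\}^d$, $A=S$, $B=T$, and the amplified protocol $\mathcal{P}$. The theorem directly produces a deterministic algorithm $\apxc$ outputting some $E$ with
\begin{align*}
\left|\sum_{(u,v)\in S\times T} f(u,v)-E\right|\le \epsilon\cdot|S|\cdot|T|=\epsilon\cdot n^2,
\end{align*}
and running in time $(|S|+|T|)\cdot 2^{O(C(\mathcal{P}))}=n\cdot 2^{O(\sqrt{d})}$, as required.

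Finally, for the ``in particular'' clause, if $d=o(\log^2 n)$ then $\sqrt{d}=o(\log n)$, so $2^{O(\sqrt{d})}=n^{o(1)}$ and the total running time is $n^{1+o(1)}$. The only genuine technical point in this proof is confirming that the Aaronson--Ambainis protocol meets whatever structural condition Theorem~\ref{thm:apx_count-two-party} imposes (e.g., on the form of the final measurement or on being able to write acceptance probability as a low-rank bilinear form in $|A|,|B|$); everything else is bookkeeping.
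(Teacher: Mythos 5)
Your proposal is correct and matches the paper's approach: the paper proves this as the $k=2$ special case of Theorem~\ref{thm:count_k_ov} in Section~\ref{sec:count_k_ov}, invoking the $O(\sqrt{d})$-communication, $O(\polylog(d))$-space Aaronson--Ambainis $\SETDISJ$ protocol together with the generic counting theorem. Your caution about verifying the ``technical condition'' is the right instinct\textemdash in the formal version (Corollary~\ref{cor:apx_count} / Theorem~\ref{thm:apx_count}) the running time scales with $2^{C(\mathcal{P})+S_i(\mathcal{P})}$ and requires the unitaries to be classically constructible in polynomial time, both of which the Aaronson--Ambainis protocol satisfies.
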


\paragraph*{Comparison with~\cite{CW16}.} \cite{CW16} gives a \emph{deterministic exact counting} algorithm for $\countOV_{n,c\log n}$, which runs in $n^{2 - O(1/\log c)}$ time. Note that their running time is $n^{2 - o(1)}$ when $d = \omega(\log n)$, while our algorithm only achieves an additive approximation, but runs in \emph{near-linear} time for all $d = o(\log^2 n)$.

Another closely related problem, \textsc{Counting Partial Match}, is the problem that given $n$ query strings from $\{0, 1, \star\}^{d}$ ($\star$ is a ``don't care'') and $n$ strings from $\{0, 1\}^{d}$, and the goal is to count the number of matching string and query pairs.

Using known reductions between \textsc{Partial Match} and \textsf{OV} (see, e.g., Section 2 in \cite{AWY15}), together with the approximate counting algorithm for $\countOV$, we can also solve \textsc{Counting Partial Match} approximately in the same running time.

The approximate counting algorithm for $\countOV$ can be easily generalized to solve \textsf{\#$k$-OV}, which is the problem that given $k$ sets of $n$ vectors $X_1, X_2, \ldots, X_k \subseteq \{0, 1\}^{d}$, and count the number of $k$-tuples $u_1 \in X_1, u_2 \in X_2, \ldots, u_k \in X_k$ such that $\langle u_1, u_2, \ldots, u_k \rangle = 0$.\footnote{the generalized inner product of $k$ vectors, is defined as $\langle u_1, u_2, \ldots, u_k \rangle = \sum_{i=1}^{d} \prod_{j=1}^k (u_j)_i$.}

Applying Theorem~\ref{thm:apx_count-general} and observe that the $2$-party \SETDISJ\ protocol in~\cite{AA05} can be easily generalized to solve the $k$-party case (in $k$-party $\SETDISJ$, there are $k$ players getting $u_1,u_2,\dotsc,u_k$ respectively, and they want to determine whether $\langle u_1, u_2, \ldots, u_k \rangle = 0$), we obtain the following approximate counting algorithm for \countkOV.

\begin{theorem}\label{thm:count_k_ov}
	For any integers $k$, $d$ and any constant $\epsilon > 0$, $\countkOV_{n,d}$ can be approximated deterministically with additive error $\epsilon \cdot n^k$ in $n \cdot 2^{O(k\sqrt{d})}$ time. In particular, it runs in $n^{1 + o(1)}$ time when $k$ is a constant and $d = o(\log^2 n)$.
\end{theorem}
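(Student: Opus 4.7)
The plan is to apply Theorem~\ref{thm:apx_count-general} with $\domainX_i = \{0,1\}^d$, $X_i$ the given set of $n$ vectors for each $i \in [k]$, and Boolean function $f(u_1,\ldots,u_k) = \mathbf{1}[\langle u_1,\ldots,u_k\rangle = 0]$. Once I have a $k$-party quantum communication protocol $\mathcal{P}$ for $f$ with error $\epsilon/2$ (say) and complexity $C(\mathcal{P}) = O(k\sqrt{d})$, the theorem immediately yields a deterministic approximation of $\sum_{u_1\in X_1,\ldots,u_k \in X_k} f(u_1,\ldots,u_k)$ with additive error $\tfrac{\epsilon}{2}\cdot n^k < \epsilon\cdot n^k$ in time $(kn) \cdot 2^{O(k\sqrt{d})} = n \cdot 2^{O(k\sqrt{d})}$, since the factor of $k$ is absorbed into the exponential.

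So the whole task reduces to exhibiting a $k$-party quantum protocol for $k$-party \SETDISJ\ (equivalently, for $f$) with communication $O(k\sqrt{d})$. I would generalize the Aaronson--Ambainis protocol~\cite{AA05} as follows: designate player $1$ as the ``searcher'' and have her run Grover's algorithm over the coordinate set $[d]$, looking for an index $i$ with $\prod_{j=1}^k (u_j)_i = 1$. Each oracle query in Grover's algorithm must check whether a superposition of indices $i$ satisfies this conjunction; this is implemented by passing the coordinate register sequentially through players $2,3,\dotsc,k$, each of whom XORs into a phase qubit (or ANDs into an auxiliary qubit) the value $(u_j)_i$ in superposition, and then uncomputing on the reverse pass. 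A single simulated oracle query therefore costs $O(k\log d)$ communication, and Grover's algorithm requires $O(\sqrt{d})$ queries; using the more careful implementation of Aaronson--Ambainis (which sends the coordinate indirectly via a low-communication distributed search rather than naively transmitting $\log d$ bits per query) the total communication can be brought down to $O(k\sqrt{d})$. Error amplification by constant-round majority amplification preserves the $O(k\sqrt{d})$ bound and drives the error below any desired constant.

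The main obstacle I anticipate is the quantitative bound on the $k$-party protocol: naively simulating a 2-party quantum \SETDISJ\ protocol in the $k$-party model gives $O(k\sqrt{d}\log d)$, not $O(k\sqrt{d})$. Shaving the $\log d$ factor requires using the version of the Aaronson--Ambainis protocol that avoids explicitly sending the $\log d$-bit index at every Grover iteration, instead piggybacking on the quantum walk / distributed Grover framework. A secondary concern is verifying the technical condition on $\mathcal{P}$ alluded to in the footnote of Theorem~\ref{thm:apx_count-two-party}/\ref{thm:apx_count-general} (e.g., constructibility of the unitaries in classical time $2^{O(C(\mathcal{P}))}$); the Aaronson--Ambainis protocol and its natural $k$-party generalization are efficiently constructible, so this condition should be straightforward to check but must be explicitly verified in the protocol description.
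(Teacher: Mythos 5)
Your proposal takes essentially the same route as the paper: plug the $k$-party generalization of the Aaronson--Ambainis $\SETDISJ$ protocol (communication $O(k\sqrt{d})$, space $\polylog(d)$) into Theorem~\ref{thm:apx_count-general}. The paper simply asserts this $k$-party generalization with no further detail; your sketch of the distributed Grover simulation, and in particular your explicit flag that the naive version costs $O(k\sqrt{d}\log d)$ and that the $\log d$ must be shaved via the AA05 structure, is a genuine (and correct) concern that the paper glosses over, but the overall argument is the same.
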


\begin{remark}
We remark that similar algorithms with slightly worse running time ($n \cdot d^{O(\sqrt{d} )}$ time for additive approximation to $\countOV_{n,d}$) can also be derived using the polynomial method, see Appendix~\ref{app:algo-from-approx-poly} for details. However, we think our new algorithms via quantum communication protocols have the following extra benefits: (1) our algorithm is slightly faster, with a running time of $n \cdot 2^{O(\sqrt{d})}$; (2) our algorithm is derived via a general connection. Once the connection is set up, the algorithm follows in an elegant and black-box way. We hope this general connection could stimulate more applications of quantum communication protocols.
\end{remark}

\subsubsection{Sparse $\SETDISJ$ and Approximate Sparse $\textsf{\#OV}$}

Next we consider a sparse version of $\SETDISJ$, in which Alice and Bob get two sparse vectors $u,v \in \{0,1\}^m_{\le d}$\footnote{We use $\{0, 1\}^m_{\le d}$ to denote all Boolean vectors of length $m$ with at most $d$ ones.}, and want to decide whether $\langle u, v\rangle = 0$.

Using the famous quantum-walk algorithm for $\ED$~\cite{ambainis2007quantum}, there is an $O(d^{2/3} \log m)$ communication protocol for sparse $\SETDISJ$, which is much better than the $O(\sqrt{m})$ protocol for $\SETDISJ$ when $m \gg d$.

Applying this protocol and Theorem~\ref{thm:apx_count-two-party}, we can give an algorithm for a sparse version of $\countOV$, denoted as $\sparsecountOV_{n,m,d}$, in which we are given sets $A,B \subseteq \{0, 1\}^{m}_{\le d}$ of $n$ vectors, and the goal is to count the number of distinct $(a,b) \in A \times B$ such that $\langle a,b \rangle = 0$. Formally, we have:
\begin{theorem}\label{thm:count_sparse_ov}
	For integers $n,m,d$ and any constant $\epsilon > 0$, $\sparsecountOV_{n,m,d}$ can be approximated deterministically with additive error $\epsilon \cdot n^2$ in 
	\[
	n \cdot 2^{O(d^{2/3} \log(m))}
	\]
	time. In particular, when $m = \poly(d)$ and $d = o\left( \left(\frac{\log n}{\log 
		\log n} \right)^{1.5} \right)$, it runs in $n^{1 + o(1)}$ time.
\end{theorem}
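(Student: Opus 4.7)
The plan is to apply Theorem~\ref{thm:apx_count-two-party} with $\domainX = \domainY = \{0,1\}^m_{\le d}$, with $f(u,v) = 1$ iff $\langle u,v\rangle = 0$, and with $A,B$ taken to be the given input sets of size $n$. Since $\sum_{(u,v)\in A\times B} f(u,v)$ is exactly the quantity $\sparsecountOV_{n,m,d}$ asks for, the theorem will output an estimate with additive error $\epsilon\cdot |A|\cdot |B| \le \epsilon n^{2}$ in $(|A|+|B|)\cdot 2^{O(C(\mathcal{P}))} = n\cdot 2^{O(C(\mathcal{P}))}$ time, so the task reduces to exhibiting a bounded-error quantum communication protocol for sparse $\SETDISJ$ of complexity $O(d^{2/3}\log m)$ that satisfies the technical hypothesis of Corollary~\ref{cor:apx_count}.

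For the protocol itself, I would use the standard Alice-Bob simulation of Ambainis's quantum-walk algorithm for $\ED$~\cite{ambainis2007quantum}. Each player encodes their input as the sorted list of indices where the vector is $1$, obtaining lists $S_u,S_v \subseteq [m]$ of length at most $d$, so that $\langle u,v\rangle \ne 0$ iff the concatenated multiset $S_u\sqcup S_v$ (of length $N\le 2d$) contains a duplicate. Ambainis's algorithm decides $\ED$ on $N$ elements using $O(N^{2/3})$ quantum queries; since each element is an index in $[m]$ and each query can be routed to the appropriate player using $O(\log m)$ qubits, the whole simulation costs $O(d^{2/3}\log m)$ qubits of communication, which is exactly the desired $C(\mathcal{P})$.

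Plugging this protocol into Theorem~\ref{thm:apx_count-two-party} yields the advertised running time $n\cdot 2^{O(d^{2/3}\log m)}$ and additive error $\epsilon\cdot n^{2}$. For the ``in particular'' clause, $m = \poly(d)$ gives $\log m = O(\log d)$, so the exponent is $O(d^{2/3}\log d)$; and for $d = o\!\bigl((\log n/\log\log n)^{1.5}\bigr)$ we have $d^{2/3} = o(\log n/\log\log n)$ and $\log d = O(\log\log n)$, so $d^{2/3}\log d = o(\log n)$ and the bound collapses to $n\cdot n^{o(1)} = n^{1+o(1)}$, as required.

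The one place that requires real care is verifying that the quantum-walk protocol meets whatever structural condition Corollary~\ref{cor:apx_count} imposes on $\mathcal{P}$ (presumably a clean-form condition plus a sufficiently small constant error). Ambainis's algorithm performs intermediate measurements and maintains a non-trivial workspace, so I expect the main bookkeeping effort to go into deferring all measurements to the end and driving the error below the required constant via standard amplification; I do not anticipate needing any ingredient beyond these routine transformations.
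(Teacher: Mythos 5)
Your proposal follows essentially the same route as the paper: reduce to sparse $\SETDISJ$ via the observation that $\langle u,v\rangle = 0$ iff the concatenated support lists $S_u \sqcup S_v$ have all distinct elements, simulate Ambainis's $\ED$ algorithm as an $O(d^{2/3}\log m)$-communication protocol, and plug it into the generic counting theorem. The "real care" you flag at the end is exactly what the paper handles by stating Theorem~\ref{thm:alg_ed} with an explicit $O(n^{2/3}\log m)$ qubit bound — the space complexity $S_i(\mathcal{P})$ enters the running time in Theorem~\ref{thm:apx_count} as $2^{O(C(\mathcal{P})+S_i(\mathcal{P}))}$, so the bounded-workspace version of Ambainis's algorithm (rather than, say, Reichardt's span-program algorithm) is essential, and the paper simply cites it as a black box.
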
 

We remark that it is possible to improve Theorem~\ref{thm:count_sparse_ov} via the polynomial method (see Appendix~\ref{app:algo-from-approx-poly} for details). Again, we emphasize that our focus here is to provide direct applications of our general framework, with the hope that it could stimulate more applications of quantum communication protocols in the classical settings.

\subsubsection{Approximate Counting for $\textsf{Formula} \circ \SYM$ Circuits}

Finally, we apply our algorithm to approximately count solutions (i.e., satisfying assignments) to a class of circuits, for which no non-trivial algorithms were previously known.

A $\textsf{Formula} \circ \SYM$ circuit of size $m$ is a formula with $\{ \textsf{AND}, \textsf{OR},\textsf{NOT} \}$ basis on $m$ $\SYM$ gates\footnote{A $\SYM$ gate is a gate whose output only depends on the number of ones in the input.} at the bottom. Using the quantum query algorithm for $\FormulaEval$~\cite{ambainis2010any} and the split-and-list technique, we obtain the following deterministic approximate counting algorithm for $\textsf{Formula} \circ \SYM$ circuits:

\begin{theorem}\label{thm:count_formula_sym}
	For any constant $\epsilon > 0$, the number of solutions to a $\textsf{Formula} \circ \SYM$ circuit of size $m$ can be approximated deterministically within $\epsilon \cdot 2^n$ additive error in
	\[
	2^{O(n^{1/2} m^{1/4 + o(1)} \sqrt{\log n + \log m})}
	\]
	time. In particular, when $m = n^{2-\delta}$ for some $\delta > 0$, the running time is $2^{o(n)}$.
\end{theorem}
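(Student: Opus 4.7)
The plan is to reduce the counting problem to approximate counting under a $k$-party quantum communication protocol via split-and-list, and then build such a protocol from the optimal quantum query algorithm for formula evaluation. First I would partition the $n$ inputs of the given $\textsf{Formula}\circ\SYM$ circuit $\Phi$ into $k$ blocks of size $n/k$, enumerate all $2^{n/k}$ partial assignments in each block to form sets $X_1,\dotsc,X_k$ with $|X_i|=2^{n/k}$, and for every bottom $\SYM$ gate $g_j$ and every $y\in X_i$ precompute the integer $w_{i,j}(y)$ counting the inputs of $g_j$ that lie in block $i$ and are set to $1$ by $y$. Since a $\SYM$ gate depends only on the Hamming weight of its input, $g_j(y_1,\dotsc,y_k)=h_j\!\left(\sum_{i=1}^{k} w_{i,j}(y_i)\right)$ for a fixed $h_j:\{0,\dotsc,n\}\to\{0,1\}$, so the number of satisfying assignments of $\Phi$ equals $\sum_{(y_1,\dotsc,y_k)\in X_1\times\cdots\times X_k} f(y_1,\dotsc,y_k)$ with $f(y_1,\dotsc,y_k)=F\bigl(h_1(\textstyle\sum_i w_{i,1}(y_i)),\dotsc,h_m(\sum_i w_{i,m}(y_i))\bigr)$ and $F$ the top formula of size $m$.

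The next step is to give an efficient $k$-party quantum communication protocol for $f$. I would take the optimal quantum query algorithm for evaluating a size-$m$ formula~\cite{ambainis2010any}, which uses $Q=\sqrt{m}\cdot m^{o(1)}$ queries to its $m$ input bits $b_j=h_j(\sum_i w_{i,j}(y_i))$, and simulate it coherently: party $1$ runs the formula algorithm locally, and each oracle call to $b_j$ is answered by a round-robin message in which an accumulator register travels through parties $2,3,\dotsc,k$, each coherently adding its locally computable $w_{i,j}(y_i)$ into the register; party $k$ then computes $h_j$ of the accumulated sum onto a fresh output qubit and the message returns along the same path, uncomputing every intermediate register so that the overall action is a clean query oracle. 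Each query costs $O(k(\log n+\log m))$ qubits of communication, yielding a protocol $\mathcal{P}$ with $C(\mathcal{P})=O\bigl(\sqrt{m}\cdot m^{o(1)}\cdot k(\log n+\log m)\bigr)$ and small constant error.

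Finally, Theorem~\ref{thm:apx_count-general} converts $\mathcal{P}$ into a deterministic additive $\varepsilon\cdot 2^n$ approximation running in time $k\cdot 2^{n/k}\cdot 2^{O(C(\mathcal{P}))}$. Balancing the two exponents by choosing $k=\Theta\bigl(\sqrt{n/(m^{1/2+o(1)}(\log n+\log m))}\bigr)$ makes both $n/k$ and $C(\mathcal{P})$ equal to $\Theta\bigl(n^{1/2}m^{1/4+o(1)}\sqrt{\log n+\log m}\bigr)$, which gives the claimed runtime; when $m=n^{2-\delta}$ this is $n^{1-\delta/4+o(1)}=o(n)$, so the total runtime is $2^{o(n)}$. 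The main obstacle I expect is the careful coherent simulation of each formula-evaluation query by multi-party quantum communication: each oracle call must be implemented as a genuine reflection on the algorithm's workspace, which requires uncomputing all scratch data contributed by parties $2,\dotsc,k$, and one must verify that the resulting protocol satisfies the technical form required by the precise version of Theorem~\ref{thm:apx_count-general} (Corollary~\ref{cor:apx_count}); the round-robin scheme above is designed precisely to meet both conditions.
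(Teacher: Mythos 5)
Your proposal follows essentially the same approach as the paper: split-and-list partition of the $n$ inputs into $k$ blocks, reduction to a $k$-party counting problem, a quantum communication protocol obtained by simulating the formula-evaluation query algorithm of~\cite{ambainis2010any} with round-robin coherent queries and uncomputation (this is precisely the paper's Example~\ref{ex:simluate} and Theorem~\ref{thm:count_sym_split}), and finally balancing $2^{n/k}$ against $2^{O(C(\mathcal{P}))}$. The only cosmetic difference is that you re-derive the protocol inline rather than invoking Theorem~\ref{thm:count_sym_split} as a black box; also note that \cite{ambainis2010any} is only near-optimal ($m^{1/2+o(1)}$ queries), and the paper deliberately avoids the truly optimal $O(\sqrt{m})$ algorithm~\cite{Rei11a} because it uses too many qubits and is not computationally efficient, which matches your stated query bound anyway.
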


Previously, even no non-trivial deterministic approximate counting algorithms for $\AND \circ \SYM$ circuits were known. A recent line of works~\cite{GOWZ10,HKM12,ST17a}, culminating in~\cite{DST18}, construct a PRG for $\AND_m \circ \textsf{THR}$ circuits with seed length $\poly(\log m,\delta^{-1}) \cdot \log n$, using which one can obtain a quasi-polynomial time deterministic approximate counting algorithm for polynomial size $\AND \circ \textsf{THR}$ circuits. However, their PRG constructions rely on the fact that the solution set of an $\AND_m \circ \textsf{THR}$ circuit is a \emph{polytope}, while the solution set of an $\AND \circ \SYM$ circuit may not have such a nice geometric structure.

In fact, the only property we need for $\SYM$ gates is that they admit an efficient \emph{classical} $k$-party communication protocol when the inputs are divided to $k$ players (each player sends the contribution of her part). Our algorithm actually works for the following more general problem. 

\begin{problem}\label{pro:count_sym}
	Given $k$ sets of $n$ vectors $X_1, X_2, \ldots, X_k \subseteq \{0,\dotsc,r\}^{d}$ and $d$ functions $f_1, f_2, \ldots, f_d$ where each $f_i$ is from $[r]^k$ to $\{0, 1\}$,
	and a Boolean formula $\mathcal{F} : \{0, 1\}^d \to \{0, 1\}$ of $O(1)$ fan-in.
	Count the number of $k$-tuples $u_1 \in X_1, u_2 \in X_2, \ldots, u_k \in X_k$ such that
	$$
	\mathcal{F}(f_1(u_{1, 1}, u_{2, 1}, \ldots, u_{k, 1}),  f_2(u_{1, 2}, u_{2, 2}, \ldots, u_{k, 2}), \ldots, f_d(u_{1, d}, u_{2, d}, \ldots, u_{k, d})) = 1.
	$$
\end{problem}

\begin{theorem}\label{thm:count_sym_split}
	For any constant $\epsilon > 0$, the above problem can be solved deterministically in $n \cdot 2^{O(d^{1/2+o(1)} \cdot k (\log d + \log r) )}$ time, within $\varepsilon \cdot n^k$ additive error. 
\end{theorem}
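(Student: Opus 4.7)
The plan is to reduce Problem~\ref{pro:count_sym} to an instance of the multiparty quantum approximate counting framework by constructing a $k$-party quantum communication protocol of the right complexity and then invoking Theorem~\ref{thm:apx_count-general}.

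First I would set up the communication problem. For each $i$, let $\domainX_i = \{0,\dotsc,r\}^d$, and define the Boolean function
\[
g(u_1, \ldots, u_k) := \mathcal{F}\bigl(f_1(u_{1,1},\ldots,u_{k,1}),\,\ldots,\,f_d(u_{1,d},\ldots,u_{k,d})\bigr).
\]
The target count is exactly $\sum_{u_1 \in X_1,\ldots,u_k \in X_k} g(u_1,\ldots,u_k)$, so any additive $\epsilon$-approximation to this sum of the form given by Theorem~\ref{thm:apx_count-general} suffices. What remains is to exhibit a $k$-party quantum protocol for $g$ with small communication and small (say, constant) error.

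For the protocol, the plan is to use the quantum query algorithm for \FormulaEval\ from~\cite{ambainis2010any}, which evaluates any constant-fan-in Boolean formula on $d$ inputs using $d^{1/2+o(1)}$ quantum queries to those inputs. In our setting, the formula is $\mathcal{F}$ itself (of size $O(d)$ because it has $d$ leaves and $O(1)$ fan-in), and the $j$-th ``leaf value'' to be queried is $v_j := f_j(u_{1,j},\ldots,u_{k,j})$. To turn each such query into quantum communication, I would designate player $1$ as the simulator: to execute one query, player $1$ sends the index $j$ in superposition to each of the other $k-1$ players using $O(\log d)$ qubits each, every player $i \ne 1$ responds with the value $u_{i,j}$ in superposition using $O(\log r)$ qubits, player $1$ then evaluates the known function $f_j$ locally into an answer register, and finally all sent registers are uncomputed by running the communication in reverse so that coherence is preserved. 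This costs $O(k(\log d + \log r))$ qubits of communication per query, for a total of $C(\mathcal{P}) = O\bigl(d^{1/2+o(1)} \cdot k (\log d + \log r)\bigr)$.

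Plugging this protocol into Theorem~\ref{thm:apx_count-general} yields a deterministic algorithm that outputs an estimate of the count with additive error at most $\epsilon \cdot \prod_i |X_i| \le \epsilon \cdot n^k$, and whose running time is $(|X_1| + \cdots + |X_k|) \cdot 2^{O(C(\mathcal{P}))} = n \cdot 2^{O(d^{1/2+o(1)} \cdot k(\log d + \log r))}$, which is exactly the bound claimed. The only nontrivial step is the simulation in the third paragraph: one needs to check that the uncomputation leaves the players' registers disentangled from the communication registers so that the subsequent unitary steps of the formula evaluation algorithm remain valid, and one needs the protocol format produced to satisfy the mild technical condition required by Theorem~\ref{thm:apx_count-general} (inherited from Corollary~\ref{cor:apx_count}). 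Since the simulation is the standard query-to-communication reduction in the number-in-hand model and the error probability of the Ambainis algorithm is already a small constant (amplifiable cheaply), I expect both checks to go through essentially verbatim; this bookkeeping is where the main care is needed, but it is routine rather than conceptually hard.
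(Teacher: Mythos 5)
Your proposal matches the paper's proof essentially verbatim: it constructs the same $k$-party quantum protocol by simulating the Ambainis $\FormulaEval$ query algorithm (the paper spells this out in Example~\ref{ex:simluate}, with player $1$ broadcasting the query index, the other players responding with their coordinates, and a reverse pass to uncompute), obtaining $C(\mathcal{P}) = O(d^{1/2+o(1)}\cdot k(\log d + \log r))$, and then invokes Theorem~\ref{thm:apx_count-general}. The "mild technical condition" you flag is exactly the efficiently-constructible-unitaries requirement of Corollary~\ref{cor:apx_count}, which the paper's Example~\ref{ex:simluate} is designed to satisfy, so your bookkeeping concerns resolve exactly as you expect.
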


\subsection{Arthur-Merlin Communication Protocols and a New Approximate $\MaxIP$ Algorithm}

Our second connection is an algorithmic application of $\AMcc$ protocols. We first define $\AMcc$ protocols formally.

\begin{definition}
	An Arthur-Merlin communication protocol ($\AMcc$) $\Pi$ for a partial function $F : \domainX \times \domainY \to \{0,1,\bot\}$\footnote{$F(x,y) = \bot$ means $F(x,y)$ is undefined.} proceeds as follows:
	
	\begin{itemize}
		\item Alice holds input $x \in \domainX$ and Bob holds input $y \in \domainY$.
		\item Alice and Bob toss some public coins jointly and send the random string $r \in \{0,1\}^*$ to Merlin ($r$ is called the random challenge).
		\item Based on $x$, $y$ and the random challenge $r$, Merlin sends Alice and Bob a proof $\psi$, and Alice and Bob decide to accept or not independently and deterministically. We require the following conditions:
		
		\begin{itemize}
			\item If $F(x,y) = 1$, with probability $1 - \epsilon$ over the random challenge $r$, there is a proof $\psi$ from Merlin such that Alice and Bob both accept.
			\item If $F(x,y) = 0$, with probability $1 - \epsilon$ over the random challenge $r$, there is no proof $\psi$ from Merlin such that Alice and Bob both accept.
		\end{itemize}
	\end{itemize} 
	
	We call the parameter $\epsilon$ the error of the protocol $\Pi$. Moreover, we say the protocol is \emph{computationally efficient} if Alice and Bob's behavior can be computed in polynomial-time w.r.t. their input lengths.
\end{definition}

We show that for any function $F$, a low-complexity and computationally efficient $\AMcc$ protocol implies a faster algorithm for the corresponding $F\SATPAIR$ problem (defined below).

For a partial function $F : \domainX \times \domainY \to \{0,1,\bot\}$, where $\domainX$ and $\domainY$ are two sets, we define $F\SATPAIR_n$ as the problem that given two sets $A \subseteq \domainX$ and $B\subseteq \domainY$ of size $n$, distinguish between the following two cases: (1) There is an $(x,y) \in A \times B$ such that $F(x,y) = 1$. (2) For all $(x,y) \in A \times B$, $F(x,y) = 0$.

\begin{theorem}[Algorithms from $\AMcc$ protocols] \label{thm:alg_amcc}
	Let $F : \domainX \times \domainY \to \{0,1,\perp\}$ be a partial function. Suppose there is a computationally efficient $\AMcc$ protocol for $F$ with communication complexity $T$ and error $\epsilon$. Then for $n$ such that $2^{T} \le (\sqrt{\varepsilon}n)^{0.1}$, there is an $O\left( \epsilon n^2 \cdot \polylog(n)  + n \cdot 2^T\right)$ time randomized algorithm for $F\SATPAIR_n$.
\end{theorem}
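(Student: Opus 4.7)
The plan is to exploit the \emph{rectangular structure} of $\AMcc$ protocols: for a fixed random challenge $r$ and a fixed Merlin proof $\psi$, Alice's accept/reject decision depends only on $(x, r, \psi)$, and Bob's only on $(y, r, \psi)$. Thus for each $r$, the set of inputs $(x, y)$ admitting \emph{some} convincing proof is a union of at most $2^T$ combinatorial rectangles $A^r_\psi \times B^r_\psi$, where $A^r_\psi = \{x \in A : \text{Alice}(x, r, \psi) = \text{accept}\}$ and $B^r_\psi$ is defined analogously. Because the protocol is computationally efficient, for any given $r$ all these sets can be enumerated in $\tilde{O}(n \cdot 2^T)$ time.

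Building on this, I would sample $k = \Theta(\log n)$ independent random challenges $r_1, \dots, r_k$ and, for each $i$, form the ``candidate set'' $W_i = \bigcup_{\psi} A^{r_i}_\psi \times B^{r_i}_\psi$. Each pair $(x,y) \in A \times B$ is then assigned a score equal to the number of indices $i$ with $(x, y) \in W_i$. By completeness, the score of a YES-witness stochastically dominates $\text{Bin}(k, 1-\epsilon)$, so by Chernoff it exceeds $k/2$ with high probability; by soundness, the score of any fixed NO-pair is stochastically dominated by $\text{Bin}(k, \epsilon)$, and a Chernoff bound plus union bound over all $n^2$ pairs shows that no NO-pair reaches score $k/2$ with high probability. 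The algorithm outputs YES iff some pair has score $\ge k/2$. Moreover, the soundness bound yields $\E[|W_i|] \le 1 + \epsilon n^2$, so by Markov's inequality (resampling any misbehaving round) we may assume $|W_i| = O(\epsilon n^2)$ throughout.

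The main obstacle is to compute each $W_i$ within a budget of $\tilde{O}(n \cdot 2^T + \epsilon n^2 \polylog n)$, rather than the naive $\sum_\psi |A^{r_i}_\psi| \cdot |B^{r_i}_\psi|$ which can be as large as $n^2 \cdot 2^T$. Reformulating, $W_i$ is exactly the set of non-orthogonal pairs between two families of $n$ Boolean vectors in $\{0,1\}^{2^T}$, namely $f_x = (\text{Alice}(x, r_i, \psi))_\psi$ and $g_y = (\text{Bob}(y, r_i, \psi))_\psi$. Since $|W_i| = O(\epsilon n^2)$ and the hypothesis $2^T \le (\sqrt{\epsilon} n)^{0.1}$ keeps the ambient dimension polynomially smaller than $\sqrt{\epsilon n^2}$, a standard output-sensitive sparse-non-orthogonal-pair enumeration (e.g., bucket each vector by a uniformly sampled coordinate $\psi$ where it takes value $1$, report colliding pairs, and repeat $O(\log n)$ times for coverage) runs within the required time. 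Finally, enumerating $\bigcup_i W_i$ and tallying scores via hashing adds only a $\polylog(n)$ factor, yielding the overall $O(\epsilon n^2 \polylog n + n \cdot 2^T)$ bound claimed in the theorem.
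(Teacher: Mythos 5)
Your high-level reformulation — Alice/Bob accept-vectors of length $2^T$, inner product $>0$ iff some proof convinces both — is the same as the paper's, and the amplification idea (sample $\Theta(\log n)$ challenges, threshold the per-pair score) is a reasonable alternative to the paper's single-challenge-plus-union-bound analysis. But the plan then diverges into a step that I do not think you can carry out, and this is where the paper's proof uses a different and crucial idea you are missing.

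The gap is the ``output-sensitive sparse-non-orthogonal-pair enumeration.'' You need, for each challenge $r_i$, to compute the \emph{set} $W_i$ of pairs $(x,y)$ with $\langle f_x, g_y\rangle > 0$ in time $\tilde O(\epsilon n^2 + n\cdot 2^T)$. The hashing scheme you sketch (hash each vector to a uniformly random coordinate where it equals $1$, report collisions, repeat $O(\log n)$ times) does not achieve coverage: a pair with $\langle f_x,g_y\rangle = t$ and Hamming weights $s_x, s_y$ collides in one round with probability $t/(s_x s_y)$, which can be as small as $2^{-2T}$, so $O(\log n)$ repetitions miss most non-orthogonal pairs when $2^T$ is polynomial in $n$ (which the hypothesis $2^T \le (\sqrt\epsilon n)^{0.1}$ permits). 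More fundamentally, listing \emph{all} non-orthogonal pairs among two sets of $n$ Boolean vectors of dimension $2^T$ in near-output-size time is a genuinely hard problem (it generalizes $\OV$ detection in $n^{\Theta(1)}$ dimensions), and I don't see how to do it within your budget. A secondary, fixable issue: resampling ``misbehaving'' rounds until $|W_i| = O(\epsilon n^2)$ can loop forever on YES instances where many pairs accept; you would need to declare YES after too many failed resamples.

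The paper avoids individual-pair enumeration entirely by a grouping trick. It partitions $A$ and $B$ into $g = O(\sqrt\epsilon\,n)$ buckets of size $O(1/\sqrt\epsilon)$ each, so a bucket-cross-product has only $O(1/\epsilon)$ pairs and the union bound over one bucket pair costs only $O(1)$ error. It then replaces each bucket by the \emph{sum} of its accept-vectors, so the inner product of two bucket-sums counts the total number of (pair, proof) acceptances across the bucket pair and is $>0$ iff some pair in that bucket pair accepts. All $g^2$ bucket-pair inner products are computed at once by multiplying a $g\times 2^T$ matrix with a $2^T\times g$ matrix, and the hypothesis $2^T \le (\sqrt\epsilon n)^{0.1} = O(g^{0.1})$ is exactly what makes Coppersmith's rectangular matrix multiplication run in $g^2\polylog(g) = \epsilon n^2 \polylog(n)$ time. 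Repeating $O(\log n)$ times and taking majorities gives correctness. The point is that the paper never needs to know \emph{which} pairs accept inside a bucket pair, only whether the aggregate is positive — this is what lets it sidestep the enumeration problem that your approach runs into.
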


\subsubsection{A New Algorithm for Approximate $\MaxIP$}

The first application of Theorem~\ref{thm:alg_amcc} is a new algorithm for approximate Maximum Inner Product. We use $\MaxIP_{n,d}$ to denote the problem that given sets $A, B \subseteq \{0,1\}^d$ with size $n$, compute $\MAX(A,B) := \max_{(a,b) \in A \times B} \langle a \cdot b \rangle$.

To phrase this as an $F\SATPAIR$ problem, we first define the following gap inner product problem.

\begin{definition}[Multiplicative-Gap Inner Product]\label{defi:MGapIP}
	Consider the following problem, denoted as $\MGapIP_d$, Alice and Bob hold strings $x,y \in \{0,1\}^d$ respectively, and they are given an integer $\tau$. They want to distinguish between the following two cases: (Yes) $x \cdot y \ge 2\tau$; (No) $x \cdot y \le \tau$.
\end{definition}

Adapting the classical Goldwasser-Sisper $\AM$ protocol for approximating set size~\cite{GS89}, we can derive an efficient $\AMcc$ protocol for $\MGapIP_d$.

\begin{lemma}[$\AMcc$ protocol for $\MGapIP_d$]\label{lm:AMcc-MGapIP}
	There is an $\AMcc$ protocol for $\MGapIP_d$ with error $\epsilon$ and communication complexity
	\[
	\log \binom{d}{\le O(\log \epsilon^{-1})}.\footnote{$\binom{n}{\le m}$ denotes $\sum_{i=0}^{m} \binom{n}{i}$.}
	\]
\end{lemma}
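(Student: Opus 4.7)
The plan is to adapt the Goldwasser--Sipser set lower-bound protocol to the distributed setting. Let $S := \{ i \in [d] : x_i = y_i = 1 \}$, so that $\langle x, y \rangle = |S|$; the Yes/No cases become $|S| \ge 2\tau$ versus $|S| \le \tau$. Set $t := C \log \epsilon^{-1}$ for a large constant $C$ to be fixed by a Chernoff calculation.

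The core protocol I would use is as follows. Alice and Bob sample a set $R \subseteq [d]$ from public randomness by including each coordinate independently with probability $p := 3t/(4\tau)$. Merlin sees $R$ and responds with a set $T \subseteq R$ of size exactly $t$. Alice accepts iff $T \subseteq R$ and $x_i = 1$ for every $i \in T$; Bob's check is the analogous one using $y_i$. Thus both players accept iff $T \subseteq R \cap S$, which an honest Merlin can arrange iff $|R \cap S| \ge t$. Computing expectations, $\E[|R \cap S|] \ge 3t/2$ in the Yes case and $\E[|R \cap S|] \le 3t/4$ in the No case, so the threshold $t$ sits at a constant multiplicative gap from both means, and a Chernoff bound pushes both error probabilities below $\epsilon$ once $C$ is large enough. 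Merlin's proof is a $t$-subset of $[d]$, describable in $\log \binom{d}{t} \le \log \binom{d}{\le O(\log \epsilon^{-1})}$ bits, matching the bound claimed in the lemma.

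A small subtlety is the degenerate regime $\tau \lesssim t$, where $p \ge 1$ and the subsampling is vacuous. Here the simpler protocol already works: Merlin directly sends a $T \subseteq [d]$ with $|T| = \tau + 1$ and Alice and Bob check $x_i = 1$ and $y_i = 1$ respectively on $T$, so both accept iff $T \subseteq S$, which is possible only when $|S| \ge \tau + 1$. The proof length $\log \binom{d}{\tau+1}$ stays within the target budget because $\tau = O(\log \epsilon^{-1})$ in this regime; no randomness is even needed.

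The main obstacle I anticipate is accounting for the public randomness cleanly: a fully independent $R$ costs $\Omega(d)$ bits of public coins, which is fine if the random challenge is not counted against the $\AMcc$ complexity but otherwise too large. I would therefore sample $R$ from a $\Theta(t)$-wise independent family, so that a standard moment-based concentration bound still preserves the $\exp(-\Omega(t))$ tail used above, while the public-coin string shrinks to $O(t \log d)$ bits, well within the stated budget. Combining this with the verifier analysis above yields the desired error $\epsilon$ and Merlin-proof length $\log \binom{d}{\le O(\log \epsilon^{-1})}$.
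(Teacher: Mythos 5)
Your proof is correct and reaches the same bound, but by a different sampling scheme than the paper's. The paper draws i.i.d.\ Poisson weights $p_i \sim \mathrm{Pois}(k/\tau)$ on $[d]$ and has Merlin send a set $S \subseteq X\cap Y$ with $\sum_{i\in S} p_i \ge 1.6k$; the bound $|S| = O(k)$ falls out because every nonzero $p_i$ is at least $1$, and the additive closure of Poisson (which the paper isolates as a lemma) makes the concentration analysis especially clean. You instead Bernoulli-subsample $R$ with $p = 3t/(4\tau)$ and have Merlin exhibit a $t$-element subset of $R \cap S$. Both approaches exploit the same structural insight that beats the $O(\log d \cdot \log(1/\epsilon))$ cost of naive Goldwasser--Sipser repetition: a single ``global'' sparsification of $[d]$ lets Merlin's certificate be one small subset of $[d]$, encodable in $\log\binom{d}{\le O(\log\epsilon^{-1})}$ bits. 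Your version is arguably more elementary and has the nicety of a fixed witness size; the paper's Poisson version has a slicker tail analysis. Two small notes. First, your worry about the public-coin length is unnecessary in the paper's $\AMcc$ model, where only Merlin's proof is counted toward the communication complexity (the paper's own protocol uses unbounded public coins for the Poisson draws), so the $\Theta(t)$-wise independence replacement, while harmless, is not needed. Second, for soundness you should make the format check explicit: Alice and Bob must reject any $T$ with $|T| \ne t$, else Merlin could send $T = \emptyset$ and have both players vacuously accept; this is implicit in your ``size exactly $t$'' phrasing but deserves to be stated as part of the verifier's predicate.
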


Applying Theorem~\ref{thm:alg_amcc}, the following algorithm for approximating $\MaxIP$ follows directly, matching the previous best algorithm in~\cite{Che18}.

\begin{corollary}\label{cor:algo-apx-max-ip}
	There is an algorithm for computing a $2$-approximation to $\MaxIP_{n,c \log n}$, which runs in $n^{2 - 1/O(\log c)}$ time.
\end{corollary}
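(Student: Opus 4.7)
}

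The plan is to reduce the $2$-approximation of $\MaxIP_{n, c\log n}$ to a $\polylog(n)$ number of $\MGapIP_{c\log n}\SATPAIR_n$ instances and then invoke Theorem~\ref{thm:alg_amcc} with the $\AMcc$ protocol of Lemma~\ref{lm:AMcc-MGapIP}. I take the thresholds $\tau_j := 2^{j-1}$ for $j = 1, 2, \ldots, \lceil \log(c\log n) \rceil + 1$, run $\MGapIP_{c\log n}\SATPAIR_n$ at each $\tau_j$, and let $j^\ast$ be the largest $j$ with a ``Yes'' output. A short promise analysis then yields $\tau_{j^\ast} < \MAX(A,B) < 4\tau_{j^\ast}$: ``Yes'' at $\tau_{j^\ast}$ rules out the ``No'' case $\MAX(A,B) \le \tau_{j^\ast}$, and ``No'' at $\tau_{j^\ast + 1} = 2\tau_{j^\ast}$ rules out the ``Yes'' case $\MAX(A,B) \ge 4\tau_{j^\ast}$. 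Hence $2\tau_{j^\ast}$ is a $2$-approximation of $\MAX(A,B)$.

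For each single threshold, combining Lemma~\ref{lm:AMcc-MGapIP} at error $\varepsilon$ with Theorem~\ref{thm:alg_amcc} yields a subroutine running in
\[
O\bigl(\varepsilon n^2 \polylog(n) + n \cdot 2^T\bigr), \qquad T = \log \binom{c\log n}{\le O(\log \varepsilon^{-1})}.
\]
Independent repetitions boost each subroutine's error to $1/\polylog(n)$; summing over the $\polylog(n)$ thresholds then only costs an extra $\polylog(n)$ factor, which is absorbed into the first term.

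The core step, and really the only nontrivial one, is parameter balancing: set $\varepsilon = n^{-\delta}$ with $\delta = 1/(C \log c)$ for a large constant $C$ to be chosen in terms of the absolute constant $A$ hidden in Lemma~\ref{lm:AMcc-MGapIP}. Using $\binom{N}{\le k} \le (eN/k)^k$ with $N = c\log n$ and $k = A\delta \log n$,
\[
T \le A\delta \log n \cdot \log\!\left(\frac{ec}{A\delta}\right) = \frac{A}{C}\log n \cdot \Bigl(1 + O\bigl(\tfrac{\log\log c}{\log c}\bigr)\Bigr),
\]
which is at most $0.05 \log n$ once $C$ is chosen large enough in terms of $A$. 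This single bound simultaneously ensures that $n \cdot 2^T \le n^{1.05} \ll \varepsilon n^2 = n^{2-\delta}$, and verifies the technical hypothesis $2^T \le (\sqrt{\varepsilon}\, n)^{0.1} = n^{0.1(1-\delta/2)}$ required by Theorem~\ref{thm:alg_amcc}. The overall running time is therefore $n^{2-\delta} \polylog(n) = n^{2 - 1/O(\log c)}$, matching the bound of~\cite{Che18}. The main obstacle is just pinning down the constants so that the speedup $\delta = \Theta(1/\log c)$ coexists with the technical hypothesis on $T$; once that is done, the reduction and the protocol import are routine.
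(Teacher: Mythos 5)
Your proposal is correct and follows the same approach as the paper, which gives only a one-sentence proof (``By Theorem~\ref{thm:alg_amcc} and the above lemma, Corollary~\ref{cor:algo-apx-max-ip} follows from a binary search over $\tau$.''). You instantiate the search as a scan over doubling thresholds $\tau_j = 2^{j-1}$ rather than an adaptive binary search, which is a sensible and slightly cleaner way to handle the promise-gap structure, and you correctly carry out the parameter balancing (choosing $\varepsilon = n^{-1/(C\log c)}$ so that both the running-time bound and the technical hypothesis $2^T \le (\sqrt{\varepsilon}n)^{0.1}$ of Theorem~\ref{thm:alg_amcc} hold); the intermediate big-$O$ bookkeeping for $T$ is a little loose in how it absorbs the $\log(eC/A)/\log c$ term, but the conclusion $T \le 0.05\log n$ for a large enough absolute constant $C$ is right.
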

\begin{remark}
	The constant $2$ in Corollary \ref{cor:algo-apx-max-ip} can be replaced by any other constant $\kappa > 1$.
\end{remark}

We remark here that a direct application of the Goldwasser-Sisper protocol and parallel repetition leads to a communication protocol with communication complexity $O(\log d \log \epsilon^{-1})$, which is slightly worse than Lemma~\ref{lm:AMcc-MGapIP}. In particular, such a protocol only gives an algorithm with running time $n^{2 - 1 /O(\log d)}$, which is worse than $n^{2 - 1/O(\log c)}$ when $c \ll d = c \log n$. In order to get the improved complexity in Lemma~\ref{lm:AMcc-MGapIP}, we make use of a clever sampling scheme using Poisson distributions, see Section~\ref{sec:amcc-max-ip} for details.

\subsubsection{Evidences that Longest Common Subsequence and Edit Distance do not Have Fast $\AMcc$ Protocols} 

It has been a long-standing open problem in communication complexity to prove an $\omega(\log n)$ $\AMcc$ lower bound for any explicit function~\cite{BFS86,GoosPW16,GPW18}\textemdash it is consistent with our current knowledge that all known natural communication problems have $O(\log n)$ $\AMcc$ protocols. 

We consider two natural communication problems here, $\DistLCS_{d}$ and $\DistEdit_d$, in which Alice and Bob hold strings $x,y \in \{0,1\}^d$ respectively, and are given an integer $\tau$. Their goal is to decide whether $\LCS(x,y) \ge \tau$ ($\Edit(x,y) \ge \tau$).

Our Theorem~\ref{thm:alg_amcc} shows that if $\DistLCS$ or $\DistEdit$ admit low-complexity and computationally efficient $\AMcc$ protocols, it would imply non-trivial algorithms for the corresponding $F\SATPAIR$ problem. By a known reduction in~\cite{AHVW16}, that would, in turn, implies non-trivial algorithms for Formula-$\SAT$\footnote{Formula-$\SAT$ is the problem that deciding whether a given formula is satisfiable.}\textemdash much faster than the current state-of-the-art~\cite{Tal15}! Therefore, at least for these two problems, constructing low-complexity $\AMcc$ protocol could be hard, which may also be viewed as an evidence that they do not have efficient $\AMcc$ protocols.

\begin{theorem}\label{thm:consequence-LCS}
	If $\DistLCS_d$ admits computationally efficient $\AMcc$ protocols with complexity $\polylog(d)$, then Formula-$\SAT$ of polynomial-size formulas admits an $2^{n - n^{1 - \delta}}$ time algorithm for any constant $\delta > 0$. The same holds for $\DistEdit$ in place of $\DistLCS$.
\end{theorem}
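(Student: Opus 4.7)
The plan is to compose a known reduction of~\cite{AHVW16} from Formula-$\SAT$ to $\DistLCS\SATPAIR$ (and the analogous one to $\DistEdit\SATPAIR$) with Theorem~\ref{thm:alg_amcc}. Let $\nu$ be the number of variables of a polynomial-size formula $\mathcal{F}$. First, I would invoke the AHVW reduction, which splits the variables into two halves and, in $2^{\nu/2}\cdot\poly(\nu)$ time, produces an instance of $\DistLCS\SATPAIR_{N}$ with $N = 2^{\nu/2}$ strings of length $d = \poly(\nu)$ on each side whose answer decides the satisfiability of $\mathcal{F}$.

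Next, starting from the hypothesized computationally efficient $\AMcc$ protocol for $\DistLCS_d$ of communication $\polylog(d) = \polylog(\nu)$ and some constant error, I would use standard parallel-repetition error reduction to obtain, for every $\epsilon \in (0,1)$, a computationally efficient $\AMcc$ protocol of communication $T = O(\polylog(\nu)\cdot\log(1/\epsilon))$.

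Given a target $\delta > 0$, set $\epsilon = 2^{-\nu^{1-\delta/2}}$, so that $T = \nu^{1-\delta/2}\,\polylog(\nu)$. Since $\log(\sqrt{\epsilon}\,N) = \nu/2 - \nu^{1-\delta/2}/2 = \Theta(\nu)$ dominates $T = o(\nu)$, the precondition $2^T \le (\sqrt{\epsilon}\,N)^{0.1}$ of Theorem~\ref{thm:alg_amcc} is satisfied. Applying that theorem then yields an algorithm for $\DistLCS\SATPAIR_{N}$, and hence for Formula-$\SAT$ on $\mathcal{F}$, running in time
\begin{align*}
O\!\left(\epsilon \cdot N^2 \cdot \polylog(N) + N \cdot 2^T\right)
&= 2^{\nu - \nu^{1-\delta/2}}\cdot\poly(\nu) \;+\; 2^{\nu/2 + \nu^{1-\delta/2}\polylog(\nu)} \\
&\le 2^{\nu - \nu^{1-\delta}}
\end{align*}
for all sufficiently large $\nu$, because $\nu^{1-\delta/2}$ exceeds $\nu^{1-\delta}$ by a polynomial factor while the second summand has exponent only $\nu/2 + o(\nu)$. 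The same calculation applies verbatim with $\DistEdit$ in place of $\DistLCS$.

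The step that will require the most care is the parameter tuning: $\epsilon$ must be chosen small enough that $\epsilon \cdot N^2$ already lies below the target $2^{\nu - \nu^{1-\delta}}$, yet large enough that the boosted $T = O(\polylog(\nu)\cdot\log(1/\epsilon))$ stays well below $\log N = \nu/2$ (otherwise the $N\cdot 2^T$ term would dominate). The $\polylog(d)$ hypothesis on the base protocol is exactly what lets both requirements be satisfied simultaneously for every constant $\delta > 0$; a mere $d^{o(1)}$ bound on the base communication would not suffice.
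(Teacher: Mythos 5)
Your proof is correct and follows essentially the same route as the paper: invoke the AHVW reduction (Theorem~\ref{thm:lcs_red}) to produce a $\DistLCS\SATPAIR_{N}$ instance with $N = 2^{n/2}$, boost the $\AMcc$ protocol's error by repetition, and apply Theorem~\ref{thm:alg_amcc} with $\epsilon$ chosen so that $\epsilon N^2$ dominates and lands at $2^{n - n^{1-\delta}}$. The paper merely packages the parameter tuning into the intermediate Corollary~\ref{thm:alg_eff_am}; your version inlines that calculation with an identical choice of $\epsilon = 2^{-\Theta(\log N)^{1-\delta/2}}$.
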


The state-of-the-art algorithm for Formula-$\SAT$ runs in $o(2^n)$ time only when the formula size is smaller than $n^3$~\cite{Tal15}. It is even purposed as a hypothesis that no $2^{n} / n^{\omega(1)}$ time algorithm exists for $n^{3+\Omega(1)}$-size Formula-$\SAT$ in~\cite{AB18}. Therefore, our results imply that if $\DistLCS$ or $\DistEdit$ admits fast (computationally efficient) $\AMcc$ protocols, then that would refute the hypothesis in~\cite{AB18}:

\begin{corollary}\label{cor:condtional-lowb-for-AM}
	Under the following hypothesis\footnote{which is much weaker than the hypothesis in~\cite{AB18}}, $\DistLCS_d$ and $\DistEdit_d$ do not admit computationally efficient $\AMcc$ protocols with complexity $\polylog(d)$:
	
	\begin{itemize}
		\item There is a constant $\delta > 0$ such that Formula-$\SAT$ of polynomial-size formulas requires $2^{n - n^{1-\delta}}$ time.
	\end{itemize}
	
\end{corollary}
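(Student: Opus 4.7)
The plan is to derive this corollary as a direct contrapositive of Theorem~\ref{thm:consequence-LCS}, which has already been stated. The key logical observation is that Theorem~\ref{thm:consequence-LCS} produces a Formula-$\SAT$ algorithm with running time $2^{n - n^{1-\delta}}$ for \emph{every} constant $\delta > 0$, so to contradict it we only need to assume Formula-$\SAT$ is hard for \emph{some} specific $\delta > 0$, which is precisely the hypothesis listed in the corollary.

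Concretely, I would argue as follows. Suppose for contradiction that $\DistLCS_d$ admits a computationally efficient $\AMcc$ protocol with complexity $\polylog(d)$. Let $\delta_0 > 0$ be the particular constant supplied by the hypothesis, so that any algorithm for polynomial-size Formula-$\SAT$ requires time at least $2^{n - n^{1 - \delta_0}}$. Invoking Theorem~\ref{thm:consequence-LCS} with the choice $\delta = \delta_0$ produces an algorithm for polynomial-size Formula-$\SAT$ running in time $2^{n - n^{1 - \delta_0}}$, contradicting the hypothesis. Hence no such protocol for $\DistLCS_d$ can exist. The identical argument, again invoking the $\DistEdit$ half of Theorem~\ref{thm:consequence-LCS}, rules out a computationally efficient $\AMcc$ protocol of complexity $\polylog(d)$ for $\DistEdit_d$.

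There is essentially no obstacle here beyond bookkeeping: the entire content lives in Theorem~\ref{thm:consequence-LCS}, and the corollary is simply its contrapositive made explicit. The only point that deserves a brief sentence in the written proof is the observation that Theorem~\ref{thm:consequence-LCS} quantifies universally over $\delta$, whereas the hypothesis only asserts hardness for a single $\delta$; this asymmetry is exactly what makes the hypothesis ``much weaker'' than the Abboud--Bringmann conjecture, as noted in the footnote of the corollary.
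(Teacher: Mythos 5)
Your proposal is correct and matches the paper's (implicit) reasoning: the corollary is indeed just the contrapositive of Theorem~\ref{thm:consequence-LCS}, and the paper treats it as immediate without giving a separate proof. Your remark about the quantifier asymmetry (universal over $\delta$ in the theorem versus existential in the hypothesis) is exactly the right point to highlight.
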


In fact, in Appendix~\ref{app:PH-protocols}, we show that the above corollary can be generalized to hold for computationally efficient $\PHcc$ protocols (see Appendix~\ref{app:PH-protocols} for a formal definition). Formally, we have:

\begin{theorem}\label{theo:condtional-lowb-for-PH}
	Under the same hypothesis as in Corollary~\ref{cor:condtional-lowb-for-AM}, $\DistLCS_d$ and $\DistEdit_d$ do not admit computationally efficient $\PHcc$ protocols with complexity $\polylog(d)$.
\end{theorem}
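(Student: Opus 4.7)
The strategy is to establish a $\PHcc$ analogue of Theorem~\ref{thm:alg_amcc}, namely: if $F$ admits a computationally efficient $\Sigma_k^{cc}$ protocol with complexity $T$ and error $\epsilon$, then there is a randomized algorithm for $F\SATPAIR_n$ running in time $\widetilde{O}(\epsilon n^2 + n \cdot 2^{O(kT)})$, under a suitable assumption on $n$ vs.\ $T$. Once this algorithmic statement is in hand, Theorem~\ref{theo:condtional-lowb-for-PH} follows by plugging $T = \polylog(d)$ and $k = O(1)$ and running the $\cite{AHVW16}$ reduction from Formula-$\SAT$ to $\DistLCS\SATPAIR$ (and $\DistEdit\SATPAIR$) exactly as in the proof of Corollary~\ref{cor:condtional-lowb-for-AM}: a subquadratic $\SATPAIR$ algorithm for $\DistLCS_d$ or $\DistEdit_d$ yields a $2^{n - n^{1-\delta}}$-time Formula-$\SAT$ algorithm, contradicting the hypothesis.

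I would prove the $\PHcc$ algorithmic statement by induction on the number of alternations $k$. The base case $k=1$ is covered by Theorem~\ref{thm:alg_amcc} (interpreting $\MAcc$/$\AMcc$ as $\Sigma_1^{cc}$). For the inductive step, write the $\Sigma_k^{cc}$ protocol as
\[
F(x,y) = 1 \iff \exists\, \psi_1 \in \{0,1\}^T : F_{\psi_1}(x,y) = 1,
\]
where for each fixed $\psi_1$ the predicate $F_{\psi_1}$ admits a $\Pi_{k-1}^{cc}$ protocol of complexity $T$. We enumerate all $2^T$ possible first messages $\psi_1$ and, for each, solve $F_{\psi_1}\SATPAIR$. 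Since $\neg F_{\psi_1}$ has a $\Sigma_{k-1}^{cc}$ protocol of complexity $T$, the inductive hypothesis gives a fast algorithm for $(\neg F_{\psi_1})\SATPAIR$; we strengthen the induction to output (with high probability) an implicit witness list of all pairs $(x,y)\in A\times B$ that are Yes-instances of $\neg F_{\psi_1}$, so that we can identify the complement and test whether any surviving pair witnesses $F_{\psi_1}$. To keep the dominant $\epsilon n^2$ cost from blowing up by $2^T$ per level of the recursion, I would amortize the sampling and randomness across the $2^T$ choices of $\psi_1$ (reusing Arthur's coins and the sketch vectors underlying Theorem~\ref{thm:alg_amcc}), so that only the $n\cdot 2^T$ low-order term is multiplied by $2^T$ at each level, yielding the clean recurrence whose solution is $\widetilde O(\epsilon n^2 + n\cdot 2^{O(kT)})$.

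Finally, to package this for Theorem~\ref{theo:condtional-lowb-for-PH}: assume $\DistLCS_d$ (resp.\ $\DistEdit_d$) admits a computationally efficient $\PHcc$ protocol of complexity $\polylog(d)$, hence $\Sigma_k^{cc}$ for some constant $k$ and $T = \polylog(d)$. Apply the generalized algorithm with $d = n^{o(1)}$ and $\epsilon = 1/n^{\Omega(1)}$ (as required by the \cite{AHVW16} reduction) to obtain an $n^{2-\Omega(1)}$-time algorithm for the corresponding $\SATPAIR$ problem; feed this into the \cite{AHVW16} self-reduction to get a $2^{n-n^{1-\delta}}$-time algorithm for polynomial-size Formula-$\SAT$, contradicting the hypothesis.

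The main obstacle is the inductive step for $\Pi$-type sub-problems: naively reducing $\Pi_{k-1}^{cc}\SATPAIR$ to $\Sigma_{k-1}^{cc}\SATPAIR$ via complementation confuses "some pair satisfies a universal statement" with "not all pairs satisfy an existential statement," and a naive enumeration multiplies the dominant $\epsilon n^2$ term by $2^T$ per level, which would defeat the entire argument. Resolving this requires either the witness-listing strengthening of the induction described above, or a more direct interpretation of a $\PHcc$ protocol as an alternating probabilistic low-rank decomposition of the communication matrix for which $\SATPAIR$ can be read off by a single pass over $2^{O(kT)}$ sketches — either route is plausible, and both share the same asymptotic cost accounting.
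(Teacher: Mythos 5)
Your inductive approach has a genuine gap, and the difficulties you flag at the end are not just technicalities — they are fatal to the proposal as written. The paper does not do induction on alternation depth at all. Instead it observes (Definition~\ref{defi:PHcc-protocols}, Theorem~\ref{theo:algo-from-PHcc-protocols}) that a $\PHcc$ protocol with $2k$ provers and total communication $T$ expresses $M_F(x,y)$ as a fixed depth-$2k$, fan-in-$\le 2^T$ $\textsf{AC}^0$ formula applied to the $2^T$-bit vector $z$ whose $z_\psi$ entry is $A(x,\psi)\wedge B(y,\psi)$. The Razborov--Smolensky polynomial method then gives an $\epsilon$-error probabilistic $\mathbb{F}_2$-polynomial of degree $D=O\bigl((2T)^{2k}\log(1/\epsilon)\bigr)$ for this circuit; expanding monomials and factoring each $z_\psi$ into its Alice part and Bob part yields an $\epsilon$-probabilistic rank bound $\binom{2^T}{\le D}$ for $M_F$ with an efficiently computable decomposition, exactly the object Proposition~\ref{prop:AWY-prob-rank} and Remark~\ref{rem:prob-rank-to-algo} plug into the fast rectangular matrix multiplication scheme of Theorem~\ref{theo:AWY15-revisit}. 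The whole quantifier tower is handled in one shot; there is no recursion over $\psi_1$, no complementation issue, and no per-level cost inflation.

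Contrast this with your recursion. First, the ``witness listing'' strengthening is not an $F\SATPAIR$-type statement: in the worst case the set of Yes-pairs for $\neg F_{\psi_1}$ has size $\Theta(n^2)$, so merely writing it down (even implicitly, if you later need to iterate over it) destroys subquadratic time. Second, the complementation mismatch you note is real and not a side issue: $\SATPAIR$ for $\neg F_{\psi_1}$ distinguishes ``$\exists (x,y)$ in the negation'' from ``$\forall (x,y)$ in the negation,'' while you need ``$\exists (x,y)$ with $F_{\psi_1}(x,y)=1$.'' These are different decision problems, and the gap-promise structure of $\SATPAIR$ does not transfer under complementation of the inner predicate. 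Third, the claim that amortizing Arthur's coins across the $2^T$ choices of $\psi_1$ makes the $\epsilon n^2$ term level-independent is asserted, not argued, and I do not see how to make it work: each branch $\psi_1$ gives a different function $F_{\psi_1}$ with a different communication matrix, and the probabilistic-rank sketches needed for one branch are not those needed for another. Your closing suggestion — viewing the protocol as an ``alternating probabilistic low-rank decomposition'' read off in a single pass — is essentially the right instinct and is precisely what the polynomial-method argument formalizes, but you would need to develop it into the $\textsf{AC}^0$-to-probabilistic-polynomial reduction above; as written, the proposal does not close the gap.

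One further quantitative point: because the degree of the probabilistic polynomial grows like $(O(T))^{2k}$, the final bound requires $\binom{2^T}{\le 10(2T)^{2k}\log(1/\epsilon)} \le n^{0.1}$, which for $T=\polylog(n)$ and constant $k$ forces $\epsilon = 2^{-n^{1-\delta/2}}$ (not $\epsilon = 1/n^{\Omega(1)}$ as you propose). With $\epsilon$ merely inverse-polynomial the algorithm would only shave a $\polylog$ factor off the quadratic time, which is far from the $2^{n-n^{1-\delta}}$-time Formula-$\SAT$ algorithm needed to contradict the hypothesis.
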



\subsection{Related Works}

\subsubsection{Communication Protocols and Fine-Grained Complexity}


Recently, since the breakthrough work of~\cite{ARW17}, communication protocols have been applied to \emph{fine-grained complexity}, and several tight conditional hardness results are proved for many fundamental approximate problems in $\PTIME$~\cite{ARW17,CLM18,AR18,Che18,CGLRR18Meets,CW19,CP19}.

Among these works, the most related one is~\cite{Che18}, in which the author also makes use of the $\BQP^\cc$ protocol for $\SETDISJ$ for a different purpose. In~\cite{Che18}, the $\BQP^\cc$ protocol is used to established a \emph{reduction} from $\OV$ to approximate $\{-1,1\}\text{-}\MaxIP$\footnote{a variant of $\MaxIP$ with vectors in $\{-1,1\}^d$ instead of $\{0,1\}^d$}, thereby showing the $\SETH$-hardness of approximating $\{-1,1\}\text{-}\MaxIP$.
On the other hand, in this work we use $\BQP^\cc$ protocols directly for algorithmic purposes.

\subsubsection{Other Algorithmic Applications of Approximate Rank}
Alon studies the approximate rank of the identity matrix $I_n$ in \cite{alon2009perturbed}.
It is shown that it is at least $\Omega\left(\frac{\log n}{\varepsilon^2 \log(1 / \varepsilon)}\right)$ and at most $O\left(\frac{\log n} {\varepsilon^2}\right)$.
Built upon this result, several applications in geometry, coding theory, extremal finite set theory and the study of sample spaces supporting nearly independent random variables are derived.
The lower bound also has applications in combinatorial geometry and in the study of locally correctable codes over real and complex numbers, as shown in \cite{barak2011rank}.
In \cite{alon2013approximate, alon2014cover}, several bounds on approximate rank are derived, together with applications of approximate rank in approximating Nash Equilibria, approximating densest bipartite subgraph and covering convex bodies. 

\section{Preliminaries}
\subsection{Fast Rectangular Matrix Multiplication}

Similar to previous algorithms using the polynomial method (see, e.g., \cite{Wil14b,AW15,AWY15}), our algorithms also make use of algorithms for fast rectangular matrix multiplication. 

\begin{theorem}[\cite{coppersmith1982rapid, gall2018improved}]\label{theo:fast-matrix-mult-polylog}
	There is an $N^{2} \cdot \polylog(N)$ time algorithm for multiplying two matrices $A$ and $B$ with size $N \times N^{\alpha}$ and $N^{\alpha} \times N$, where $\alpha > 0.172$.
\end{theorem}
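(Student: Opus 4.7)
The plan is to prove Theorem~\ref{theo:fast-matrix-mult-polylog} by appealing to the tensor-rank machinery underlying fast matrix multiplication, in particular the theory of \emph{rectangular matrix multiplication exponents}. Define $\omega(1,\kappa,1)$ to be the infimum of constants $c$ such that the product of an $N \times N^\kappa$ and an $N^\kappa \times N$ matrix can be computed using $O(N^c)$ arithmetic operations. The content of the theorem is then that $\omega(1,\kappa,1)=2$ for some $\kappa > 0.172$, and moreover that the hidden sub-polynomial factor can be replaced by $\polylog(N)$.

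First, I would recall Coppersmith's construction from \cite{coppersmith1982rapid}: take a small algebraic object (in Coppersmith's case a sparse trilinear form with low border rank, essentially an instance of the Coppersmith--Winograd tensor), raise it to the $n$-th tensor power, and apply the asymptotic sum inequality together with the laser method to decompose the resulting tensor into many disjoint rectangular matrix multiplication tensors. By tuning parameters so that the arithmetic cost is balanced against the number of extracted blocks, Coppersmith obtains $\omega(1,\kappa,1)=2$ for all $\kappa \le 0.294$, with running time $N^{2+o(1)}$.

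Second, to replace the $N^{o(1)}$ overhead by $\polylog(N)$, I would follow the quantitative refinements in \cite{gall2018improved}. The $N^{o(1)}$ factor arises from two sources: (i) the border-rank approximation requires introducing an auxiliary variable and interpolating, contributing a factor polynomial in the tensor-power exponent $n$; (ii) the combinatorial enumeration of ``compatible'' index triples in the laser method contributes a factor that can be polynomial in $n$ in the worst case. By choosing a slightly suboptimal threshold such as $\kappa = 0.172$ (rather than pushing all the way to Coppersmith's boundary of $0.294$), one has sufficient slack in the optimization that both overheads can be absorbed into $\polylog(N)$ rather than $N^{o(1)}$, essentially because one can fix $n = O(\log N)$ instead of letting $n$ grow as a slowly increasing function of $N$.

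The main obstacle in carrying this out rigorously is the careful book-keeping of the overhead in the laser method. One must track how the tensor-power parameter $n$ needed to approximate the $N \times N^\kappa$ by $N^\kappa \times N$ target matches against the combinatorial cost, and verify that after balancing, the total polylog factor stays bounded uniformly in $N$. I expect this to be where the bulk of the technical effort lies, with the underlying algebraic ingredients (border rank, the asymptotic sum inequality, and the laser method) being used essentially off-the-shelf from the cited references.
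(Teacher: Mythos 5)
This statement is not proved in the paper at all; it is imported as a black box from \cite{coppersmith1982rapid, gall2018improved}, and the rest of the paper simply invokes it. So there is no "paper proof" to compare against, and the relevant question is whether your sketch accurately reconstructs the cited result. It does not, for two reasons.

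First, the attribution is anachronistic. Coppersmith's 1982 paper predates the laser method (Strassen, 1986--87) and the Coppersmith--Winograd tensor (1987) by several years; its construction uses neither. It is a direct recursive tensor construction that already yields a bound of the form $O(N^{2}\log^{2}N)$ for multiplying $N \times N^{\alpha}$ by $N^{\alpha} \times N$ matrices with $\alpha$ around $0.172$. The value $0.294$ you mention is from Coppersmith's later 1997 paper (``Rectangular matrix multiplication revisited''), which \emph{does} use CW-style machinery but only gives $N^{2+o(1)}$, not $N^{2}\polylog(N)$. Conflating the two papers misrepresents which result gives which bound.

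Second, and more substantively, the mechanism you propose for tightening $N^{o(1)}$ to $\polylog(N)$ --- ``choose a suboptimal $\alpha$ so there is slack in the laser-method optimization and fix $n = O(\log N)$'' --- is not how the polylogarithmic bound is obtained, and it is not clear such a slack argument would actually work: the $N^{o(1)}$ losses in the laser method (from symmetrization, zeroing out, and border-rank degeneration) do not obviously collapse to $\polylog(N)$ merely by backing off the exponent. The correct route to the statement as written is simply to cite Coppersmith 1982 directly, where the $\polylog(N)$ overhead is explicit from the start; the \cite{gall2018improved} citation is there only to indicate that the threshold $\alpha$ has since been improved, not as a source for the polylog refinement. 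If you wanted a self-contained proof you would reproduce Coppersmith's 1982 argument, not the laser-method pipeline.
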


\subsection{Random Variables and Poisson Distributions}

Throughout the paper, we use $X \simeq Y$ to mean that $X$ and $Y$ have the same distribution.
We use $X \succeq Y$ to denote stochastic dominance, i.e., $X \succeq Y$ iff for any $t \in \mathbb{R}$,  $\Pr[X \ge t] \ge \Pr[Y \ge t]$.

We use $\mathrm{Pois}(\lambda)$ to denote a Poisson distribution with parameter $\lambda$.
We will need the following two facts about Poisson distributions in the paper.
\begin{lemma}
	Suppose $\{X_i\}_{i = 1}^n$ is a set of independent random variables with $X_i \sim \mathrm{Pois}(\lambda_i)$, then
	$$
	\sum_{i=1}^n X_i \sim \mathrm{Pois}\left(\sum_{i=1}^n \lambda_i\right).
	$$
\end{lemma}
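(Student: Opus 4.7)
The plan is to establish the result by the standard generating-function argument, which handles all $n$ uniformly without induction bookkeeping. Recall that for $X \sim \mathrm{Pois}(\lambda)$, the probability generating function is
\[
G_X(z) := \mathbb{E}[z^X] = \sum_{k=0}^{\infty} \frac{e^{-\lambda} \lambda^k}{k!} z^k = e^{\lambda(z-1)},
\]
which is finite for all $|z| \le 1$. This is a short calculation that just collapses the Taylor series for $e^{\lambda z}$.

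Next I would use independence of $\{X_i\}_{i=1}^n$, which implies that $\mathbb{E}[z^{\sum_i X_i}] = \prod_{i=1}^n \mathbb{E}[z^{X_i}]$. Plugging in the formula above,
\[
G_{\sum_i X_i}(z) = \prod_{i=1}^n e^{\lambda_i (z-1)} = \exp\!\left(\Big(\sum_{i=1}^n \lambda_i\Big)(z-1)\right),
\]
which is exactly the PGF of $\mathrm{Pois}\big(\sum_i \lambda_i\big)$. Finally I would invoke the fact that the PGF of a nonnegative integer-valued random variable uniquely determines its distribution (the coefficients of its Taylor expansion at $0$ recover the point masses), to conclude $\sum_i X_i \sim \mathrm{Pois}\big(\sum_i \lambda_i\big)$.

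There is essentially no obstacle here; the only mild subtlety is justifying uniqueness of the distribution from the PGF, but since both sides are $\mathbb{N}$-valued this is immediate from equating Taylor coefficients. If one prefers to avoid invoking PGF machinery, an equally clean alternative is induction on $n$: the base case $n=2$ follows by convolving the two Poisson pmfs and recognizing a binomial expansion,
\[
\Pr[X_1 + X_2 = k] = \sum_{j=0}^{k} \frac{e^{-\lambda_1}\lambda_1^j}{j!}\cdot \frac{e^{-\lambda_2}\lambda_2^{k-j}}{(k-j)!} = \frac{e^{-(\lambda_1+\lambda_2)}}{k!}(\lambda_1+\lambda_2)^k,
\]
and then the inductive step groups $X_1 + \dots + X_{n-1}$ (which is Poisson by hypothesis) with $X_n$ and applies the base case.
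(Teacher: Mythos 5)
Your proof is correct. The paper states this lemma without proof, treating it as a standard fact about Poisson distributions, so there is no paper argument to compare against. Both your probability-generating-function derivation and the fallback convolution-plus-induction argument are complete and standard; either would serve.
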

\begin{lemma}\label{lem:tail_pois}
	$$
	\Pr\left[\mathrm{Pois}(\lambda)\ge 1.2\lambda\right] \le e^{-0.01 \lambda}
	$$
	and
	$$
	\Pr\left[\mathrm{Pois}(\lambda) \le 0.8\lambda\right] \le e^{-0.01 \lambda}.
	$$
\end{lemma}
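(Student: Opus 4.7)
The plan is to prove both tail bounds via the standard Chernoff argument applied to the Poisson moment generating function, which is $\E[e^{tX}] = e^{\lambda(e^t-1)}$ for $X \sim \mathrm{Pois}(\lambda)$.

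For the upper tail, I would apply Markov's inequality to $e^{tX}$ with a parameter $t > 0$ to be chosen:
\[
\Pr[X \ge 1.2\lambda] = \Pr[e^{tX} \ge e^{1.2\lambda t}] \le e^{-1.2\lambda t} \cdot e^{\lambda(e^t - 1)} = \exp\bigl(\lambda(e^t - 1 - 1.2t)\bigr).
\]
Minimizing the exponent over $t>0$ gives $e^t = 1.2$, i.e.\ $t = \ln 1.2$, whence the exponent becomes $\lambda(0.2 - 1.2\ln 1.2)$. A direct numerical check shows $1.2\ln 1.2 > 0.21$, so the exponent is at most $-0.01\lambda$, giving the first bound.

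For the lower tail I would do the symmetric calculation with $t < 0$:
\[
\Pr[X \le 0.8\lambda] = \Pr[e^{tX} \ge e^{0.8\lambda t}] \le \exp\bigl(\lambda(e^t - 1 - 0.8t)\bigr),
\]
where I used that $t < 0$ reverses the inequality inside $\Pr$ correctly. Minimizing over $t<0$ gives $e^t = 0.8$, i.e.\ $t = \ln 0.8$, and the exponent becomes $\lambda(-0.2 - 0.8\ln 0.8)$. Since $0.8\ln 0.8 < -0.17$, the exponent is at most $-0.02\lambda \le -0.01\lambda$, yielding the second bound.

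There is no real obstacle here; the only thing to be careful about is the sign convention in the lower-tail Chernoff step and verifying the two numerical inequalities $1.2\ln 1.2 - 0.2 \ge 0.01$ and $-0.2 - 0.8\ln 0.8 \ge 0.01$, both of which follow from elementary Taylor expansion of $\ln(1+x)$ around $x=0$ (or just a direct numerical estimate). The whole argument is a few lines and uses no structure beyond the Poisson MGF.
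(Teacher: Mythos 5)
Your argument is correct and is essentially the same as the paper's: the paper cites the standard Poisson tail bound $\Pr[\mathrm{Pois}(\lambda)\ge x]\le e^{-\lambda}(e\lambda/x)^x$ (Mitzenmacher--Upfal, Thm.~5.4), which is proved exactly by the Chernoff/MGF argument you unfold, and then plugs in $x=1.2\lambda$ and $x=0.8\lambda$. One small slip in the lower-tail justification: to conclude $-0.2-0.8\ln 0.8\le -0.01$ you need $0.8\ln 0.8\ge -0.19$, whereas you stated the opposite-direction estimate $0.8\ln 0.8<-0.17$; the true value is $0.8\ln 0.8\approx -0.179$, so the conclusion holds, but the inequality you wrote does not by itself imply it.
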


\begin{proof}
	By standard tail inequalities of Poisson distribution (see Theorem 5.4 in \cite{mitzenmacher2005probability}), 
	$$
	\Pr\left[\mathrm{Pois}(\lambda)\ge x \right] \le e^{-\lambda} (e \lambda / x)^x
	$$
	and
	$$
	\Pr\left[\mathrm{Pois}(\lambda)\le x \right] \le e^{-\lambda} (e \lambda / x)^x.
	$$
	
	Thus for any $\lambda > 0$, we have 
	$$
	\Pr\left[\mathrm{Pois}(\lambda)\ge 1.2\lambda \right] \le e^{-\lambda} (e /1.2)^{1.2\lambda} < e^{-0.01\lambda}
	$$
	and
	\[
	\Pr\left[\mathrm{Pois}(\lambda)\le 0.8\lambda \right] \le e^{-\lambda} (e /0.8)^{0.8\lambda} < e^{-0.01\lambda}.\qedhere
	\]
	
\end{proof}

\subsection{Tensor Ranks}

In this paper we are interested in the approximate tensor rank with respect to the $\ell_{\infty}$ norm. For more on approximate tensor rank with respect to other norms and their applications, see~\cite{swz19} and the references therein. Now we introduce some relevant definitions.

\begin{definition}
	We say a tensor $T \in \mathbb{R}^{n_1 \times n_2 \times \ldots \times n_k}$ is {\em simple} if $T = v_1 \otimes v_2 \otimes \ldots \otimes v_k$ where $v_i \in \mathbb{R}^{n_i}$.
\end{definition}
\begin{definition}
	For a tensor $T \in \mathbb{R}^{n_1 \times n_2 \times \ldots \times n_k}$, its $\rank(T)$ is defined to be the smallest integer $r$ such that $T = \sum_{i=1}^r A_r$ and $A_i$ is simple for all $i \in [r]$.
\end{definition}

\begin{definition}
	For a tensor $T \in \mathbb{R}^{n_1 \times n_2 \times \ldots \times n_k}$, the approximate rank of $T$ is defined as follows:
	\[
	\arank_{\varepsilon}(T) = \min \{\rank(S) \mid  \|T - S\|_{\infty} \le \varepsilon \}.
	\]
	Here $\|\cdot\|_{\infty}$ is the entry-wise $\ell_{\infty}$-norm of a tensor.
\end{definition}

\subsection{Quantum Query Complexity}
In this section we recall some previous results on quantum query complexity. 
Here we emphasize the number of qubits used by the algorithms, which will be crucial when simulating them using classical algorithms (see Appendix \ref{apx:simulate}). 
\begin{definition}
	In the $\FormulaEval$ problem, we are given a formula $\mathcal{F}$ with $\{ \textsf{AND}, \textsf{OR},\textsf{NOT} \}$ basis and $O(1)$ fan-in on $n$ variables $x_1, x_2, \ldots, x_n$.
	In each query, the algorithm gets the value of $x_i$, where $i \in [n]$ is determined by the algorithm. 
	The goal is to evaluate the formula. 
\end{definition}
\begin{theorem}[\cite{ambainis2010any}]\label{thm:alg_fe} 
	The $\FormulaEval$ problem can be solved in $O(n^{1/2 + o(1)})$ queries using $O(\polylog(n))$ qubits, with failure probability at most $1/3$.
\end{theorem}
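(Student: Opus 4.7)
The plan is to invoke the span-program framework of Reichardt and Ambainis-Childs-Reichardt-Špalek-Zhang. First I would balance the input formula so that its depth becomes $O(\log n)$ while its size increases by only a polynomial factor; this can be done by a standard balancing trick for bounded fan-in formulas and is essentially free, since we are targeting only $n^{1/2+o(1)}$ query complexity. Then I would attach to each gate of the formula an elementary span program for $\AND$, $\OR$, or $\mathsf{NOT}$ of constant size and witness size, and compose them recursively along the formula tree to obtain a single span program $P$ computing $\mathcal{F}$.

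Next, I would bound the witness size (equivalently, the general adversary bound) of the composed span program. For a formula that is entirely an $\AND$/$\OR$ tree with balanced fan-in, the composition rule yields witness size exactly $\sqrt{n}$; for general formulas with bounded but unbalanced fan-in, one obtains $n^{1/2+o(1)}$, where the $n^{o(1)}$ slack absorbs the lack of perfect balance between alternating $\AND$ and $\OR$ layers. This is the main technical ingredient I would import from \cite{ambainis2010any}, and it is also the step I expect to be the primary obstacle, since the matching lower bound for $\FormulaEval$ over arbitrary bases is tight only up to this $n^{o(1)}$ factor and the balancing argument is delicate.

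Having the span program $P$ in hand, I would run the generic span-program evaluation algorithm: perform phase estimation on a quantum walk whose discriminant matrix encodes $P$, and accept if and only if the phase near $0$ is detected. The number of queries to the input is $\widetilde{O}(W(P))=O(n^{1/2+o(1)})$, and by standard amplification the failure probability can be pushed down to $1/3$.

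Finally, for the qubit count I would observe that the walk acts on a Hilbert space whose basis vectors correspond to edges (or gates) of the formula, so its dimension is $\poly(n)$ and can be encoded with $O(\log n)$ qubits. Phase estimation requires only $O(\log(1/\text{precision}))=O(\log n)$ extra ancilla qubits, and the input oracle itself is queried nondestructively, so the total qubit count is $O(\polylog(n))$, which is exactly what is needed for the classical simulation in Appendix~\ref{apx:simulate}.
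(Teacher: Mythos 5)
The paper does not prove this theorem---it imports it as a black box from \cite{ambainis2010any}---so there is no internal proof to compare against. However, your proposal describes the \emph{wrong algorithm}: you sketch the span-program framework, which is the approach of \cite{Rei11a}, not of \cite{ambainis2010any}. The paper is quite explicit about this distinction in the remark immediately following Theorem~\ref{thm:alg_fe}: it notes that the span-program-based $O(n^{1/2})$ algorithm of \cite{Rei11a} is specifically \emph{not} usable here, because (1) it needs $O(n)$ qubits, and (2) it is not computationally efficient (constructing the optimal span program and the associated unitaries requires solving a semidefinite program, rather than reading them off directly from the formula). Your closing claim that the span-program walk needs only $O(\polylog(n))$ qubits contradicts the paper's own characterization of \cite{Rei11a}, and more importantly you never address how Alice (or the simulating classical algorithm) is supposed to efficiently construct the span-program unitaries from the input formula.

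The algorithm of \cite{ambainis2010any} is different in structure. After rebalancing the formula (this is the source of the $n^{o(1)}$ slack, as you correctly identify), it builds a weighted tree graph directly from the balanced NAND formula---one vertex per gate/leaf with weights determined by subformula sizes---and runs phase estimation on a discrete-time quantum walk on this tree, analyzed via a scattering-theory/eigenvalue argument in the spirit of Farhi--Goldstone--Gutmann. No span program is ever constructed. Because the walk space is just the tree (dimension $\poly(n)$, so $O(\log n)$ qubits) plus $O(\log n)$ phase-estimation ancillas, and because the walk unitary is a local operator whose matrix elements are read directly from the formula structure, this algorithm is both low-space and computationally efficient---exactly the two properties the paper needs for the classical simulation in Appendix~\ref{apx:simulate} and that your span-program route would not obviously deliver.
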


\begin{remark}
	There is an optimal $O(n^{1/2})$ query algorithm for $\FormulaEval$~\cite{Rei11a}. However, that query algorithm doesn't fit in our applications here for two reasons: (1) the algorithm needs $O(n)$ qubits, which is too much for classical simulation; (2) the algorithm is not \emph{computationally efficient} and it takes too much time to compute the corresponding unitary transformation.
\end{remark}

\begin{definition}
	In the $\ED$ problem, we are given $n$ elements $X = (x_1,x_2,...,x_n) \in [m]^n$.
	In each query, the algorithm gets the value of $x_i$, where $i \in [n]$ is determined by the algorithm. 
	The goal is to decide whether there are two distinct indices $i \neq j$ such that $x_i = x_j$.
\end{definition}
\begin{theorem}[\cite{ambainis2007quantum}]\label{thm:alg_ed}
	The $\ED$ problem can be solved in $O(n^{2/3})$ queries using $O(n^{2/3} \log m)$ qubits, with failure probability at most $1/3$.
\end{theorem}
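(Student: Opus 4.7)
The plan is to use a quantum walk on the Johnson graph, following Ambainis's original approach. Let $r$ be a parameter to be chosen later. Consider the Johnson graph $J(n,r)$ whose vertices are subsets $S \subseteq [n]$ of size $r$, with two subsets adjacent if they differ in exactly one element. With each vertex $S$ I would associate a data register holding $D(S) = \{(i,x_i) : i \in S\}$, which takes $O(r \log m)$ qubits. A vertex is called \emph{marked} if its data contains a collision, that is, $x_i = x_j$ for some $i \neq j$ in $S$.

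First I would observe that if any collision $(i^\ast, j^\ast)$ exists, then the fraction of marked vertices is at least $\epsilon = \binom{n-2}{r-2}/\binom{n}{r} = \Theta(r^2/n^2)$, since every subset containing both $i^\ast$ and $j^\ast$ is marked. Next I would invoke the standard fact that $J(n,r)$ has spectral gap $\delta = \Theta(1/r)$. Then I would apply Szegedy's quantization of random walks (as formalized by Ambainis/MNRS) to turn this into a quantum search procedure with total query cost
\[
O\!\left(S + \frac{1}{\sqrt{\epsilon}}\left(\frac{1}{\sqrt{\delta}}\cdot U + C\right)\right),
\]
where the setup cost $S = O(r)$ is the number of queries needed to initialize the data over the uniform superposition of vertices, the update cost $U = O(1)$ counts the queries per walk step (one index leaves the set, one enters), and the checking cost is $C = 0$ since the data register already certifies the presence of a collision without any further queries. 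Substituting, the total is $O(r + (n/r)\sqrt{r}) = O(r + n/\sqrt{r})$, which is minimized at $r = n^{2/3}$, giving $O(n^{2/3})$ queries; standard amplitude amplification then drives the failure probability below $1/3$.

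For the qubit count, I would track the registers used by the walk: the state is a superposition over adjacent pairs of subsets in $J(n,r)$ together with their associated data. Describing a subset takes $O(r \log n)$ qubits and the data register takes $O(r \log m)$ qubits, summing to $O(r \log m) = O(n^{2/3} \log m)$, as claimed. The main technical step\textemdash and the bulk of the work in Ambainis's original proof\textemdash is establishing the quantum walk cost formula together with the lower bound on the spectral gap of the Johnson graph (typically via representation theory of the symmetric group); once those are in hand, the remaining steps are routine parameter balancing and amplification.
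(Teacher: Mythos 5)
Your reconstruction is correct and follows the same Johnson-graph quantum-walk approach as the cited Ambainis result; the paper states Theorem~\ref{thm:alg_ed} as a black-box citation without giving its own proof, so there is nothing to compare against beyond the source. Two minor remarks: the register accounting actually gives $O(r(\log n + \log m))$ qubits, which collapses to the stated $O(n^{2/3}\log m)$ only under the harmless assumption $m \ge n$ (when $m < n$ the instance is a yes-instance by pigeonhole, so this is without loss of generality, and it is the regime relevant to the paper's sparse-$\OV$ application); and the clean $S + \epsilon^{-1/2}(\delta^{-1/2}U + C)$ cost formula you invoke is the Szegedy/MNRS refinement that postdates Ambainis's original, more hands-on spectral analysis, but it recovers exactly the same parameter balance $r = n^{2/3}$.
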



\subsection{Multiparty Quantum Communication Protocols}\label{sec:def_quantum_comm}
In this section, we give our definition of multiparty quantum communication protocols.

Let $\domainX_1, \domainX_2, \ldots, \domainX_k$ be finite sets and $f :  \domainX_1, \domainX_2, \ldots, \domainX_k \to \{0, 1\}$ be a function.
In a $k$-part quantum communication protocols, there are $k$ players $P_1, P_2, \ldots, P_k$, together with a Hilbert space $H = H_{1} \otimes H_2 \otimes \ldots \otimes H_k \otimes \overline{H}$. Here $H_i$ serves as the inner working space for player $P_i$, and $\overline{H}$ is the communication channel between all the players. Each player $P_i$ receives an input $x_i \in \domainX_i$ and the goal is to determine $f(x_1, x_2, \ldots, x_k)$.

Now we give the formal definition of a $k$-party quantum communication protocol. 
\begin{definition}
	A $k$-part quantum communication protocol $\mathcal{P} = \mathcal{P}(x_1, x_2, \ldots, x_k)$ is a sequence of $r$ unitary transforms $\mathcal{P} = (U_1^{p_1}(x_{p_1}), U_2^{p_2}(x_{p_2}), \ldots, U_r^{p_r}(x_{p_r}))$, such that:
	
	\begin{itemize}
		\item $U_i^{p_i}(x_{p_i})$ is a unitary transform {\em acting on} $H_{p_i} \otimes \overline{H_i}$ where $\overline{H_i}$ is a subspace spanned by some qubits of $\overline{H}$\footnote{i.e., $U_i^{p_i}(x_{p_i})$ does not alter qubits other than those in $H_{p_i} \otimes \overline{H_i}$.}. That is, it is the action of $p_i$-th player $P_{p_i}$, who is in charge of the $i$-th turn.
		
		\item The sequence $p_1, p_2, \ldots, p_r$, and $\overline{H_1}, \overline{H_2}, \ldots, \overline{H_r}$ are fixed and do not depend on $x_1, x_2, \ldots, x_k$. In other words, $\overline{H_i}$ corresponds to the qubits in the channel $\overline{H}$ that player $P_{p_i}$ will modify during its action in the $i$-th turn, and all players take actions in a fixed, predefined order.
		
		\item The communication complexity of $\mathcal{P}$ is defined to be $C(\mathcal{P}) = \sum_{i=1}^r \log(\dim(\overline{H_i}))$.
		The space complexity of $P_i$ is defined to be $S_i(\mathcal{P}) = \log(\dim(H_i \otimes \overline{H}))$.
	\end{itemize}
	
\end{definition}

For a protocol $\mathcal{P} = (U_1^{p_1}(x_{p_1}), U_2^{p_2}(x_{p_2}), \ldots, U_r^{p_r}(x_{p_r}))$, we say $\mathcal{P}$ computes $f$ with error $\varepsilon$ if we measure the {\em first} qubit in $\overline{H}$
on the state $U_r^{p_r}(x_{p_r}) \cdot U_{r-1}^{p_{r - 1}}(x_{p_{r - 1}}) \cdot \ldots \cdot U_2^{p_2}(x_{p_2}) \cdot U_1^{p_1}(x_{p_1}) \cdot \ket{0}$, we get $f(x_1, x_2, \ldots, x_k)$ with probability at least $1 - \varepsilon$, for all $x_1 \in \domainX_1, x_2 \in \domainX_2, \ldots, x_k \in \domainX_k$.

\begin{remark}
	We remark that our definition here is more complicated than the usual definition of quantum communication protocols in the literature (see, e.g., \cite{kremer1995quantum}), but nonetheless, it is equivalent to them. We choose to formulate it in such a way because it is easier to describe the classical simulation of quantum communication protocols for low approximate rank decompositions (see Section~\ref{apx:simulate}), and the simulation of quantum query algorithms (see below).
\end{remark}

\subsubsection{Simulating Quantum Query Algorithm in Quantum Communication Protocols}

Quantum communication protocols can be built upon quantum query algorithms (see, e.g., \cite{buhrman1998quantum}).
Here we give an example to show how to simulate a quantum query algorithm for $\FormulaEval$ to construct a quantum communication protocol for the communication problem corresponding to Problem~\ref{pro:count_sym}, under our definition. 

In the corresponding $k$-party communication problem, there are $k$ players, and the $i$-th player $P_i$ is given a vector $u_i \in [r]^{d}$. There are $d$ functions $f_1, f_2, \ldots, f_d$ where each $f_i$ is from $[r]^k$ to $\{0, 1\}$, and a Boolean formula $\mathcal{F} : \{0, 1\}^d \to \{0, 1\}$ of $O(1)$ fan-in. Set $v(i) = f_i(u_{1, i}, u_{2, i}, \ldots, u_{k, i}).$
Their goal is to compute
$
\mathcal{F}(v(1),v(2),\dotsc,v(d)).
$

Now we show how to construct a quantum communication protocol for the above problem.

\begin{example}\label{ex:simluate}
	Assuming the first player runs a quantum query algorithm for $\FormulaEval$.
	For the simulation, we only need to implement the following query gate $O_v$: $\ket{i}\ket{b} \to \ket{i} \ket{b \oplus v(i)}$,
	where $i$ is the index of a variable written in binary form and $v(i)$ is the corresponding input bit to $\mathcal{F}$.
	
	\newcommand{\Hindex}{\overline{H}_{\sf index}}
	\newcommand{\Houtput}{\overline{H}_{\sf output}}
	
	We first specify the channel, $\overline{H}$ is defined as $\Hindex \otimes \Houtput \otimes \overline{H}_1 \otimes \cdots \otimes \overline{H}_k$. $\Hindex$ and $\Houtput$ together simulate the query gate, and $\overline{H}_i$ is the place for player $P_i$ to write her number.
	
	In the beginning, all qubits in $\overline{H}$ are $\ket{0}$. When the first player wants to apply $O_v$ on some qubits in $H_1$, it first swaps the qubits containing $i$ and $b$ in $H_1$ with $\Hindex$ and $\Houtput$ in $\overline{H}$.
	
	Each player $P_j$ in turn reads $i$ in $\Hindex$ and writes the value of $u_{j, i}$ to qubits in $\overline{H}_{j}$. Note that each player can write the value of $u_{j, i}$ to qubits in $\overline{H}_j$ using a unitary transformation since all qubits in $\overline{H}_j$ are $\ket{0}$ at the beginning, by assumption.
	
	Now, given the value of $i$ and $u_{1,i }, u_{2, i}, \ldots, u_{k, i}$, the first player maps $\ket{i}\ket{b}$ to $\ket{i} \ket{b \oplus v(i)}$ via a unitary transformation. Now the gate $O_v$ is implemented, but we still have to clean up the garbages in $\overline{H}_j$'s, and set them back to $\ket{0}$'s. This can be done by applying reverse transforms of all applied unitary transformation, in the reverse order.
	
	The communication complexity of this protocol is $O(Q\cdot k(\log d + \log r))$, where $Q$ is the query complexity of the quantum query algorithm. Also, using the algorithm in Theorem \ref{thm:alg_fe}, the communication complexity of this protocol is $O(n^{1/2+o(1)}\cdot k(\log d + \log r))$.
\end{example}
\section{Approximate Counting Algorithms from Quantum Communication Protocols}

Let $\domainX_1, \domainX_2, \ldots, \domainX_k$ be finite sets and $f :  \domainX_1, \domainX_2, \ldots, \domainX_k \to \{0, 1\}$ be a function.
Let $M_f \in \{0, 1\}^{|\domainX_1| \times |\domainX_2| \times \ldots \times |\domainX_k|}$ denote the Boolean tensor whose $(x_1, x_2, \ldots, x_k)$ entry is $f(x_1, x_2, \ldots, x_k)$.
The following connection between $2$-party quantum communication complexity and approximate rank is first observed in \cite{buhrman2001communication}.
This result can be generalized to the $k$-party case to get the following theorem. 
Full details can be found in Appendix \ref{apx:simulate}.
\begin{theorem}\label{thm:apx_rank}
	Let $\domainX_1, \domainX_2, \ldots, \domainX_k$ be finite sets and $f :  \domainX_1, \domainX_2, \ldots, \domainX_k \to \{0, 1\}$ be a Boolean function.
	Suppose there exists a $k$-party efficient quantum communication protocol $\mathcal{P}$, such that $\mathcal{P}$ gives the correct answer with probability at least $1 - \varepsilon$ on every input,
	then $\arank_{\varepsilon}(M_f) \le 2^{O(C(\mathcal{P}))}$,
	or equivalently, there exist simple tensors $A_1, A_2, \ldots, A_{2^{O(C(\mathcal{P}))}}$ such that
	$$
	\left \|M_f - \sum_{i=1}^{2^{O(C(\mathcal{P}))}} A_i \right\|_{\infty} \le \varepsilon.
	$$
\end{theorem}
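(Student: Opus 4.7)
The plan is to generalize the classical Kremer--Yao decomposition of two-party quantum communication protocols into low rank matrices, as used by Buhrman--de Wolf, to the $k$-party model defined in Section~\ref{sec:def_quantum_comm}. Since the acceptance probability $P(x_1,\ldots,x_k) := \Pr[\mathcal{P}\text{ outputs }1]$ satisfies $|P(x_1,\ldots,x_k) - f(x_1,\ldots,x_k)| \le \varepsilon$ pointwise, it suffices to exhibit an exact rank-$2^{O(C(\mathcal{P}))}$ decomposition of the tensor $P \in \mathbb{R}^{|\domainX_1|\times\cdots\times|\domainX_k|}$. The $\ell_\infty$ approximation to $M_f$ then follows automatically.

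First I would write the final state of $\mathcal{P}$ as
\[
|\psi(x_1,\ldots,x_k)\rangle = U_r^{p_r}(x_{p_r})\cdots U_1^{p_1}(x_{p_1})\,|0\rangle,
\]
and expand each unitary in the computational basis of $H_1\otimes\cdots\otimes H_k\otimes\overline{H}$. The crucial property, which the definition in Section~\ref{sec:def_quantum_comm} was tailored to make transparent, is that $U_i^{p_i}(x_{p_i})$ acts as the identity outside $H_{p_i}\otimes\overline{H_i}$ and depends only on $x_{p_i}$. Hence for any two basis states $|z_{i-1}\rangle,|z_i\rangle$ of the full Hilbert space, the matrix element $\langle z_i|U_i^{p_i}(x_{p_i})|z_{i-1}\rangle$ vanishes unless the projections of $z_{i-1}$ and $z_i$ onto every factor other than $H_{p_i}\otimes\overline{H_i}$ agree, and otherwise it depends only on $x_{p_i}$.

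Second, I would index computational paths $z_0=|0\rangle, z_1,\ldots, z_r$ by the sequence of classical contents of the channel qubits $\overline{H_i}$ just after round $i$; because only $\overline{H_i}$ is modified in round $i$, the full channel trajectory is determined by an element of $\prod_{i=1}^r \{0,1\}^{\log\dim(\overline{H_i})}$, a set of size exactly $2^{C(\mathcal{P})}$. Holding such a channel history $h$ fixed, the remaining degrees of freedom are the workspace registers $H_1,\ldots,H_k$; and since each $U_i$ touches only one $H_{p_i}$, summing out the $H_j$ workspace of player $j$ for each $j$ yields amplitudes of the factored form
\[
\langle z_r|\psi(x_1,\ldots,x_k)\rangle \;=\; \sum_{h \in \mathcal{H}} \alpha_h^{(1)}(x_1)\,\alpha_h^{(2)}(x_2)\cdots\alpha_h^{(k)}(x_k),
\]
with $|\mathcal{H}|\le 2^{C(\mathcal{P})}$ (once $z_r$ is also fixed, but below I take the sum over accepting $z_r$).

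Third, I would obtain the acceptance probability by expanding $P = \sum_{z_r:\text{ first qubit of }\overline{H}\text{ is }1} |\langle z_r|\psi\rangle|^2$. Multiplying out $|\cdot|^2$ produces pairwise cross-terms indexed by $(h,h')\in\mathcal{H}\times\mathcal{H}$, and each cross-term is a product $\overline{\alpha_h^{(j)}(x_j)}\alpha_{h'}^{(j)}(x_j)$ factoring across the $k$ players. Summing over $z_r$ absorbs into the per-player coefficients. The result is a decomposition of $P$ as a sum of at most $|\mathcal{H}|^2 = 2^{O(C(\mathcal{P}))}$ simple rank-one tensors in the $x_1,\ldots,x_k$ variables, which is exactly $\arank_\varepsilon(M_f)\le 2^{O(C(\mathcal{P}))}$.

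The main obstacle I anticipate is the bookkeeping in the second step: unlike the round-based two-party protocols of Kremer--Yao, our $k$-party model allows an arbitrary predefined order $(p_1,\ldots,p_r)$ in which players act, and the channel subspaces $\overline{H_i}$ may overlap between turns. The care needed is to verify that indexing by channel contents after each $U_i$ really does determine the classical trajectory of the channel register, and that the per-player workspace sums collapse cleanly into functions $\alpha_h^{(j)}$ depending only on $x_j$. Once this is set up carefully, the counting $|\mathcal{H}|\le 2^{C(\mathcal{P})}$ and the final squaring step are routine, and no quantum-specific tools beyond the linearity of the basis expansion are required.
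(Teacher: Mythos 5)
Your proposal is correct and follows essentially the same route as the paper's proof: you index by the sequence of channel contents after each round (the paper's ``history of modifications'' $i \in S$ in Theorem~\ref{thm:comm_tensor}), use the fact that each $U_i^{p_i}$ acts only on $H_{p_i}\otimes\overline{H_i}$ to factor the resulting amplitude across players, and then square to get $|\mathcal{H}|^2 = 2^{O(C(\mathcal{P}))}$ rank-one cross-terms. The paper organizes the same computation slightly differently\textemdash proving by induction that the full state vector decomposes as $\sum_{i\in S} \bigl(\prod_p a_i^p(x_p)\bigr)\ket{A_i^1(x_1)}\cdots\ket{A_i^k(x_k)}\ket{f(i)}$ and then taking $\braket{\phi|\phi}$ over accepting histories, which packages your ``sum out the workspace and sum over accepting $z_r$'' step into per-player inner products $\braket{A_i^p(x_p)|A_j^p(x_p)}$\textemdash but the underlying ideas and the bound are identical.
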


In Appendix \ref{apx:simulate} we further show how to use classical deterministic algorithms to simulate quantum communication protocols.
Notice that here the time complexity depends on the space complexity of the quantum communication protocol to use.
\begin{corollary}\label{cor:apx_count}
	Let $\domainX_1, \domainX_2, \ldots, \domainX_k$ be finite sets and $f :  \domainX_1, \domainX_2, \ldots, \domainX_k \to \{0, 1\}$ be a Boolean function.
	Suppose there exists a $k$-party efficient quantum communication protocol $\mathcal{P}$, such that $\mathcal{P}$ gives the correct answer with probability at least $1 - \varepsilon$ on every input, and all the unitary transformation used in the $\mathcal{P}$ can be constructed in polynomial time (with respect to their sizes) by a deterministic classical algorithm.
	Then there exists $k$ deterministic classical algorithms $\alg_{\domainX_1}, \alg_{\domainX_2}, \ldots, \alg_{\domainX_k}$ 
	such that $\alg_{\domainX_i}$runs in $2^{O(C(\mathcal{P}) + S_i(\mathcal{P}))}$ time, 
	receives $x_i \in \domainX_i$ as input and outputs a vector $\alg_{\domainX_i}(x_i) \in \mathbb{R}^{2^{O(C(\mathcal{P}))}}$, 
	and for any $x_1 \in \domainX_1, x_2 \in \domainX_2, \ldots, x_k \in \domainX_k$, 
	$$
	-\varepsilon
	\le 
	\left \langle  \alg_{\domainX_1}(x_1),   \alg_{\domainX_2}(x_2), \ldots,   \alg_{\domainX_k}(x_k) \right \rangle
	- f(x_1, x_2, \ldots, x_k) 
	\le 
	\varepsilon.
	$$
\end{corollary}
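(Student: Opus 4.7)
The plan is to make the approximate tensor rank decomposition from Theorem~\ref{thm:apx_rank} constructively effective: each slice of each simple tensor will be computed in time $2^{O(C(\mathcal{P}) + S_j(\mathcal{P}))}$ from only the corresponding input $x_j$, leveraging the assumed polynomial-time constructibility of the unitaries $U_i(x_{p_i})$.

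First I would set up a Kremer-style inductive decomposition of the joint protocol state. Proved by induction on the round index $i$, the claim is that
\[
|\psi_i(x_1,\ldots,x_k)\rangle \;=\; \sum_{a \in A_i} |\phi_1^{(a)}(x_1)\rangle \otimes \cdots \otimes |\phi_k^{(a)}(x_k)\rangle \otimes |\chi^{(a)}\rangle_{\overline{H}},
\]
where $a = (c_1,\ldots,c_i)$ ranges over tuples with each $c_\ell$ a computational basis state of $\overline{H}_\ell$, each $|\chi^{(a)}\rangle$ is a \emph{computational} basis state of $\overline{H}$ determined by $a$, and each $|\phi_j^{(a)}(x_j)\rangle$ depends only on $x_j$ (and $a$). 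The inductive step uses that $U_{i+1}(x_{p_{i+1}})$ acts only on $H_{p_{i+1}} \otimes \overline{H}_{i+1}$: because every $|\chi^{(a)}\rangle$ is a tensor product of computational basis states, I factor out its $\overline{H}_{i+1}$ portion, apply $U_{i+1}$ to the $P_{p_{i+1}}$-factor together with that portion, and re-expand the resulting state in the computational basis of $\overline{H}_{i+1}$; this multiplies $|A_i|$ by $\dim(\overline{H}_{i+1})$. After all $r$ rounds, $|A_r| = 2^{C(\mathcal{P})}$.

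Next I would rewrite the protocol's acceptance probability using this decomposition. Because the $|\chi^{(a)}\rangle$'s are all computational basis states, their pairwise inner products are $0$ or $1$, and measuring the first qubit of $\overline{H}$ yields
\[
P_{\mathrm{acc}}(x_1,\ldots,x_k) \;=\; \sum_{(a,a') \,:\, \chi^{(a)} = \chi^{(a')},\, \chi^{(a)}_1 = 1} \;\prod_{j=1}^k \langle \phi_j^{(a')}(x_j) \mid \phi_j^{(a)}(x_j)\rangle.
\]
Each algorithm $\alg_{\domainX_j}$ would therefore output the vector (of length at most $|A_r|^2 = 2^{O(C(\mathcal{P}))}$) indexed by admissible pairs $(a,a')$ whose entry is $\langle \phi_j^{(a')}(x_j) \mid \phi_j^{(a)}(x_j)\rangle$ (and $0$ on non-admissible coordinates). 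Then $\langle \alg_{\domainX_1}(x_1),\ldots,\alg_{\domainX_k}(x_k)\rangle = P_{\mathrm{acc}}(x_1,\ldots,x_k)$, which by hypothesis lies within $\varepsilon$ of $f(x_1,\ldots,x_k)$.

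Finally I would bound the running time. For each transcript $a$, $\alg_{\domainX_j}$ simulates the protocol by maintaining a vector $|\phi_j\rangle \in H_j$ (initialized to $|0\rangle$) and a classical pointer to the current basis state of $\overline{H}$: in rounds with $p_i \neq j$ it only advances the pointer by writing $c_i$ into the $\overline{H}_i$-qubits, while in rounds with $p_i = j$ it applies $U_i(x_j)$ to $|\phi_j\rangle \otimes |\overline{H}_i\text{-substate}\rangle$ and projects onto $|c_i\rangle_{\overline{H}_i}$. Each such step costs $2^{O(S_j(\mathcal{P}))}$ time under the constructibility assumption; summing over $r \le C(\mathcal{P})$ rounds and $|A_r| = 2^{O(C(\mathcal{P}))}$ transcripts, together with the $2^{O(C(\mathcal{P}))}$ subsequent pair inner products $\langle \phi_j^{(a')}|\phi_j^{(a)}\rangle$, fits into the $2^{O(C(\mathcal{P}) + S_j(\mathcal{P}))}$ budget. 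The main point requiring care is verifying through the induction that every $|\chi^{(a)}\rangle$ remains a \emph{computational} basis state (not a general superposition): this is what both permits representing the channel by a classical pointer during simulation, and collapses the cross-terms $\langle \chi^{(a')} \mid \chi^{(a)}\rangle$ to indicator functions in the acceptance-probability formula.
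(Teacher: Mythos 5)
Your proof follows essentially the same route as the paper's (Appendix C): a Kremer--Yao style round-by-round decomposition of the protocol state into $2^{C(\mathcal{P})}$ transcript-indexed simple-tensor terms (Theorem~\ref{thm:comm_tensor}), followed by expressing the acceptance probability as a sum of pairwise inner products $\prod_p \langle\phi_p^{(a')}(x_p)\mid\phi_p^{(a)}(x_p)\rangle$ over transcript pairs, yielding the $\arank_\varepsilon \le 2^{O(C(\mathcal{P}))}$ decomposition, and finally using the polynomial-time constructibility of the unitaries to build each per-player sketch vector by simulating the protocol one transcript at a time within the $2^{O(C(\mathcal{P})+S_j(\mathcal{P}))}$ budget. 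One point where your write-up is actually \emph{cleaner} than the paper's: you explicitly verify by induction that the channel factor $|\chi^{(a)}\rangle$ remains a computational basis state, and so the cross-term $\langle\chi^{(a')}\mid\chi^{(a)}\rangle$ collapses to an indicator; the paper's Theorem~\ref{thm:comm_tensor} leaves this implicit (it writes $\ket{f(i)}$ for ``a state in $\overline{H}$'' and then drops the $\braket{f(i)\mid f(j)}$ factor from the acceptance-probability expansion without comment). Your version makes it transparent why that factor vanishes or equals one, and why a classical pointer suffices to track the channel during simulation. Substantively, though, both arguments are the same.
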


Based on Corollary \ref{cor:apx_count}, for any Boolean function $f :  \domainX_1, \domainX_2, \ldots, \domainX_k \to \{0, 1\}$ with an efficient efficient quantum communication protocol, there also exists an efficient approximate counting algorithm for $f$.
\begin{theorem}\label{thm:apx_count}
	Let $\domainX_1, \domainX_2, \ldots, \domainX_k$ be finite sets and $f :  \domainX_1, \domainX_2, \ldots, \domainX_k \to \{0, 1\}$ be a Boolean function.
	Suppose there exists a $k$-party efficient quantum communication protocol $\mathcal{P}$, such that $\mathcal{P}$ gives the correct answer with probability at least $1 - \varepsilon$ on every input, and all the unitary transformation used in the $\mathcal{P}$ can be constructed in polynomial time (with respect to their sizes) by a deterministic classical algorithm.
	Then there exists a classical deterministic algorithm $\apxc$ that receives $X_1 \subseteq \domainX_1, X_2 \subseteq \domainX_2, \ldots, X_k \subseteq \domainX_k$ as input, and outputs a number $E$ such that
	\begin{align*}
	\left|
	\sum_{x_1 \in X_1, x_2 \in X_2, \ldots, x_k \in X_k} f(x_1, x_2, \ldots, x_k) - E
	\right|
	\le \epsilon \cdot \prod_{i=1}^{k} |X_i| .
	\end{align*}
	Furthermore, $\apxc$ runs in $\sum_{i=1}^k |X_i| \cdot 2^{C(\mathcal{P}) + S_i(\mathcal{P})}$ time.
\end{theorem}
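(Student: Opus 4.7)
The plan is to reduce the theorem to the sketching guarantee provided by Corollary~\ref{cor:apx_count} via the multilinearity of the generalized inner product. First, I would invoke Corollary~\ref{cor:apx_count} (applied with error parameter $\varepsilon$) to obtain, for each $i \in [k]$, a deterministic algorithm $\alg_{\domainX_i} : \domainX_i \to \mathbb{R}^{w}$, where $w = 2^{O(C(\mathcal{P}))}$, which runs in time $2^{O(C(\mathcal{P}) + S_i(\mathcal{P}))}$ and satisfies
\[
\left|\langle \alg_{\domainX_1}(x_1), \alg_{\domainX_2}(x_2), \ldots, \alg_{\domainX_k}(x_k)\rangle - f(x_1, x_2, \ldots, x_k)\right| \le \varepsilon
\]
for every tuple $(x_1, \ldots, x_k)$.

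Next, for each $i$, I would form the sketch $\sketch_i = \sum_{x \in X_i} \alg_{\domainX_i}(x) \in \mathbb{R}^{w}$, and define the output as $E = \langle \sketch_1, \sketch_2, \ldots, \sketch_k\rangle$. Since the generalized (entrywise-product-then-sum) inner product is $k$-linear in its arguments, we can expand
\[
E = \sum_{x_1 \in X_1, \ldots, x_k \in X_k} \langle \alg_{\domainX_1}(x_1), \ldots, \alg_{\domainX_k}(x_k)\rangle.
\]
Applying the per-tuple approximation guarantee from Corollary~\ref{cor:apx_count} and the triangle inequality term by term then yields
\[
\left|E - \sum_{x_1 \in X_1, \ldots, x_k \in X_k} f(x_1, \ldots, x_k)\right| \le \varepsilon \cdot \prod_{i=1}^{k} |X_i|,
\]
which is exactly the desired additive error.

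For the running time, computing $\sketch_i$ requires $|X_i|$ invocations of $\alg_{\domainX_i}$, each costing $2^{O(C(\mathcal{P}) + S_i(\mathcal{P}))}$ time, so the sketch-construction phase takes $\sum_{i=1}^k |X_i| \cdot 2^{O(C(\mathcal{P}) + S_i(\mathcal{P}))}$ time. Forming $E$ from the $k$ sketches is a $k$-fold entrywise product followed by a sum over $w$ coordinates, costing $O(k \cdot w) = 2^{O(C(\mathcal{P}))}$ time, which is absorbed into the sketch-building cost. There is no genuine obstacle here beyond recognizing the multilinearity of the generalized inner product; the only thing to be slightly careful about is that $E$ may not be an integer and that the output of the $\alg_{\domainX_i}$'s must be stored to enough precision — both of which follow directly from the statement of Corollary~\ref{cor:apx_count} and do not affect the asymptotic running time.
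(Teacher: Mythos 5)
Your proof is correct and follows essentially the same route as the paper's: invoke Corollary~\ref{cor:apx_count} to get the per-element encodings, sum them to form per-set sketches, output the generalized inner product of the sketches, and use multilinearity plus the pointwise $\varepsilon$-approximation to bound the additive error. You spell out the multilinearity expansion and triangle inequality more explicitly than the paper, but there is no substantive difference.
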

\begin{proof}
	For all $x_i \in \domainX_i$ we first use $\alg_{\domainX_i}$ in Corollary \ref{cor:apx_count} to calculate $\alg_{\domainX_i}(x_i) \in \mathbb{R}^{2^{O(C(\mathcal{P}))}}$, in $\sum_{i=1}^k |X_i| \cdot 2^{C(\mathcal{P}) + S_i(\mathcal{P})}$ time. Then we directly output 
	$$
	\left \langle \sum_{x_1 \in X_1} \alg_{\domainX_1}(x_1),  \sum_{x_2 \in X_2}  \alg_{\domainX_2}(x_2), \ldots,  \sum_{x_k \in X_k}  \alg_{\domainX_k}(x_k) \right \rangle .
	$$
	The correctness simply follows from the fact that for all $(x_1,x_2,\dotsc,x_k) \in \prod_i \domainX_i$,
	$$
	-\varepsilon
	\le 
	\left \langle  \alg_{\domainX_1}(x_1),   \alg_{\domainX_2}(x_2), \ldots,   \alg_{\domainX_k}(x_k) \right \rangle 
	- f(x_1, x_2, \ldots, x_k) 
	\le 
	\varepsilon.$$
\end{proof}
\begin{remark}
	The algorithm described above is actually a sketching algorithm.
	We may define the sketch for $X_i$ as  $\sketch_i(X_i) = \sum_{x_i \in X_i} \alg_{\domainX_i}(x_i) \in \mathbb{R}^{2^{O(C(\mathcal{P}))}}$ and the number $E$ can be computed from these $\sketch_i$'s.
	This sketching algorithm satisfies a nice additive property, i.e., the sketch of $A \sqcup B$ (union as a multi-set) is simply $\sketch_i(A) + \sketch_i(B)$.
\end{remark}
Now we give approximate counting algorithms for concrete problems, using Theorem \ref{thm:apx_count}.
\subsection{Counting the $k$-Tuples of Orthogonal Vectors}\label{sec:count_k_ov}
The goal of this section is to prove the following theorem.

\begin{reminder}{Theorem \ref{thm:count_k_ov}}
	For any integers $k$, $d$ and any constant $\epsilon > 0$, $\countkOV_{n,d}$ can be approximated deterministically with additive error $\epsilon \cdot n^k$ in $n \cdot 2^{O(k\sqrt{d})}$ time. In particular, it runs in $n^{1 + o(1)}$ time $k$ is a constant and $d = o(\log^2 n)$.
\end{reminder} 

We first consider quantum communication protocols for the following function $f$.
\begin{definition}
	Let $\domainX_1 = \domainX_2 = \ldots = \domainX_k = \{0, 1\}^d$ and 
	$$
	f(x_1, x_2, \ldots, x_k) = 
	\begin{cases}
	1 & \text{if $\langle x_1, x_2, \ldots, x_k \rangle = 0$}\\
	0 & \text{otherwise}
	\end{cases}.
	$$
\end{definition}
The corresponding communication problem can be solved using the quantum communication protocol in~\cite{AA05} with communication complexity $O(k\sqrt{d})$ and space complexity $O(\polylog(d))$, with constant failure probability.
If we use the algorithm in Theorem \ref{thm:apx_count}, together with the efficient quantum communication protocol mentioned above, we can then deterministically count the number of $k$-tuples of orthogonal vectors, in time $n \cdot 2^{O(k \cdot \sqrt{d})}$ time, with an additive $\epsilon \cdot n^k$ error. 

\subsection{Counting the Pairs of Orthogonal Sparse Vectors}\label{sec:count_osv}
The goal of this section is to prove the following theorem.

\begin{reminder}{Theorem \ref{thm:count_sparse_ov}}
	For integers $n,m,d$ and any constant $\epsilon > 0$, $\sparsecountOV_{n,m,d}$ can be approximated deterministically with additive error $\epsilon \cdot n^2$ in 
	\[
	n \cdot 2^{O(d^{2/3} \log(m))}
	\]
	time. In particular, when $m = \poly(d)$ and $d = o\left( \left(\frac{\log n}{\log 
		\log n} \right)^{1.5} \right)$, it runs in $n^{1 + o(1)}$ time.
\end{reminder}

Again we consider quantum communication protocols for the following function $f$.
\begin{definition}
	Let $\domainX = \domainY =   \{0, 1\}^m_{\le d}$ and 
	$$
	f(x, y) = 
	\begin{cases}
	1 & \text{if $\langle x, y\rangle = 0$}\\
	0 & \text{otherwise}
	\end{cases}.
	$$
\end{definition}

The corresponding communication problem can be solved with communication complexity $O(d^{2/3} \log m)$, by simulating the quantum query algorithm in Theorem \ref{thm:alg_ed} for~\ED{}.
Too see the connection, let $S = \{i \mid x_i = 1\}$ and $T = \{i \mid y_i = 1\}$.
We will have $f(x, y) = 1$ if and only if  all elements in $S \sqcup T$ (union as a multi-set) are distinct.
Now, using the algorithm in Theorem \ref{thm:apx_count}, together with the efficient quantum communication protocol mentioned above, we can deterministically count the number of orthogonal pairs in $S$ and $T$, in $n \cdot 2^{O(d^{2/3} \log(m))}$ time, with an additive $\varepsilon \cdot n^k$ error.

\subsection{Counting Solutions to $\mathsf{Formula} \circ \mathsf{SYM}$ Circuits}
The goal of this section is to solve the following problem.

\begin{reminder}{Problem \ref{pro:count_sym}}
	Given $k$ sets of $n$ vectors $S_1, S_2, \ldots, S_k \subseteq \{0,\dotsc,r\}^{d}$ and $d$ functions $f_1, f_2, \ldots, f_d$ where each $f_i$ is from $\{0,\dotsc,r\}^k$ to $\{0, 1\}$,
	and a Boolean formula $\mathcal{F} : \{0, 1\}^d \to \{0, 1\}$ of $O(1)$ fan-in.
	Count the number of $k$-tuples $u_1 \in S_1, u_2 \in S_2, \ldots, u_k \in S_k$ such that
	$$
	\mathcal{F}(f_1(u_{1, 1}, u_{2, 1}, \ldots, u_{k, 1}),  f_2(u_{1, 2}, u_{2, 2}, \ldots, u_{k, 2}), \ldots, f_d(u_{1, d}, u_{2, d}, \ldots, u_{k, d})) = 1.
	$$
\end{reminder}
\vspace{-1cm}

\begin{reminder}{Theorem \ref{thm:count_sym_split}}
	For any constant $\epsilon > 0$, the above problem can be solved deterministically in $n \cdot 2^{O(d^{1/2+o(1)} \cdot k (\log d + \log r) )}$ time, within $\varepsilon \cdot n^k$ additive error. 
\end{reminder}

The corresponding $k$-party communication problem can be solved by a quantum communication protocol with communication complexity $O(d^{1/2+o(1)} \cdot k (\log d + \log r))$, 
by simulating the quantum query algorithm for \textsf{Formula}-Evaluation in Theorem \ref{thm:alg_fe}.
For details see Example \ref{ex:simluate}.
By our framework, this implies an approximate counting algorithm to the problem mentioned above in time $n \cdot  2^{O(d^{1/2+o(1)} \cdot k (\log d + \log r) )}$, with an additive $\varepsilon \cdot n^k$ error. 

Here we mention one application to the approximate counting algorithm above. 

\begin{reminder}{Theorem~\ref{thm:count_formula_sym}}
	For any constant $\epsilon > 0$, the number of solutions to a $\textsf{Formula} \circ \SYM$ circuit of size $m$ can be approximated deterministically within $\epsilon \cdot 2^n$ additive error in
	\[
	2^{O(n^{1/2} m^{1/4 + o(1)} \sqrt{\log n + \log m})}
	\]
	time. In particular, when $m = n^{2-\delta}$ for some $\delta > 0$, the running time is $2^{o(n)}$.
\end{reminder}

\begin{proofof}{Theorem~\ref{thm:count_formula_sym}}
	Consider a $\mathsf{Formula} \circ \mathsf{SYM}$ circuit $\mathcal{C} : \{0, 1\}^n \to \{0, 1\}$ with $m$ symmetric gates $X_1, X_2, \ldots, X_m$
	and a Boolean formula $\mathcal{F}$ of $O(1)$ fan-in.
	Here we slightly abuse of notation by regarding $X_i$ as a function that maps the number of inputs bits with value one to an output in $\{0, 1\}$.
	We can approximately count the number of solutions to $\mathcal{C}$ as follows.
	
	We split the $n$ inputs bits into $s$ groups, each with $n / s$ input bits. 
	Then for each group, we enumerate all the $2^{n / s}$ possible assignments to the $n / s$ input bits.
	We create a vector in $\{0,\dotsc,n / s\}^{m}$ for each possible assignment, where the $i$-th entry is simply the number of ones in the assignment which is an input bit to the $i$-th symmetric gate $X_i$.
	Now, the number of solutions to the circuit $\mathcal{C}$, is simply the same as Problem \ref{pro:count_sym}, by setting 
	$$
	f_i(u_{1, i}, u_{2, i}, \ldots, u_{k, i}) = X_i(u_{1, i} + u_{2, i} + \ldots + u_{k, i}).
	$$
	The total time complexity would be $2^{n / s} \cdot 2^{O\left( m^{1/2+o(1)} \cdot s (\log m +  \log (n / s))\right)}$, with an additive $\varepsilon \cdot 2^n$ error.
	Setting $s = \frac{n^{1/2}}{m^{1/4+o(1)} \sqrt{\log n+  \log m}}$, the final time complexity would be $2^{O(n^{1/2} m ^{1/4+o(1)} \sqrt{\log n + \log m})}$.
\end{proofof}
\section{Algorithms from Arthur-Merlin Communication Protocols}

In this section, we prove our algorithmic applications of $\AMcc$ protocols. We first show faster $\AMcc$ protocols for $F$ imply faster $F\SATPAIR$ algorithms.


\begin{reminder}{Theorem~\ref{thm:alg_amcc}}
	Let $F : \domainX \times \domainY \to \{0,1,\perp\}$ be a partial function. Suppose there is a computationally efficient $\AMcc$ protocol for $F$ with communication complexity $T$ and error $\epsilon$. Then for $n$ such that $2^{T} \le (\sqrt{\varepsilon}n)^{0.1}$, there is an $O\left( \epsilon n^2 \cdot \polylog(n)  + n \cdot 2^T\right)$ time randomized algorithm for $F\SATPAIR_n$.
\end{reminder}
\begin{proof}
	We first assume $n <\frac{1}{10 \sqrt{\varepsilon}}$. 
	After drawing a random challenge, for each element $x \in \domainX$ and $y \in \domainY$ we construct a Boolean vector $\alg_{\domainX}(x)$ and $\alg_{\domainY}(y)$ of length $2^T$, where each the $i$-th entry indicates whether Alice (Bob) accepts when receiving the proof $i$ from Merlin. Here we regard $i$ as a Boolean string of length $T$ via a natural bijection between $[2^T]$ and $\{0,1\}^T$.
	
	According to the guarantee of an $\AMcc$ protocol, for each $x \in \domainX$ and $y \in \domainY$, when $F(x, y) = 1$, with probability at least $1 - \varepsilon$ over the random challenge, we have $\langle \alg_{\domainX}(x), \alg_{\domainY}(y) \rangle > 0$, and when $F(a, b) = 0$ we have $\langle \alg_{\domainX}(x), \alg_{\domainY}(y) \rangle > 0$ with probability at most $\varepsilon$ over the random challenge.
	
	By a union bound on all pairs of elements in $A$ and $B$, we have with probability at least $0.99$, for all $a \in A$ and $b \in B$, $\langle \alg_A(a), \alg_B(b) \rangle > 0$ if and only if $F(a, b) = 1$.
	Consequently, with probability at least $0.99$,
	$$
	\left \langle \sum_{a \in A} \alg_A(a), \sum_{b \in B} \alg_B(b) \right \rangle > 0
	$$
	if and only if there exist $a \in A$ and $b \in B$ such that $F(a, b) = 1$.
	
	For general $n = |A| = |B|$, we first split $A$ and $B$ into $O(\sqrt{\varepsilon} n)$ groups, each with at most $\frac{1}{10 \sqrt{\varepsilon}}$ elements. 
	I.e., we assume $A = \bigcup_{i = 1}^{g} A_i$ and $B = \bigcup_{i=1}^{g} B_i$ such that $g = O(\sqrt{\varepsilon} n)$ and $|A_i|, |B_i| \le \frac{1}{10 \sqrt{\varepsilon}}$.
	For each $i, j \in [g]$, we use the algorithm mentioned above to calculate two vectors $\sum_{a \in A_i} \alg_A(a)$ and $\sum_{b \in B_j} \alg_B(b)$.
	We write $\mathcal{M}_{A} \in \mathbb{R}^{2^T \times g}$ to denote the matrix 
	$$\left[\sum_{a \in A_1} \alg_A(a), \sum_{a \in A_2} \alg_A(a), \cdots, \sum_{a \in A_g} \alg_A(a)\right]$$
	and $\mathcal{M}_{B} \in \mathbb{R}^{2^T \times g}$ to denote the matrix
	$$\left[\sum_{b \in B_1} \alg_B(b), \sum_{b \in B_2} \alg_B(b), \cdots, \sum_{b \in B_g} \alg_B(b)\right].$$
	Since $2^{T} \le (\sqrt{\varepsilon}n)^{0.1} \le O(g^{0.1})$, we can use the rectangular matrix multiplication algorithm in Theorem \ref{theo:fast-matrix-mult-polylog} to calculate $\mathcal{M}_A^T \mathcal{M}_B$ in $O(g^2 \cdot \polylog(g)) = O(\varepsilon n^2 \polylog(n))$ time. 
	We repeat this procedure for $O(\log n)$ times.
	For any $i, j \in [g]$, by standard concentration bounds, with probability at least $1 - \poly(n)$, there exist $a \in A_i$ and $b \in B_j$ such that $F(a, b) = 1$ if and only if the majority of the $O(\log n)$ repetitions satisfies $(\mathcal{M}_A^T \mathcal{M}_B)_{i, j} > 0$.
	Applying union bound again over all $i, j \in [g]$, we can now solve $F\SATPAIR_n$ by checking whether there exist $i$ and $j$ such that the majority of the $O(\log n)$ repetitions satisfies $(\mathcal{M}_A^T \mathcal{M}_B)_{i, j} > 0$.
	The overall algorithm runs in $O(\varepsilon n^2 \cdot \polylog(n))$ time and succeeds with high probability, as stated.
\end{proof}

\subsection{A New Algorithm for Approximate $\MaxIP$}
\label{sec:amcc-max-ip}

The first application of Theorem~\ref{thm:alg_amcc} is to use the Goldwasser-Sisper $\AM$ protocol~\cite{GS89} for approximating set size to obtain a new algorithm for approximating $\MaxIP$.

We first need the following adaption of~\cite{GS89}, which has a better dependence on $\epsilon$.

\begin{reminder}{Lemma~\ref{lm:AMcc-MGapIP}}
	There is an $\AMcc$ protocol for $\MGapIP_d$ with error $\epsilon$ and communication complexity
	\[
	\log \binom{d}{\le O(\log \epsilon^{-1})}.
	\]
\end{reminder}
\begin{proof}
	Recall that $x,y \in \{0,1\}^d$ are the inputs hold by Alice and Bob respectively.
	
	Let $X = \{i \mid x_i = 1\}$ and $Y = \{i \mid y_i = 1\}$.
	The problem is equivalent to determine whether $|X \cap Y| \ge 2 \tau$ or $|X \cap Y| \le \tau$.
	Here we give an $\AMcc$ communication protocol with error $\epsilon$ and communication complexity
	$
	\log \binom{d}{\le O(\log \epsilon^{-1})}.
	$
	
	In the communication protocol, Alice and Bob first generate i.i.d. random variables $p_i \sim \mathrm{Pois}(k / \tau)$ for each $i \in [d]$, for a parameter $k = \Theta(\log(1 / \varepsilon))$ to be determined later. 
	When $|X \cap Y| \ge 2 \tau$, Merlin finds an arbitrary set $S \subseteq X \cap Y$ of size $O(k)$ such that $\sum_{i \in S} p_i \ge 1.6k$, and then sends it to Alice and Bob.
	Upon receiving $S$, Alice (Bob) decides to accept or reject by checking whether $S \subseteq X$ ($S \subseteq Y$) and $\sum_{i \in S} p_i \ge 1.6k$. 
	The communication complexity of this protocol is upper bounded by $\log \binom{d}{\le O(\log \epsilon^{-1})}$ since $|S| \le 1.6k = O(\log (1 / \varepsilon))$.
	
	Now we prove the correctness by considering the following two cases.
	\begin{description}
		
		\item[Case 1: $|X\cap Y| \ge 2 \tau$. ]
		For this case, we have
		$$\sum_{i \in X \cap Y} p_i \sim \mathrm{Pois}(|X \cap Y| \cdot k / \tau) \succeq \mathrm{Pois}(2k).$$
		Thus by Lemma \ref{lem:tail_pois}, with probability at least $1 - e^{\Omega(k)}$,  $\sum_{i \in X \cap Y} p_i \ge 1.6k$.
		Since for each $p_i > 0$ we must have $p_i \ge 1$,  with probability at least $1 - e^{\Omega(k)}$, there exists a set $S \subseteq X \cap Y$ of size $O(k)$ such that $\sum_{i \in S} p_i \ge 1.6k$.
		
		\item[Case 2: $|X\cap Y| \le \tau$. ] 
		For this case, we have
		$$\sum_{i \in X \cap Y} p_i \sim \mathrm{Pois}(|X \cap Y| \cdot k / \tau) \preceq \mathrm{Pois}(k).$$
		Thus by Lemma \ref{lem:tail_pois}, with probability at least $1 - e^{\Omega(k)}$,  $\sum_{i \in X \cap Y} p_i \le 1.2k$.
		When both Alice and Bob accept, it must be the case that $S \subseteq X \cap Y$ and $\sum_{i \in S} p_i \ge 1.6k$.
		However when $|X\cap Y| \le \tau$, with probability at least $1 - e^{\Omega(k)}$,  $\sum_{i \in X \cap Y} p_i \le 1.2k$.
		Thus there is no $S$ such that both Alice and Bob accept, with probability at least $1 - e^{-\Omega(k)}$.
		
	\end{description}
	The lemma follows by setting $k$ to be a large enough multiple of $\log(1 / \varepsilon)$.
	
\end{proof}

By Theorem~\ref{thm:alg_amcc} and the above lemma, Corollary~\ref{cor:algo-apx-max-ip} follows from a binary search over $\tau$.

\begin{reminder}{Corollary~\ref{cor:algo-apx-max-ip}}
	There is an algorithm for computing a $2$-approximation to $\MaxIP_{n,c \log n}$, which runs in $n^{2 - 1/O(\log c)}$ time.
\end{reminder}

\subsection{Consequence of Fast $\AMcc$ Protocols for $\LCS$ and $\Edit$}

Next we discuss the consequences of $\LCS$ and $\Edit$ having efficient $\AMcc$ protocols. We first introduce some classical notations about the communication complexity classes (see~\cite{BFS86,GPW18}). We say a function family $F = \{F_d : \{0,1\}^{d} \times \{0,1\}^d \to \{0,1,\bot\} \}_{d \in \mathbb{N}}$ is in $\AMcc$ if $\AMcc(F_d) = \polylog(d)$ (we use $\AMcc(F_d)$ to denote the $\AMcc$ communication complexity for $F_d$ with error $1/3$). 

We also say $F$ is $\AMcc_{\sf eff}$ if for all $d \in \mathbb{N}$, $F_d$ admits a computationally efficient $\AMcc$ protocol with error $1/3$ and complexity $\polylog(d)$.

Now we prove the consequence of a function family $F \in \EAMcc$. 

\begin{corollary}[Consequence of $F \in \EAMcc$]\label{thm:alg_eff_am}
	Let $F = \{F_d : \{0,1\}^{d} \times \{0,1\}^d \to \{0,1,\bot\} \}_{d \in \mathbb{N}}$ be a partial function family. If $F \in \EAMcc$, then there is an $n^{2} / 2^{\log^{1-\delta} n}$ time algorithm for $F_{\polylog(n)}\SATPAIR_n$, for any constant $\delta > 0$.
\end{corollary}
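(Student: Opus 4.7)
The plan is to combine the hypothesis $F \in \EAMcc$ (a computationally efficient $\AMcc$ protocol of complexity $\polylog(d)$ with constant error) with Theorem~\ref{thm:alg_amcc} after first boosting the error down to a suitable inverse super-polylogarithmic value via parallel repetition. The instance size relevant to us is $d = \polylog(n)$, so the baseline protocol has complexity $T_0 = \polylog(\polylog n) = \polylog\log n$ and error $1/3$.

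First, I would pick an auxiliary parameter $\delta' \in (0,\delta)$, say $\delta' = \delta/2$, and set a target error
\[
\epsilon := 2^{-\log^{1-\delta'} n}.
\]
Standard parallel repetition of the $\AMcc$ protocol $k = \Theta(\log^{1-\delta'} n)$ times, followed by a majority vote of Alice's and Bob's accept/reject bits, drives the error below $\epsilon$ while keeping the protocol computationally efficient. The resulting $\AMcc$ protocol has communication complexity
\[
T := O(k \cdot T_0) = O\bigl(\log^{1-\delta'} n \cdot \polylog\log n\bigr) = o(\log n),
\]
since $\log^{1-\delta'} n = o(\log n)$ and the $\polylog\log n$ factor is absorbed.

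Next, I would verify the hypothesis of Theorem~\ref{thm:alg_amcc}, namely $2^{T} \le (\sqrt{\epsilon}\, n)^{0.1}$. We have $(\sqrt{\epsilon}\, n)^{0.1} = 2^{0.1 \log n - 0.05 \log^{1-\delta'} n} = 2^{(0.1 - o(1))\log n}$, while $2^{T} = 2^{o(\log n)}$, so the condition holds for all sufficiently large $n$. Applying Theorem~\ref{thm:alg_amcc} then yields an algorithm for $F_{\polylog(n)}\SATPAIR_n$ running in time
\[
O\bigl(\epsilon n^2 \polylog(n) + n \cdot 2^T\bigr) = O\bigl(n^2 \cdot 2^{-\log^{1-\delta'} n} \polylog(n) + n^{1+o(1)}\bigr).
\]
Since $\delta' < \delta$, we have $\log^{1-\delta'} n - \log^{1-\delta} n \to \infty$, so this polylogarithmic overhead is easily swallowed and the whole bound is $O(n^2 / 2^{\log^{1-\delta} n})$, as required.

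There is no genuine obstacle here; the only thing to watch is the book-keeping of parameters — specifically that the amplification cost $k \cdot T_0$ times the base protocol stays $o(\log n)$ so that the additive $n \cdot 2^T$ term is dominated by $n^2 / 2^{\log^{1-\delta} n}$, and that the slack between $\delta'$ and $\delta$ is enough to absorb the $\polylog(n)$ factor coming from the rectangular matrix multiplication step inside Theorem~\ref{thm:alg_amcc}. Picking $\delta' = \delta/2$ (or any value strictly between $0$ and $\delta$) handles both constraints simultaneously.
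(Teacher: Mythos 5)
Your proposal is correct and follows essentially the same route as the paper: parallel-repeat the base $\AMcc$ protocol to drive the error down to $\varepsilon = 2^{-\log^{1-\delta/2} n}$, verify the hypothesis of Theorem~\ref{thm:alg_amcc}, and absorb the $\polylog(n)$ overhead by the slack between $\delta/2$ and $\delta$. Your write-up is a bit more explicit about the intermediate parameter $T$ and the check $2^{T}\le(\sqrt{\varepsilon}\,n)^{0.1}$, but the argument is the same one the paper uses.
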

\begin{proof}
	By standard repetition arguments, there exists an $\AMcc$ communication protocol with communication complexity $\polylog(d) \log(1 / \varepsilon)$ and failure probability $1 - \varepsilon$.
	In order to invoke Theorem \ref{thm:alg_amcc} we need to make sure 
	$$
	2^{\polylog(d) \log (1 / \varepsilon)} = 2^{\polyloglog(n) \log (1 / \varepsilon)} < n^{0.1},
	$$
	and thus we can set $\varepsilon = 2^{-\log^{1-\delta/2} n}$.
	For this choice of $\varepsilon$ we will then get an $n^{2} / 2^{\log^{1-\delta/2} n} \cdot \polylog (n) \le n^{2} / 2^{\log^{1-\delta} n}$ time algorithm for $F_{\polylog(n)}\SATPAIR_n$, which completes the proof.
\end{proof}

Recall that in $\DistLCS_{d}$ ($\DistEdit_d$), Alice and Bob hold strings $x,y \in \{0,1\}^d$ respectively, and are given an integer $\tau$. Their goal is to decide whether $\LCS(x,y) \ge \tau$ ($\Edit(x,y) \ge \tau$). Now we are ready to prove Theorem~\ref{thm:consequence-LCS}.

\begin{reminder}{Theorem~\ref{thm:consequence-LCS}}
	If $\DistLCS_d$ admits computationally efficient $\AMcc$ protocols with complexity $\polylog(d)$, then Formula-$\SAT$ of polynomial-size formulas admits an $2^{n - n^{1 - \delta}}$ time algorithm for any constant $\delta > 0$. The same holds for $\DistEdit$ in place of $\DistLCS$.
\end{reminder}

We will only discuss $\DistLCS$ here, the proof for $\DistEdit$ follows exactly the same. We first introduce the reduction from \cite{AHVW16} (see also \cite{AB18}).
\begin{theorem}[Implicit in \cite{AHVW16}]\label{thm:lcs_red}
	For a given formula $\mathcal{F}$ with $n$ input variables and size $s$, let $a \in \{0, 1\}^{n / 2}$ be an assignment to first $n / 2$ variables in $\mathcal{F}$ and $b \in \{0, 1\}^{n / 2}$ be an assignment to last $n / 2$ variables in $\mathcal{F}$.
	There exists an algorithm $\alg$ which outputs $G(a) \in \{0, 1\}^{\poly(s)}$ and $\overline{G}(b) \in \{0, 1\}^{\poly(s)}$ such that for a fixed integer $Y$ ($Y$ depends on $\mathcal{F}$),
	\begin{itemize}
		\item $\LCS(G(a), \overline{G}(b)) = Y$ if $a \odot b$ is a satisfying assignment to $\mathcal{F}$;
		\item $\LCS(G(a), \overline{G}(b)) \le Y - 1$ if $a \odot b$ is not a satisfying assignment to $\mathcal{F}$.
	\end{itemize}
\end{theorem}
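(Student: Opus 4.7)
The plan is to construct $G(a)$ and $\overline{G}(b)$ by a recursive gadget construction mirroring the syntax tree of $\mathcal{F}$, following the alignment-gadget framework of Abboud-Backurs-Vassilevska Williams as extended in [AHVW16]. I will associate with every subformula $\phi$ of $\mathcal{F}$ a pair of strings $(A_\phi[a], B_\phi[b])$, where $A_\phi$ depends only on the bits of $a$ corresponding to Alice's variables in $\phi$ and $B_\phi$ only on Bob's, together with a fixed integer $Y_\phi$ depending only on the shape of $\phi$, such that $\LCS(A_\phi[a], B_\phi[b]) = Y_\phi$ if $\phi$ evaluates to true under $(a,b)$ and $\LCS(A_\phi[a], B_\phi[b]) \le Y_\phi - 1$ otherwise. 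Setting $G = A_\mathcal{F}$, $\overline{G} = B_\mathcal{F}$, and $Y = Y_\mathcal{F}$ then yields the theorem.

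First I would establish the leaf gadgets. For a variable $x_i$ belonging to Alice's half, design two short encoding strings $\mathsf{T},\mathsf{F}$ of equal length $\ell$ with $\LCS(\mathsf{T},\mathsf{T})=\LCS(\mathsf{F},\mathsf{F})=\ell$ and $\LCS(\mathsf{T},\mathsf{F})=\ell-1$; let Alice's leaf string be $\mathsf{T}$ or $\mathsf{F}$ according to $a_i$, and let Bob's matching leaf string be a fixed ``query'' that matches $\mathsf{T}$ if the literal is positive and $\mathsf{F}$ if negated. Variables in Bob's half are handled symmetrically, with Alice's side being the fixed query string. Each literal therefore contributes either $\ell$ or $\ell-1$ to the LCS according to whether it is satisfied.

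Second, I would construct composition gadgets for AND and OR. An AND-gadget takes $(A_1,B_1),(A_2,B_2)$ and forms concatenations $A_\wedge = A_1 \cdot \sigma \cdot A_2$ and $B_\wedge = B_1 \cdot \sigma \cdot B_2$ separated by a long guard block $\sigma$ of fresh symbols, so that any optimal alignment must match the guards in place, forcing $\LCS(A_\wedge,B_\wedge) = \LCS(A_1,B_1) + |\sigma| + \LCS(A_2,B_2)$ and hence $Y_\wedge = Y_1+Y_2+|\sigma|$ with an additive gap of at least $1$ preserved from either child's failure. An OR-gadget uses the standard ``max'' alignment pattern: Alice's string offers both branches packaged with matching filler, and Bob's string contains one branch surrounded by filler, tuned so that an optimal LCS chooses the best branch to match and yields $\max(\LCS_1,\LCS_2)$ plus a fixed offset. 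NOT is absorbed into the leaf polarity. The construction composes because at every gate Alice's part is built only from Alice's parts of the children, and similarly for Bob.

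The technical heart of the argument, and the main obstacle, is the inductive verification that the exact-gap-of-one is preserved under composition, i.e., that no alignment can ``steal'' matches from a guard or from the other branch to compensate for a false subformula. This is handled by choosing the guard and padding lengths to strictly dominate any cross-branch misalignment, which is the content of the alignment-gadget composition lemma in [AHVW16]. Because each composition blows up lengths by only a constant factor and the formula has size $s$, the final strings satisfy $|G(a)|,|\overline{G}(b)| = \poly(s)$ and are clearly computable in $\poly(s)$ time from $a$ and $b$ respectively. Finally, the constant-size internal alphabet is mapped back to $\{0,1\}$ by a standard block encoding that scales both $Y$ and the gap by the same uniform factor, preserving the theorem's conclusion over $\{0,1\}$.
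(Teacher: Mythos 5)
The paper offers no proof of this theorem --- it is imported wholesale as ``implicit in [AHVW16]'' --- so the only meaningful comparison is with the construction in that cited work. Your high-level plan (alignment gadgets composed along the structure of $\mathcal{F}$, leaf gadgets encoding literals, guarded concatenation for conjunction, a selection gadget for disjunction) is the right family of techniques, but two steps in your sketch are genuinely incomplete rather than merely deferred. The first is the OR gadget, which is underspecified precisely where the construction is hard. For ``Alice offers both branches and the LCS selects the better one'' to encode Boolean OR with an exact gap of one, the two children's gadget pairs must be \emph{normalized}: they need identical lengths and identical target values $Y_1 = Y_2$, and the filler must rule out alignments that mix pieces of both branches or trade guard matches for branch matches. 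Without $Y_1 = Y_2$, a \emph{false} branch with the larger target value can dominate the maximum and the gadget no longer distinguishes ``some child true'' from ``both children false.'' This normalization (the normalized-vector-gadget machinery of Abboud--Backurs--Vassilevska Williams and [AHVW16]) is the crux of the reduction, not a detail absorbed by a composition lemma, and your induction never arranges for it.

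The second gap is the length accounting. The guard $\sigma$ at a gate must be long enough to dominate the subgadgets it separates, so each composition multiplies the total length by a constant factor $c>1$; over a formula of depth $D$ this compounds to $c^{D}$, which is superpolynomial for unbalanced formulas of depth $\omega(\log s)$. Your inference ``constant factor per composition, size $s$, hence $\poly(s)$'' is therefore invalid as stated. The standard fixes are either to first rebalance $\mathcal{F}$ to depth $O(\log s)$ with polynomial size increase, or to follow the route [AHVW16] actually takes: convert the formula to a constant-width branching program of length $\poly(s)$ and reduce from branching-program satisfiability, where gadget lengths are governed by the program's width and length rather than by a per-gate recursion. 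Either route yields $|G(a)|, |\overline{G}(b)| = \poly(s)$, but one of them must be invoked explicitly. The remaining ingredients of your sketch --- the split of variables so that $G$ depends only on $a$ and $\overline{G}$ only on $b$, absorbing negations into leaf polarity, and the block encoding down to a binary alphabet --- are consistent with the cited construction.
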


\begin{proofof}{Theorem~\ref{thm:consequence-LCS}}
	For a given formula $\mathcal{F}$ of size $s = \poly(n)$, we first enumerate all $2^{n / 2}$ possible assignments to first $n / 2$ variables in $\mathcal{F}$ and all possible assignments to last $n / 2$ variables in $\mathcal{F}$.
	For each $a \in \{0, 1\}^{n / 2}$ corresponding to an assignment to first $n / 2$ variables in $\mathcal{F}$ and $b \in \{0, 1\}^{n / 2}$ corresponding to an assignment to last $n / 2$ variables in $\mathcal{F}$, we calculate $G(a)$ and $\overline{G}(b)$ using Theorem \ref{thm:lcs_red}. Note that all $G(a)$'s and $\overline{G}(b)$'s have length $\poly(s) = \poly(n)$. 
	
	Now suppose $\DistLCS \in \EAMcc$ for $\tau = Y$. Applying Corollary \ref{thm:alg_eff_am} with all possible $G(a)$'s and $\overline{G}(b)$'s, we can solve Formula-$\SAT$ in $2^{n -n^{1 - \delta}}$ time for any constant $\delta > 0$.
\end{proofof}

	\section*{Open Problems and Future Directions}

Here we list a few interesting open problems stemming from this work.

\begin{itemize}
	\item In this work, we applied $\BQP^\cc$ and $\AM^\cc$ protocols for the algorithmic purpose. Can we find algorithmic applications of other communication protocols? 
	
	\item Or less ambitiously, can we find more interesting algorithmic applications with other known $\BQP^\cc$ or $\AMcc$ protocols? (this could even be a motivation to find \emph{new} $\BQP^\cc$ or $\AMcc$ protocols!)
	
	\item Our additive approximation algorithm for $\countOV$ runs in near-linear time when $d = o(\log^2 n)$. Is it possible to design a near-linear time algorithm for $d = n^{o(1)}$ dimensions? Note that by a simple Chernoff bound, there is a deterministic $n^{1+o(1)}$ time algorithm with $n^{1+o(1)}$ advice for additive approximations to $\countOV$. So there is a hope to construct such an algorithm. 
	
	\item Our results show that under the hypothesis of~\cite{AB18}, $\DistLCS$ and $\DistEdit$ do not admit computationally efficient $\AMcc$ or $\PHcc$ protocols. Can one prove that \emph{unconditionally}?
	
	\item Is it possible to connect these algorithms from $\AMcc$ or $\PHcc$ protocols to R. Williams' algorithmic approach to circuit lower bounds~\cite{Wil13,Wil14ACC,Murray-Williams17}? In particular, can one show \emph{unconditionally} that, there is a function $f$ in $\textsf{NEXP}$ (or even $\textsf{NTIME}[2^{\polylog(n)}]$), which doesn't admit $\polylog(n)$ complexity $\AMcc$ or $\PHcc$ protocols?
\end{itemize}
	
	\section*{Acknowledgments}
	
	The first author is grateful to Josh Alman, Chi-Ning Chou, Mika G\"{o}\"{o}s, and Ryan Williams for helpful discussions during this work. We are grateful to anonymous reviewers for many helpful and inspiring comments on this paper. In particular, we thank one anonymous reviewer for pointing out that Theorem~\ref{thm:count_sparse_ov} can be improved using the polynomial method.
	
	\bibliographystyle{alpha}
	\bibliography{team}

\newcommand{\etalchar}[1]{$^{#1}$}
\begin{thebibliography}{BCdWZ99}

\bibitem[AA05]{AA05}
Scott Aaronson and Andris Ambainis.
\newblock Quantum search of spatial regions.
\newblock {\em Theory of Computing}, 1(1):47--79, 2005.

\bibitem[AB18]{AB18}
Amir Abboud and Karl Bringmann.
\newblock Tighter connections between {Formula-SAT} and shaving logs.
\newblock In {\em 45th International Colloquium on Automata, Languages, and
  Programming, {ICALP} 2018, July 9-13, 2018, Prague, Czech Republic}, pages
  8:1--8:18, 2018.

\bibitem[ACR{\etalchar{+}}10]{ambainis2010any}
Andris Ambainis, Andrew~M Childs, Ben~W Reichardt, Robert {\v{S}}palek, and
  Shengyu Zhang.
\newblock Any {AND-OR} formula of size {$N$} can be evaluated in time {$N^{1 /
  2 + o(1)}$} on a quantum computer.
\newblock {\em SIAM Journal on Computing}, 39(6):2513--2530, 2010.

\bibitem[ACW16]{ACW16}
Josh Alman, Timothy~M. Chan, and R.~Ryan Williams.
\newblock Polynomial representations of threshold functions and algorithmic
  applications.
\newblock In {\em {IEEE} 57th Annual Symposium on Foundations of Computer
  Science, {FOCS} 2016, 9-11 October 2016, Hyatt Regency, New Brunswick, New
  Jersey, {USA}}, pages 467--476, 2016.

\bibitem[AHVW16]{AHVW16}
Amir Abboud, Thomas~Dueholm Hansen, Virginia {Vassilevska Williams}, and Ryan
  Williams.
\newblock Simulating branching programs with edit distance and friends: or: a
  polylog shaved is a lower bound made.
\newblock In {\em Proc.\ of the 48th STOC}, pages 375--388, 2016.

\bibitem[Alm19]{Alman19LightBulb}
Josh Alman.
\newblock An illuminating algorithm for the light bulb problem.
\newblock In {\em SOSA}, 2019.

\bibitem[Alo09]{alon2009perturbed}
Noga Alon.
\newblock Perturbed identity matrices have high rank: Proof and applications.
\newblock {\em Combinatorics, Probability and Computing}, 18(1-2):3--15, 2009.

\bibitem[ALS14]{alon2014cover}
Noga Alon, Troy Lee, and Adi Shraibman.
\newblock The cover number of a matrix and its algorithmic applications.
\newblock In {\em LIPIcs-Leibniz International Proceedings in Informatics},
  volume~28. Schloss Dagstuhl-Leibniz-Zentrum fuer Informatik, 2014.

\bibitem[ALSV13]{alon2013approximate}
Noga Alon, Troy Lee, Adi Shraibman, and Santosh Vempala.
\newblock The approximate rank of a matrix and its algorithmic applications:
  approximate rank.
\newblock In {\em Proceedings of the forty-fifth annual ACM symposium on Theory
  of computing}, pages 675--684. ACM, 2013.

\bibitem[Amb07]{ambainis2007quantum}
Andris Ambainis.
\newblock Quantum walk algorithm for element distinctness.
\newblock {\em SIAM Journal on Computing}, 37(1):210--239, 2007.

\bibitem[AR18]{AR18}
Amir Abboud and Aviad Rubinstein.
\newblock Fast and deterministic constant factor approximation algorithms for
  {LCS} imply new circuit lower bounds.
\newblock In {\em 9th Innovations in Theoretical Computer Science Conference,
  {ITCS} 2018, January 11-14, 2018, Cambridge, MA, {USA}}, pages 35:1--35:14,
  2018.

\bibitem[ARW17]{ARW17}
Amir Abboud, Aviad Rubinstein, and R.~Ryan Williams.
\newblock Distributed {PCP} theorems for hardness of approximation in {P}.
\newblock In {\em 58th {IEEE} Annual Symposium on Foundations of Computer
  Science, {FOCS} 2017, Berkeley, CA, USA, October 15-17, 2017}, pages 25--36,
  2017.

\bibitem[AST{\etalchar{+}}03]{ASTVW03}
Andris Ambainis, Leonard~J. Schulman, Amnon Ta{-}Shma, Umesh~V. Vazirani, and
  Avi Wigderson.
\newblock The quantum communication complexity of sampling.
\newblock {\em {SIAM} J. Comput.}, 32(6):1570--1585, 2003.

\bibitem[AW15]{AW15}
Josh Alman and Ryan Williams.
\newblock Probabilistic polynomials and hamming nearest neighbors.
\newblock In {\em {IEEE} 56th Annual Symposium on Foundations of Computer
  Science, {FOCS} 2015, Berkeley, CA, USA, 17-20 October, 2015}, pages
  136--150, 2015.

\bibitem[AW17]{AW17}
Josh Alman and R.~Ryan Williams.
\newblock Probabilistic rank and matrix rigidity.
\newblock In {\em Proceedings of the 49th Annual {ACM} {SIGACT} Symposium on
  Theory of Computing, {STOC} 2017, Montreal, QC, Canada, June 19-23, 2017},
  pages 641--652, 2017.

\bibitem[AWY15]{AWY15}
Amir Abboud, Richard~Ryan Williams, and Huacheng Yu.
\newblock More applications of the polynomial method to algorithm design.
\newblock In {\em Proceedings of the Twenty-Sixth Annual {ACM-SIAM} Symposium
  on Discrete Algorithms, {SODA} 2015, San Diego, CA, USA, January 4-6, 2015},
  pages 218--230, 2015.

\bibitem[BCdWZ99]{BCWZ99}
Harry Buhrman, Richard Cleve, Ronald de~Wolf, and Christof Zalka.
\newblock Bounds for small-error and zero-error quantum algorithms.
\newblock In {\em 40th Annual Symposium on Foundations of Computer Science,
  {FOCS} '99, 17-18 October, 1999, New York, NY, {USA}}, pages 358--368, 1999.

\bibitem[BCW98a]{BCW98-quantum_communication}
Harry Buhrman, Richard Cleve, and Avi Wigderson.
\newblock Quantum vs. classical communication and computation.
\newblock In {\em Proc.\ of the Thirtieth Annual {ACM} Symposium on the Theory
  of Computing}, pages 63--68, 1998.

\bibitem[BCW98b]{buhrman1998quantum}
Harry Buhrman, Richard Cleve, and Avi Wigderson.
\newblock Quantum vs. classical communication and computation.
\newblock In {\em Proceedings of the thirtieth annual ACM symposium on Theory
  of computing}, pages 63--68. ACM, 1998.

\bibitem[BdW01]{buhrman2001communication}
Harry Buhrman and Ronald de~Wolf.
\newblock Communication complexity lower bounds by polynomials.
\newblock In {\em Computational Complexity, 16th Annual IEEE Conference on,
  2001.}, pages 120--130. IEEE, 2001.

\bibitem[BDYW11]{barak2011rank}
Boaz Barak, Zeev Dvir, Amir Yehudayoff, and Avi Wigderson.
\newblock Rank bounds for design matrices with applications to combinatorial
  geometry and locally correctable codes.
\newblock In {\em Proceedings of the forty-third annual ACM symposium on Theory
  of computing}, pages 519--528. ACM, 2011.

\bibitem[BFS86]{BFS86}
L{\'{a}}szl{\'{o}} Babai, Peter Frankl, and Janos Simon.
\newblock Complexity classes in communication complexity theory (preliminary
  version).
\newblock In {\em 27th Annual Symposium on Foundations of Computer Science,
  Toronto, Canada, 27-29 October 1986}, pages 337--347, 1986.

\bibitem[CGL{\etalchar{+}}19]{CGLRR18Meets}
Lijie Chen, Shafi Goldwasser, Kaifeng Lyu, Guy Rothblum, and Aviad Rubinstein.
\newblock Fine-grained complexity meets {IP} = {PSPACE}.
\newblock In {\em SODA}, 2019.

\bibitem[Che18]{Che18}
Lijie Chen.
\newblock On the hardness of approximate and exact (bichromatic) maximum inner
  product.
\newblock In {\em 33rd Computational Complexity Conference, {CCC} 2018, June
  22-24, 2018, San Diego, CA, {USA}}, pages 14:1--14:45, 2018.

\bibitem[Cop82]{coppersmith1982rapid}
Don Coppersmith.
\newblock Rapid multiplication of rectangular matrices.
\newblock {\em SIAM Journal on Computing}, 11(3):467--471, 1982.

\bibitem[CW16]{CW16}
Timothy~M. Chan and Ryan Williams.
\newblock {Deterministic APSP, Orthogonal Vectors, and More: Quickly
  Derandomizing Razborov-Smolensky}.
\newblock In {\em Proceedings of the Twenty-Seventh Annual {ACM-SIAM} Symposium
  on Discrete Algorithms, {SODA} 2016, Arlington, VA, USA, January 10-12,
  2016}, pages 1246--1255, 2016.

\bibitem[CW19]{CW19}
Lijie Chen and Ryan Williams.
\newblock An equivalence class for orthogonal vectors.
\newblock In {\em SODA}, 2019.

\bibitem[dW08]{de2008note}
Ronald de~Wolf.
\newblock A note on quantum algorithms and the minimal degree of epsilon-error
  polynomials for symmetric functions.
\newblock {\em arXiv preprint arXiv:0802.1816}, 2008.

\bibitem[GOWZ10]{GOWZ10}
Parikshit Gopalan, Ryan O'Donnell, Yi~Wu, and David Zuckerman.
\newblock Fooling functions of halfspaces under product distributions.
\newblock In {\em Proceedings of the 25th Annual {IEEE} Conference on
  Computational Complexity, {CCC} 2010, Cambridge, Massachusetts, USA, June
  9-12, 2010}, pages 223--234, 2010.

\bibitem[GPW16]{GoosPW16}
Mika G{\"{o}}{\"{o}}s, Toniann Pitassi, and Thomas Watson.
\newblock Zero-information protocols and unambiguity in arthur-merlin
  communication.
\newblock {\em Algorithmica}, 76(3):684--719, 2016.

\bibitem[GPW18]{GPW18}
Mika G{\"{o}}{\"{o}}s, Toniann Pitassi, and Thomas Watson.
\newblock The landscape of communication complexity classes.
\newblock {\em Computational Complexity}, 27(2):245--304, 2018.

\bibitem[GS89]{GS89}
Shafi Goldwasser and Michael Sipser.
\newblock Private coins versus public coins in interactive proof systems.
\newblock {\em Advances in Computing Research}, 5:73--90, 1989.

\bibitem[GU18]{gall2018improved}
Francois~Le Gall and Florent Urrutia.
\newblock Improved rectangular matrix multiplication using powers of the
  coppersmith-winograd tensor.
\newblock In {\em Proceedings of the Twenty-Ninth Annual ACM-SIAM Symposium on
  Discrete Algorithms}, pages 1029--1046. SIAM, 2018.

\bibitem[HKM12]{HKM12}
Prahladh Harsha, Adam~R. Klivans, and Raghu Meka.
\newblock An invariance principle for polytopes.
\newblock {\em J. {ACM}}, 59(6):29:1--29:25, 2012.

\bibitem[KLM18]{CLM18}
{Karthik {C. S.}}, Bundit Laekhanukit, and Pasin Manurangsi.
\newblock On the parameterized complexity of approximating dominating set.
\newblock In {\em Proceedings of the 50th Annual {ACM} {SIGACT} Symposium on
  Theory of Computing, {STOC} 2018, Los Angeles, CA, USA, June 25-29, 2018},
  pages 1283--1296, 2018.

\bibitem[Kre95]{kremer1995quantum}
Ilan Kremer.
\newblock {\em Quantum communication}.
\newblock Citeseer, 1995.

\bibitem[LPT{\etalchar{+}}17]{LPTWY17}
Daniel Lokshtanov, Ramamohan Paturi, Suguru Tamaki, R.~Ryan Williams, and
  Huacheng Yu.
\newblock Beating brute force for systems of polynomial equations over finite
  fields.
\newblock In {\em Proceedings of the Twenty-Eighth Annual {ACM-SIAM} Symposium
  on Discrete Algorithms, {SODA} 2017, Barcelona, Spain, Hotel Porta Fira,
  January 16-19}, pages 2190--2202, 2017.

\bibitem[LS09]{LS09}
Troy Lee and Adi Shraibman.
\newblock Lower bounds in communication complexity.
\newblock {\em Foundations and Trends in Theoretical Computer Science},
  3(4):263--398, 2009.

\bibitem[MS82]{MS82}
Kurt Mehlhorn and Erik~Meineche Schmidt.
\newblock Las vegas is better than determinism in {VLSI} and distributed
  computing (extended abstract).
\newblock In {\em Proceedings of the 14th Annual {ACM} Symposium on Theory of
  Computing, May 5-7, 1982, San Francisco, California, {USA}}, pages 330--337,
  1982.

\bibitem[MU05]{mitzenmacher2005probability}
Michael Mitzenmacher and Eli Upfal.
\newblock {\em Probability and computing: Randomized algorithms and
  probabilistic analysis}.
\newblock Cambridge university press, 2005.

\bibitem[MW18]{Murray-Williams17}
Cody Murray and R.~Ryan Williams.
\newblock Circuit lower bounds for nondeterministic quasi-polytime: an easy
  witness lemma for {NP} and {NQP}.
\newblock In {\em Proceedings of the 50th Annual {ACM} {SIGACT} Symposium on
  Theory of Computing, {STOC} 2018, Los Angeles, CA, USA, June 25-29, 2018},
  pages 890--901, 2018.

\bibitem[OST18]{DST18}
Ryan O'Donnell, Rocco~A. Servedio, and Li{-}Yang Tan.
\newblock Fooling polytopes.
\newblock {\em CoRR}, abs/1808.04035, 2018.

\bibitem[PS86]{PS86}
Ramamohan Paturi and Janos Simon.
\newblock Probabilistic communication complexity.
\newblock {\em J. Comput. Syst. Sci.}, 33(1):106--123, 1986.

\bibitem[Raz87]{Raz87}
Alexander~A Razborov.
\newblock Lower bounds on the size of bounded depth circuits over a complete
  basis with logical addition.
\newblock {\em Mathematical Notes of the Academy of Sciences of the USSR},
  41(4):333--338, 1987.

\bibitem[Raz89]{razborov1989rigid}
Alexander~A Razborov.
\newblock On rigid matrices.
\newblock {\em preprint}, 1989.

\bibitem[Rei11]{Rei11a}
Ben Reichardt.
\newblock Reflections for quantum query algorithms.
\newblock In {\em Proceedings of the Twenty-Second Annual {ACM-SIAM} Symposium
  on Discrete Algorithms, {SODA} 2011, San Francisco, California, USA, January
  23-25, 2011}, pages 560--569, 2011.

\bibitem[SM19]{CP19}
Karthik~C. S. and Pasin Manurangsi.
\newblock On closest pair in euclidean metric: Monochromatic is as hard as
  bichromatic.
\newblock In {\em ITCS}, 2019.

\bibitem[Smo87]{Smo87}
Roman Smolensky.
\newblock Algebraic methods in the theory of lower bounds for boolean circuit
  complexity.
\newblock In {\em Proceedings of the 19th Annual {ACM} Symposium on Theory of
  Computing, 1987, New York, New York, {USA}}, pages 77--82, 1987.

\bibitem[ST17]{ST17a}
Rocco~A. Servedio and Li{-}Yang Tan.
\newblock Fooling intersections of low-weight halfspaces.
\newblock In {\em 58th {IEEE} Annual Symposium on Foundations of Computer
  Science, {FOCS} 2017, Berkeley, CA, USA, October 15-17, 2017}, pages
  824--835, 2017.

\bibitem[SWZ19]{swz19}
Zhao Song, David~P Woodruff, and Peilin Zhong.
\newblock Relative error tensor low rank approximation.
\newblock In {\em SODA}, 2019.

\bibitem[Tal15]{Tal15}
Avishay Tal.
\newblock {\#}{SAT} algorithms from shrinkage.
\newblock {\em Electronic Colloquium on Computational Complexity {(ECCC)}},
  22:114, 2015.

\bibitem[Wil13]{Wil13}
Ryan Williams.
\newblock Improving exhaustive search implies superpolynomial lower bounds.
\newblock {\em SIAM Journal on Computing}, 42(3):1218--1244, 2013.

\bibitem[Wil14a]{Wil14a}
Richard~Ryan Williams.
\newblock The polynomial method in circuit complexity applied to algorithm
  design (invited talk).
\newblock In {\em 34th International Conference on Foundation of Software
  Technology and Theoretical Computer Science, {FSTTCS} 2014, December 15-17,
  2014, New Delhi, India}, pages 47--60, 2014.

\bibitem[Wil14b]{Wil14b}
Ryan Williams.
\newblock Faster all-pairs shortest paths via circuit complexity.
\newblock In {\em Proceedings of the forty-sixth annual ACM symposium on Theory
  of computing}, pages 664--673. ACM, 2014.

\bibitem[Wil14c]{Wil14ACC}
Ryan Williams.
\newblock {Non-Uniform} {ACC} circuit lower bounds.
\newblock {\em Journal of the ACM (JACM)}, 61(1):2, 2014.

\bibitem[Yu18]{yu2018optimal}
Huacheng Yu.
\newblock Optimal succinct rank data structure via approximate nonnegative
  tensor decomposition.
\newblock {\em arXiv preprint arXiv:1811.02078}, 2018.

\end{thebibliography}

	\appendix
	\section{Probabilistic Rank and $\OV$ Algorithms in~\cite{AWY15}}
\label{app:prob-low-rank-OV}

In this appendix we explain~\cite{AW17}'s observation with the $\OV$ algorithm in~\cite{AWY15} as an example.~\cite{AWY15} derived an $n^{2 - 1/O(\log c)}$ time algorithm for $\OV_{n,c\log n}$ with the classical polynomial methods~\cite{Raz87,Smo87} (i.e., the probabilistic polynomials for $\AND$ and $\OR$). Here we demonstrate that their results ultimately rely on the fact that the $\SETDISJ$ matrix has a low probabilistic rank.

\newcommand{\distrM}{\mathcal{M}}

For the rest of the section we will always work with $\mathbb{F}_2$-matrices and vectors. A probabilistic matrix $\distrM$ is a distribution of matrices over $\mathbb{F}_2^{n \times n}$. We say a probabilistic matrix $\distrM$ computes a matrix $A \in \mathbb{F}_2^{n \times n}$ with error $\epsilon$, if for every entry $(i,j) \in [n] \times [n]$,
\[
\Pr_{M \sim \distrM} [ A_{i,j} = M_{i,j}] \ge 1 - \epsilon.
\]

We say a probabilistic matrix $\distrM$ has rank $r$, if the maximum rank of matrices from $\distrM$ is $r$. We define the $\epsilon$-probabilistic rank of a matrix $A$ to be the minimum rank of a probabilistic matrix $\distrM$ computing $A$ with error at most $\epsilon$.

\newcommand{\DISJ}{\textsf{DISJ}}

\newcommand{\MDISJ}{M^{\textsf{DISJ}}}

Consider the following $\SETDISJ$ matrix $\MDISJ \in \mathbb{F}_2^{2^d \times 2^d}$. We use subsets of $[d]$ to index rows and columns of $\MDISJ$, and for subsets $S,T \subseteq [d]$, $\MDISJ_{S,T} = \DISJ(S,T)$ ($\DISJ(S,T)$ is the indicator function that whether $S \cap T = \emptyset$).

The following fact is implicit in~\cite{AWY15}:

\begin{proposition}[\cite{AWY15}]\label{prop:AWY-prob-rank}
	The $\epsilon$-probabilistic rank of $\MDISJ$ is $r \le \binom{d}{\le O(\log \epsilon^{-1})}$. Moreover, let $\mathcal{M}$ be the corresponding probabilistic matrix and $M \sim \mathcal{M}$, there are mappings $\phi_X^M,\phi_Y^M : [d] \to \mathbb{F}_2^{r}$ which can be computed in $\poly(r)$ time, such that $ \phi_X^M(S) \cdot \phi_Y^M(T) = M_{S,T}$ for all $S,T \subseteq [d]$.
\end{proposition}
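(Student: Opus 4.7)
The plan is to recover the claim by instantiating the classical Razborov--Smolensky probabilistic polynomial for $\OR$ over $\mathbb{F}_2$, exactly as in \cite{AWY15}, and then reading off a low-rank decomposition from its monomial expansion. The starting identity is
\[
\DISJ(S,T) \;=\; 1 \,\oplus\, \OR_{i \in [d]}\bigl(x_i \wedge y_i\bigr),
\]
where $x=\mathbf{1}_S,\, y=\mathbf{1}_T \in \mathbb{F}_2^d$ are indicator vectors, so it suffices to design a random $\mathbb{F}_2$-polynomial in $x,y$ that computes the right-hand side \emph{pointwise} with error $\epsilon$.

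First, I would recall the standard probabilistic polynomial for $\OR$: sample $k = \lceil \log_2 \epsilon^{-1}\rceil$ independent uniformly random subsets $R_1,\dots,R_k \subseteq [d]$ and set $p_j(z) = \bigoplus_{i \in R_j} z_i$. Define
\[
\widetilde{\OR}(z) \;=\; 1 \,\oplus\, \prod_{j=1}^{k}\bigl(1 \oplus p_j(z)\bigr),
\]
which agrees with $\OR(z)$ except with probability at most $2^{-k}\le\epsilon$ on every fixed $z$: if $\OR(z)=0$ then every $p_j(z)=0$ deterministically, and if $\OR(z)=1$ then the $p_j(z)$ are i.i.d.\ uniform bits. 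Substituting $z_i = x_i y_i$, the random polynomial $P_M(x,y) := 1 \oplus \widetilde{\OR}(x \wedge y)$ has degree at most $k$ in the variables $\{x_i y_i\}$; since $(x_i y_i)^2 = x_i y_i$ on $\{0,1\}$, it expands as
\[
P_M(x,y) \;=\; \sum_{L \subseteq [d],\, |L| \le k} c_L^M \cdot \prod_{i \in L} x_i y_i,
\]
with $\mathbb{F}_2$-coefficients $c_L^M$ depending on $M := (R_1,\dots,R_k)$.

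The key factorization is $\prod_{i \in L}x_i y_i = \bigl(\prod_{i \in L}x_i\bigr)\bigl(\prod_{i \in L}y_i\bigr)$, so each monomial is a rank-$1$ bilinear form. Indexing coordinates by the $r := \binom{d}{\le k}$ subsets $L$ of size at most $k$, I set
\[
\phi_X^M(S)_L \;=\; c_L^M \cdot \prod_{i \in L}\mathbf{1}[i \in S], \qquad \phi_Y^M(T)_L \;=\; \prod_{i \in L}\mathbf{1}[i \in T],
\]
so that by construction $\phi_X^M(S) \cdot \phi_Y^M(T) = P_M(\mathbf{1}_S,\mathbf{1}_T) = M_{S,T}$ and the sampled matrix has rank at most $r$. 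The coefficients $c_L^M$ are computed once from $R_1,\dots,R_k$ in $\poly(r)$ time by expanding the product of $k$ linear forms, after which each $\phi_X^M(S)$ (and $\phi_Y^M(T)$) is produced in $O(r\cdot k) = \poly(r)$ time by enumerating the $L$'s.

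The one point I would be careful about is that the definition of $\epsilon$-probabilistic rank in the excerpt demands \emph{entry-wise} error rather than error on average; this is handled automatically because the Razborov--Smolensky guarantee is already pointwise in $z$, so no union bound over $(S,T)$ pairs is required. The rest is bookkeeping, and I do not anticipate a substantive obstacle.
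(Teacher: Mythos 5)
Your proof is correct and is exactly the standard Razborov--Smolensky probabilistic-polynomial argument that the paper (citing \cite{AWY15}) has in mind; the paper itself does not reprove this proposition, stating it as a known fact. The only details worth noting are (i) the statement has a small typo in the domain of $\phi_X^M,\phi_Y^M$ (it should be $2^{[d]}$, not $[d]$, as your construction correctly uses), and (ii) the $\poly(r)$ bound on computing the $c_L^M$ does go through, since the iterated multilinear expansion of the $k$ affine factors stays within $\binom{d}{\le k}=r$ monomials at each stage and $d,k \le r$.
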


Now, we derive the $\OV$ algorithm in~\cite{AWY15} only using the above proposition.

\begin{theorem}[\cite{AWY15}]\label{theo:AWY15-revisit}
	There is an $n^{2 - 1 / O(\log c)}$ time algorithm for $\OV_{n,c\log n}$.
\end{theorem}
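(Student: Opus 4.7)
The plan is to recover the AWY15 algorithm by combining Proposition~\ref{prop:AWY-prob-rank} with the standard split-and-list reduction to rectangular matrix multiplication. Partition $A$ and $B$ into $n/s$ groups of size $s$ each, where $s = n^{\Theta(1/\log c)}$ will be tuned. For each block $(A_i, B_j)$ we must decide $\bigvee_{(u,v)\in A_i\times B_j} \DISJ(u,v)$, and the aim is to compute all $(n/s)^2$ block-answers via a single rectangular matrix multiplication in $\tilde{O}((n/s)^2) = n^{2-1/O(\log c)}$ time.

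First I would apply Proposition~\ref{prop:AWY-prob-rank} with an appropriately small error $\epsilon_1$ to obtain maps $\phi_X^M, \phi_Y^M : \{0,1\}^d \to \mathbb{F}_2^{r_1}$ with $r_1 \le \binom{d}{\le O(\log \epsilon_1^{-1})}$ such that $\DISJ(u,v) = \phi_X^M(u)\cdot\phi_Y^M(v)$ for every fixed pair with probability $\ge 1-\epsilon_1$. Inspecting the construction behind the proposition, each coordinate of $\phi_X^M$ (resp.\ $\phi_Y^M$) is a monomial in the bits of $u$ (resp.\ $v$) of degree at most $k_1 = O(\log \epsilon_1^{-1})$, so the low-rank decomposition really comes from a probabilistic polynomial of degree $k_1$ over $\mathbb{F}_2$.

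Next I would compose this inner polynomial with a Razborov--Smolensky probabilistic polynomial for $\OR$ on $s^2$ bits of degree $k_2 = O(\log \epsilon_2^{-1})$. The resulting composition is a probabilistic polynomial in the $2sd$ input bits of a block $(A_i,B_j)$ of degree at most $k_1 k_2$, whose $\mathbb{F}_2$-monomials factor as (monomial in $A_i$-bits)$\times$(monomial in $B_j$-bits). Hence the block-OR matrix has probabilistic rank at most $r := \binom{2sd}{\le k_1 k_2}$. Forming the evaluation matrices $X, Y \in \mathbb{F}_2^{(n/s)\times r}$ reduces everything to an $(n/s) \times r$ by $r \times (n/s)$ matrix multiplication; by Theorem~\ref{theo:fast-matrix-mult-polylog} this runs in $\tilde{O}((n/s)^2)$ time so long as $r \le (n/s)^{0.172}$. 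Any block flagged nonzero is then verified in $O(s^2 d)$ time, which is dominated by the main cost.

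The hard part is the three-way parameter balance. One must pick $\epsilon_1,\epsilon_2$ small enough that a single union bound over the $n^2$ original pairs (for the inner polynomial) and the $(n/s)^2$ blocks (for the outer OR polynomial) succeeds, which forces roughly $k_1 = \Omega(\log n)$ and $k_2 = \Omega(\log(n/s))$; yet $r = \binom{2sd}{\le k_1 k_2}$, with $d=c\log n$, must still lie below the $(n/s)^{0.172}$ cutoff of Theorem~\ref{theo:fast-matrix-mult-polylog}. Choosing $s = n^{\Theta(1/\log c)}$ is exactly what makes $r$ fall below this cutoff while keeping $(n/s)^2 = n^{2-1/O(\log c)}$; verifying the binomial estimate with these parameters is the routine but careful calculation that closes the proof.
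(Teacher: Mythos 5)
Your overall framing---use Proposition~\ref{prop:AWY-prob-rank}, block the inputs, and reduce to a rectangular matrix multiplication via Theorem~\ref{theo:fast-matrix-mult-polylog}---is the right one, but the route you take for the block-OR step is not what the paper does and, as written, it does not close.

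The divergence is in how to combine the inner $\DISJ$ decomposition with the OR over the $s^2$ pairs in a block. You compose the inner probabilistic polynomial of degree $k_1$ with a Razborov--Smolensky probabilistic $\OR$ polynomial of degree $k_2$, and claim the block-OR matrix has probabilistic rank at most $\binom{2sd}{\le k_1 k_2}$. With your stated parameters this bound is super-polynomial. Insisting on a single union bound over all $n^2$ pairs forces $k_1=\Omega(\log n)$, which already makes the \emph{inner} rank $r_1 = \binom{d}{\le k_1}$ polynomially large with an exponent that grows with $c$ (to get $r_1 \le n^{0.1}$ one needs $k_1 = \Theta(\log n/\log c)$, a $\log c$ factor smaller). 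On top of that, with $k_2 = \Omega(\log(n/s))$ the composed degree is $k_1k_2 = \Omega(\log^2 n)$, and since $2sd \ge s = n^{\Theta(1/\log c)}$ one has $\binom{2sd}{\le k_1k_2} \ge \bigl(2sd/(k_1k_2)\bigr)^{k_1k_2} = 2^{\Omega(\log^3 n/\log c)}$, far above any fixed power of $n$, let alone $(n/s)^{0.172}$. The ``routine binomial estimate'' you defer to in fact fails.

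The paper avoids the composition altogether. It partitions into groups $A_i,B_j$ of size $m = \sqrt{\epsilon^{-1}}/10$, draws random vectors $u,v\in\mathbb{F}_2^{m}$, and tests $u^{\top}Wv$ with $W_{k,\ell}=\DISJ((A_i)_k,(B_j)_\ell)$: this is deterministically $0$ when the block has no orthogonal pair and is $1$ with probability at least $1/4$ otherwise. The point is the identity
\[
u^{\top}Wv \approx \Bigl(\sum_{k} u_k\,\phi_X^M((A_i)_k)\Bigr)\cdot\Bigl(\sum_{\ell} v_\ell\,\phi_Y^M((B_j)_\ell)\Bigr),
\]
so the block test is again a single inner product of vectors of the \emph{same} dimension $r\le\binom{d}{\le O(\log\epsilon^{-1})}$---the rank does not grow with $s$. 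This random-bilinear-form trick is exactly what an outer $\OR$-polynomial cannot give you: composition multiplies the degree and blows the base up from $d$ to $sd$. The second thing the paper does that you omit is amplification: $\epsilon$ is chosen so that $\epsilon m^2 = \Theta(1)$ (hence $\log\epsilon^{-1}=\Theta(\log n/\log c)$, keeping $r\le n^{0.1}$), a single draw of $(M,u,v)$ is only constant-probability correct per block, and the rectangular product is repeated $O(\log n)$ times with a per-block majority vote. Trying to make one draw correct for all $n^2$ pairs at once is precisely what inflates $k_1$ by the missing $\log c$ factor.
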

\begin{proof}
	Our presentation here will be different from~\cite{AWY15} for simplicity. Let $d = c\log n$, and $A,B \subseteq \{0,1\}^{d}$ with $|A| = |B| = n$ be the given $\OV$ instance. We say $\OV(A,B) = 1$ if there is an orthogonal pair in $A \times B$, and $\OV(A,B) = 0$ otherwise. 
	
	We first set $\log \epsilon^{-1} = \Theta(\log n / \log c)$ so that after applying Proposition~\ref{prop:AWY-prob-rank}, $r \le n^{0.1}$. 
	
	Let $m = \sqrt{\epsilon^{-1}} / 10$, and assume $m$ divides $n$ for simplicity. We partition $A$ and $B$ into $g = n/m$ groups of vectors, each of size $m$. Let them be $A_1,A_2,\dotsc,A_{g}$ and $B_1,B_2,\dotsc,B_g$ correspondingly.
	
	Let $\mathcal{M}$ be the $\epsilon$-error probabilistic matrix for $\MDISJ$. And $M \sim \mathcal{M}$ be a sample from it. We also draw two random vectors $u,v \in \mathbb{F}_2^{m}$.
	
	Fix two groups $A_i$ and $B_j$, consider the following quantity
	\begin{equation}\label{eq:test}
	\sum_{k=1}^{m} \sum_{\ell=1}^{m} \DISJ( (A_i)_k, (B_j)_\ell ) ) \cdot u_k \cdot v_\ell.
	\end{equation}
	
	Note that this is just $u^{T} W v$, where $W_{k,\ell} = \DISJ( (A_i)_k, (B_j)_\ell ) )$. Since $u$ and $v$ are two random vectors, when $\OV(A_i,B_j) = 0$, \eqref{eq:test} is always zero as $W$ is the all-zero matrix; and otherwise \eqref{eq:test} is $1$ with probability at least $1/4$.
	
	We are going to approximate \eqref{eq:test} by the following
	\begin{align}
	&\sum_{k=1}^{m} \sum_{\ell=1}^{m} M_{(A_i)_k, (B_j)_\ell} \cdot u_k \cdot v_\ell \label{eq:first-with-M} \\
	=&\sum_{k=1}^{m} \sum_{\ell=1}^{m} \{ \phi_X^M((A_i)_k) \cdot \phi_Y^M((B_j)_\ell) \}  \cdot u_k \cdot v_\ell \notag \\
	=&\left\{ \sum_{k=1}^{m} \phi_X^M((A_i)_k) \cdot u_k \right\} \cdot \left\{ \sum_{\ell=1}^{m} \phi_Y^M((B_j)_\ell) \cdot v_\ell
	\right\}. \label{eq:easy-to-eval}
	\end{align}
	
	Note that there are $m^2 = \epsilon^{-1} / 100$ entries of $M$ are considered in \eqref{eq:first-with-M}. So by a union bound and the fact that $\mathcal{M}$ computes $\MDISJ$ with error $\epsilon$. With probability $0.99$, $\eqref{eq:first-with-M}$ and $\eqref{eq:test}$ are equal, over $M$, $u$ and $v$.
	
	Let $U_i = \sum_{k=1}^{m} \phi_X^M((A_i)_k) \cdot u_k$ and $V_j = \sum_{\ell=1}^{m} \phi_Y^M((B_j)_\ell) \cdot v_\ell $. Then by~\eqref{eq:easy-to-eval}, $\eqref{eq:first-with-M}$ equals $U_i \cdot V_j$. Putting everything together, for each $(i,j) \in [g] \times [g]$, we have: (1) when $\OV(A_i,B_j) = 0, \Pr_{M,u,v}[U_i \cdot V_j = 1] \le 0.01$; (2) when $\OV(A_i,B_j) = 1, \Pr_{M,u,v}[U_i \cdot V_j = 1] \ge 0.24$.
	
	For a tuple of $M$, $u$ and $v$, we can compute $U_i \cdot V_j$ for all $i,j$ via a rectangular matrix multiplication between two matrices of size $g \times r$ and $r \times g$. By Theorem~\ref{theo:fast-matrix-mult-polylog}, it can be solved in $g^2 \cdot \polylog(g)$ time. Repeating this for $T = 1000 \log n$ times, for each $i,j$, we record how many times we get $U_i \cdot V_j = 1$ as $T_{i,j}$. The algorithm outputs yes if there are $i,j$ such that $T_{i,j} > T \cdot 0.15$, and no otherwise.
	
	By a simple Chernoff bound, one can show the algorithm solves the $\OV$ instance with high probability, and the running time is $(n/m)^2 \cdot \polylog(n) = \epsilon n^2 \cdot \polylog(n) = n^{2 - 1/O(\log c)}$.
\end{proof}

\begin{remark}\label{rem:prob-rank-to-algo}
	From the above proof, it is easy to see that for any function $F$ whose corresponding communication matrix $M_F$ admits a low probabilistic rank with an efficient decomposition as in Proposition~\ref{prop:AWY-prob-rank}. A fast $F\SATPAIR$ algorithm can be derived similarly.
\end{remark}

	\section{Deterministic Approximate Counting Algorithm for $\textsf{\#OV}$ via Approximate Polynomial}
\label{app:algo-from-approx-poly}
\newcommand{\R}{\mathbb{R}}

Here we show that using the approximate polynomial for $\OR$, one can also derive a deterministic approximate counting algorithm for $\countOV$, but with running time worse than Theorem~\ref{thm:apx-OV}.

\begin{theorem}
	For any $d$ and any $\epsilon > 0$, $\textsf{\#OV}_{n,d}$ can be approximated with additive error $\epsilon \cdot n^2$ in $n \cdot \binom{d}{ \le O(\sqrt{d \log 1/\epsilon})}$ time. In particular, it runs in $n^{1 + o(1)}$ time when $\epsilon$ is a constant and $d = o((\log n/\log \log n)^2)$.
\end{theorem}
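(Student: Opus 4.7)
The plan is to mimic the low-rank decomposition strategy reviewed for $\OV$ in Appendix~\ref{app:prob-low-rank-OV}, but replacing the \emph{probabilistic} $\mathbb{F}_2$-polynomial for $\OR$ by a deterministic $\varepsilon$-\emph{approximating} real polynomial. By the classical result of Nisan--Szegedy/Paturi, there exists a univariate-composed multivariate polynomial $P(z_1,\dotsc,z_d) \in \mathbb{R}[z_1,\dotsc,z_d]$ of degree $t = O(\sqrt{d \log \varepsilon^{-1}})$ such that $|P(z) - \OR(z)| \le \varepsilon$ for every $z \in \{0,1\}^d$. Multilinearizing $P$ on the Boolean cube keeps the degree at most $t$ and produces at most $M := \binom{d}{\le t}$ monomials.

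Given the OV instance $A, B \subseteq \{0,1\}^d$, I would then form the ``disjointness polynomial''
\[
Q(x, y) \;:=\; 1 - P(x_1 y_1, x_2 y_2, \dotsc, x_d y_d),
\]
so that $|Q(x,y) - \DISJ(x,y)| \le \varepsilon$ for all $(x,y) \in \{0,1\}^{2d}$. Every monomial of $Q$, after multilinearization in $(x,y)$, has the form $c_S \prod_{i \in S} x_i y_i = c_S\, x_S\, y_S$ for some $S \subseteq [d]$ with $|S| \le t$, yielding the key factorization $Q(x,y) = \sum_{S} c_S\, x_S\, y_S$ with at most $M$ nonzero terms.

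This factorization directly gives a sketch. For each $a \in A$ define $\phi_A(a) \in \mathbb{R}^M$ with coordinate $c_S \cdot a_S$ and for each $b \in B$ define $\phi_B(b) \in \mathbb{R}^M$ with coordinate $b_S$, so that $\langle \phi_A(a), \phi_B(b) \rangle = Q(a,b)$. Summing, $\bigl\langle \sum_{a \in A} \phi_A(a),\, \sum_{b \in B} \phi_B(b) \bigr\rangle = \sum_{(a,b)\in A\times B} Q(a,b)$, which differs from $\textsf{\#OV}(A,B) = \sum_{(a,b)} \DISJ(a,b)$ by at most $\varepsilon \cdot |A|\cdot|B| = \varepsilon n^2$ by the pointwise error bound on $Q$. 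Computing each sketch takes $O(M)$ time per vector (the coefficients $c_S$ and the subset products $a_S$, $b_S$ can be enumerated in time $\poly(d) \cdot M$), and the two aggregate sketches and their inner product can be computed in $O(nM)$ time total, giving running time $n \cdot \binom{d}{\le O(\sqrt{d \log \varepsilon^{-1}})}$ as required.

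The only mildly delicate point is efficiently writing down the multilinearized polynomial $Q$ and its coefficients $c_S$ in time comparable to $M$; this is a routine bookkeeping task since $P$ is an explicit polynomial of degree $t$ over $d$ variables and can be expanded and multilinearized monomial-by-monomial in $\poly(d) \cdot M$ preprocessing time, which does not affect the stated bound. The ``in particular'' clause is then immediate: for constant $\varepsilon$, $t = O(\sqrt{d})$, and $\binom{d}{\le O(\sqrt{d})} = 2^{O(\sqrt{d}\log d)} = n^{o(1)}$ whenever $\sqrt{d}\log d = o(\log n)$, i.e., $d = o\bigl((\log n/\log\log n)^2\bigr)$.
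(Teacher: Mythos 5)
Your proof is correct and matches the paper's argument in Appendix~\ref{app:algo-from-approx-poly} essentially step for step: both use the degree-$O(\sqrt{d\log\varepsilon^{-1}})$ approximate polynomial for $\OR$, compose it with the pointwise products $x_i y_i$, multilinearize to get at most $\binom{d}{\le D}$ monomials of the form $c_S\, x_S\, y_S$, and then aggregate per-vector sketches to evaluate the sum in $n \cdot \binom{d}{\le D}$ time. (The only cosmetic difference is that you apply the approximant to $\OR$ and subtract from $1$, whereas the paper's $P_\varepsilon$ directly approximates $\text{NOR}$; also note the paper's ``$x_S$ denotes $\sum_{i\in S} x_i$'' is a typo for the product, which you have correctly.)
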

\begin{proof}
	By~\cite{BCWZ99,de2008note}, there is a polynomial $P_{\eps} : \{0,1\}^d \to \R$ such that:
	
	\begin{itemize}
		\item $P_{\eps}$ is of degree $D = O\left( \sqrt{d \log 1/\eps} \right)$.
		\item Given $z \in \{0,1\}^{d}$, if $\OR(z) = 0$, then $P_{\eps}(z) \in [1 - \eps, 1] $, otherwise $P_\eps(z) \in [0, \eps]$.
		\item $P_{\eps}$ can be constructed in time polynomial in its description size.
	\end{itemize}

	Let $z_{S} := \prod_{i \in S} z_i$, one can write $P_{\eps}(z) := \sum_{|S| \le D} c_S \cdot z_S$. Let $A,B \subseteq \{0,1\}^{d}$ with $|A| = |B| = n$ be the given $\countOV$ instance, we compute
	\begin{align}
	  E=&\sum_{(x,y) \in A \times B} P_{\eps}(x_1 \cdot y_1,x_2\cdot y_2,\dotsc,x_{d} \cdot y_d) \notag \\
	   =& \sum_{(x,y) \in A \times B} \sum_{|S| \le D} c_S \cdot x_{S} \cdot y_{S}\notag \\
	   =& \sum_{|S| \le D} c_S \cdot \sum_{(x,y) \in A \times B} x_{S} \cdot y_{S}\notag \\
	   =& \sum_{|S| \le D} c_S \cdot \sum_{x \in A} x_{S} \cdot \sum_{y \in B} y_{S} \label{final-line}
	\end{align}
	In above $x_{S}$ and $y_S$ denote $\sum_{i \in S} x_i$ and $\sum_{i \in S} y_i$ respectively.
	
	By the property of $P_{\eps}$, is easy to see that $E$ approximates the number of orthogonal pairs with an additive error $O(\eps \cdot n^2)$. And by \eqref{final-line}, $E$ can be computed in 
	\[
	n \cdot \binom{d}{\le D} = n \cdot \binom{d}{ \le O(\sqrt{d \log 1/\epsilon})}
	\]
	time.
	
	When $\epsilon$ is a constant, the above simplifies to $n \cdot d^{O(\sqrt{d})}$, which is $n^{1 + o(1)}$ when $d = o((\log n /\log\log n)^2)$.
\end{proof}

\begin{remark}\label{rem:also-works}
	We remark that the above algorithm also works for $\sparsecountOV_{n,m,d}$ and $\countkOV_{n,d}$, with running times $n \cdot \binom{m}{\le O\left(\sqrt{d \log 1/\eps}\right)}$ and $n \cdot k \cdot \binom{d}{ \le O(\sqrt{d \log 1/\epsilon})}$, respectively. In particular, it improves the running time in Theorem~\ref{thm:count_sparse_ov}.
\end{remark}


\section{Quantum Communication Protocols and Approximate Rank}\label{apx:simulate}
In this section we explain how to simulate a quantum communication protocol by a deterministic algorithm, and thus prove Theorem \ref{thm:apx_rank} and Corollary \ref{cor:apx_count}.

We first prove the following theorem under our definition of $k$-party quantum communication protocol.
The proof itself follows closely previous proof for $2$-party quantum communication protocols in \cite{buhrman2001communication}.
\begin{theorem} \label{thm:comm_tensor}
For a $k$-party quantum communication protocol.
The final state of $\mathcal{P}$ on input $x_1 \in \domainX_1, x_2 \in \domainX_2, \ldots, x_k \in \domainX_k$, 
can be written as
$$
\sum_{i \in S} a_i^1(x_1) \cdot a_i^2(x_2) \cdot \ldots a_i^k(x_k) \cdot \ket{A_i^1(x_1)}  \ket{A_i^2(x_2)} \ldots \ket{A_i^k(x_k)} \ket{f(i)},
$$
where $a_i^1(x_1) , a_i^2(x_2) , \ldots, a_i^k(x_k) $ are complex numbers and $\ket{A_i^1(x_1)}, \ket{A_i^2(x_2)}, \ldots, \ket{A_i^k(x_k)}$ are unit vectors, 
$
S = \{0, 1\}^{C(\mathcal{P})}
$
is the set of all possible histories of modifications, 
and $f : S \to \overline{H}$ is a function which maps the history of modifications to a state in $\overline{H}$.
\end{theorem}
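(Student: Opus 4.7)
The plan is to prove Theorem \ref{thm:comm_tensor} by induction on the number of turns $r$ of the protocol, maintaining the invariant that after the first $i$ turns, the intermediate state has the form
\[
|\psi_i\rangle \;=\; \sum_{h \in S_i} \prod_{j=1}^{k} a^j_h(x_j) \cdot \ket{A^1_h(x_1)} \otimes \cdots \otimes \ket{A^k_h(x_k)} \otimes \ket{g_i(h)},
\]
where $S_i = \prod_{\ell \le i} \{0,1\}^{\log \dim(\overline{H_\ell})}$ is the set of possible channel-modification histories through turn $i$, each amplitude $a^j_h(x_j)$ and unit vector $\ket{A^j_h(x_j)}$ depends only on player $j$'s own input $x_j$ and on the history $h$, and $g_i(h)$ is a computational basis state of $\overline{H}$ determined by $h$ alone. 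The base case $i=0$ (the initial state $\ket{0}$ with $S_0 = \{\emptyset\}$) is immediate.

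For the inductive step I would exploit the fact that $U_{i+1}^{p_{i+1}}(x_{p_{i+1}})$ acts only on $H_{p_{i+1}} \otimes \overline{H_{i+1}}$. By the invariant, the content of $\overline{H_{i+1}}$ inside $\ket{g_i(h)}$ is a basis state $\ket{b_i(h)}$ depending only on $h$, so expanding
\[
U_{i+1}^{p_{i+1}}(x_{p_{i+1}}) \bigl( \ket{A^{p_{i+1}}_h(x_{p_{i+1}})} \otimes \ket{b_i(h)} \bigr) \;=\; \sum_{b'} \alpha_{h,b'}(x_{p_{i+1}}) \, \ket{A^{p_{i+1}}_{h,b'}(x_{p_{i+1}})} \otimes \ket{b'}
\]
in the computational basis of $\overline{H_{i+1}}$ and pulling out norms produces a new amplitude and unit vector for player $p_{i+1}$. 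Setting $h' = (h,b') \in S_{i+1} := S_i \times \{0,1\}^{\log \dim(\overline{H_{i+1}})}$, absorbing $\alpha_{h,b'}(x_{p_{i+1}})$ into $a^{p_{i+1}}_{h'}(x_{p_{i+1}})$, and keeping $a^j_h$ and $\ket{A^j_h}$ for $j \ne p_{i+1}$ unchanged, I get a state of the same form after turn $i+1$. The per-player factorization is preserved precisely because $U_{i+1}^{p_{i+1}}$ is parameterized only by $x_{p_{i+1}}$ and its two input subregisters $\ket{A^{p_{i+1}}_h}$ and $\ket{b_i(h)}$ already obey the invariant (the former depending on $x_{p_{i+1}}$ and $h$, the latter on $h$ alone).

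After all $r$ turns, the history length is $\sum_{i=1}^{r} \log \dim(\overline{H_i}) = C(\mathcal{P})$, hence $S_r = \{0,1\}^{C(\mathcal{P})}$, and taking $f := g_r$ recovers the decomposition stated in the theorem. The argument is essentially bookkeeping and I do not expect a serious technical obstacle; the one thing I would double-check is that each turn writes only to the predeclared subregister $\overline{H_{i+1}}$ of $\overline{H}$ (not to the full channel), so that the history grows by exactly $\log \dim(\overline{H_{i+1}})$ bits per turn and the final index set has size $2^{C(\mathcal{P})}$ rather than something larger. This is immediate from the definition of a $k$-party quantum communication protocol in Section~\ref{sec:def_quantum_comm}, where $\overline{H_i}$ is fixed and independent of the inputs.
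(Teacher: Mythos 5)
Your proof is correct and takes essentially the same approach as the paper's: induction on the number of turns, splitting each term of the superposition in the computational basis of $\overline{H_r}$ and absorbing the new amplitude into player $p_r$'s factor. Your write-up is in fact more careful than the paper's, since it makes explicit the invariant that the channel content $\ket{g_i(h)}$ is a computational basis state determined by $h$ alone, which is what justifies factoring out $\ket{b_i(h)}$ before applying $U_{i+1}^{p_{i+1}}$.
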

\begin{proof}
The proof is by induction.
When $r = 0$ the theorem is obvious. 
Suppose after applying $U_{1}^{p_1}, U_{2}^{p_2}, \ldots, U_{r - 1}^{p_{r - 1}}$ the final state is 
$$
\sum_{i \in S'} 
a_i^1(x_1) \cdot a_i^2(x_2) \cdot \ldots a_i^k(x_k) \cdot \ket{A_i^1(x_1)}  \ket{A_i^2(x_2)} \ldots \ket{A_i^k(x_k)} \ket{f(i)},
$$
where
$S' = \{0, 1\}^{\sum_{i = 1}^{r-  1} \log(\dim(\overline{H_i}))}$.
Now we apply $U_r^{p_r}$. 
Since $U_r^{p_r}$ acts on $H_{p_r} \otimes \overline{H_r}$, 
thus every element of the superposition in the previous states splits into at most $2^{\log \dim(\overline{H_r})}$ terms, depending on the state of the qubits in $\overline{H_r}$ after applying $U_r^{p_r}$.
Thus, after applying $U_r^{p_r}$ there will be at most $|S'| \cdot 2^{\log \dim(\overline{H_r})} = 2^{\sum_{i = 1}^{r} \log(\dim(\overline{H_i}))}$ terms in the superposition.
\end{proof}

Now let $S_1$ to be the set that contains all $i \in S$ such that the first qubit in $f(i)$ is $\ket{1}$, and let 
$$\phi(x_1, x_2, \ldots, x_k) = \sum_{i \in S_1} 
 a_i^1(x_1) \cdot a_i^2(x_2) \cdot \ldots a_i^k(x_k) \cdot \ket{A_i^1(x_1)}  \ket{A_i^2(x_2)} \ldots \ket{A_i^k(x_k)} \ket{f(i)}
$$ be the part of the final state that corresponds to
a $1$-output of the protocol.

Now for $i, j \in S_1$, we define
$$
a_{i, j}^p(x_p) = \overline{a_i^p(x_p)} a_{j}^p(x_p) \braket{A_i^p(x_p) | A_j^p(x_p)}.
$$
Thus, the probability of outputting $1$ is 
$$
\braket{\phi(x_1, x_2, \ldots, x_k) | \phi(x_1, x_2, \ldots, x_k)} = \sum_{i, j \in S_1} \prod _{p = 1}^k a_{i, j}^p(x_p).
$$
Thus, we have
$$
\left |\sum_{i, j \in S_1} \prod _{p = 1}^k a_{i, j}^p(x_p) - f(x_1, x_2, \ldots, x_k) \right| \le \varepsilon.
$$

Now we have finished our prove for Theorem \ref{thm:apx_rank}, since the tensor defined by 
$$
A(x_1, x_2, \ldots, x_k) = \prod _{p = 1}^k a_{i, j}^p(x_p)
$$
is a simple tensor, and thus $\rank_{\varepsilon}(M_f) \le |S_1|^2 = 2^{O(C( \mathcal{P}))}$.

Scrutinizing the proof of Thereom \ref{thm:comm_tensor}, for each player $P_p$, for any input $x_p \in \domainX_p$, 
we can calculate $a_{i, j}^p(x_p)$ in $2^{O(S_p(\mathcal{P}))} \cdot \poly(|S|) = 2^{O(S_p(\mathcal{P}) + C(\mathcal{P}))}$ time,
by using a classical deterministic algorithm to simulate the procedure above, as long as all unitary transforms $U_{i}^{p_i}$ can be constructed in polynomial time (with respect to its size).

\section{Conditional Lower Bound for Computational-Efficient $\PHcc$ Protocols}\label{app:PH-protocols}

In this section we prove Theorem~\ref{theo:condtional-lowb-for-PH} (restated below).

\begin{reminder}{Theorem~\ref{theo:condtional-lowb-for-PH}}
	Under the following hypothesis, $\DistLCS_d$ and $\DistEdit_d$ do not admit computationally efficient $\PHcc$ protocols with complexity $\polylog(d)$:
	\begin{itemize}
		\item There is a constant $\delta > 0$ such that Formula-$\SAT$ of polynomial-size formulas requires $2^{n - n^{1-\delta}}$ time.
	\end{itemize}
\end{reminder}

We first recall the definition of a $\PHcc$ protocol from~\cite{BFS86}.\footnote{See also~\cite{GPW18} for a more recent reference.}

\begin{definition}[\cite{BFS86}]\label{defi:PHcc-protocols}
	A $\PH$ communication protocol ($\PHcc$) $\Pi$ for a function $F : \domainX \times \domainY \to \{0,1,\bot\}$ proceeds as follows:
	
	\begin{itemize}
		\item Alice holds input $x \in \domainX$ and Bob holds input $y \in \domainY$.
		\item For a constant $k \in \mathbb{N}$, there are $2 k $ provers $P_1,P_2,\dotsc,P_{2k}$.
		\item For each $i \in [2k]$, the prover $P_{i}$ sends both Alice and Bob a proof $z_i \in \{0,1\}^{m_i}$.
		
		\item We use $A(x,z_1,z_2,\dotsc,z_{2k})$ (resp. $B(y,z_1,z_2,\dotsc,z_{2k})$) to be the indicator function that whether Alice (resp. Bob) accepts the proof sequence $z_1,z_2,\dotsc,z_{2k}$, given the input $x$ (resp. $y$).
		
		\item If $F(x,y) = 1$, then
		\[
		\exists_{z_1 \in \{0,1\}^{m_1}} \forall_{z_2 \in \{0,1\}^{m_2}} \cdots \exists_{z_{2k-1} \in \{0,1\}^{m_{2k-1}}}  \forall_{z_{2k} \in \{0,1\}^{m_{2k}}} \left[ A(x,z_1,z_2,\dotsc,z_{2k}) \wedge B(y,z_1,z_2,\dotsc,z_{2k}) \right].
		\]    
		
		\item If $F(x,y) = 0$, then
		\[
		\forall_{z_1 \in \{0,1\}^{m_1}} \exists_{z_2 \in \{0,1\}^{m_2}} \cdots \forall_{z_{2k-1} \in \{0,1\}^{m_{2k-1}}}  \exists_{z_{2k} \in \{0,1\}^{m_{2k}}} \left[ \neg A(x,z_1,z_2,\dotsc,z_{2k}) \vee \neg B(y,z_1,z_2,\dotsc,z_{2k}) \right].
		\]
	\end{itemize} 
	
	Moreover, we say the protocol is \emph{computationally efficient} if Alice and Bob's decision functions (the functions $A$ and $B$) can be computed in polynomial-time w.r.t. their input lengths. The communication complexity of $\Pi$ is simply the total number of proof bits from all provers, i.e. $\sum_{i=1}^{2k} m_i$.
\end{definition}

\begin{theorem}\label{theo:algo-from-PHcc-protocols}
	Let $F : \domainX \times \domainY \to \{0,1,\perp\}$ be a partial function. Suppose there is a computationally efficient $\PHcc$ protocol for $F$ with communication complexity $T$ and number of provers $2k$. If $\varepsilon > 0$ satisfies
	\[
	\binom{2^T}{\le 10 \cdot (2 \cdot T)^{2k} \cdot \log(1/\epsilon)} \le n^{0.1},
	\]
	then there is an $O(\epsilon \cdot n^2 \cdot \polylog(n))$ time algorithm for $F\SATPAIR_n$.
\end{theorem}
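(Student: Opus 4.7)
Proof proposal: The plan is to generalise the proof of Theorem~\ref{thm:alg_amcc} by replacing the degree-$1$ ``proof-indicator'' inner product used in the $\AMcc$ case with a polynomial-method inner product of degree $D := 10\cdot (2T)^{2k}\log(1/\epsilon)$ that evaluates the full $\Sigma_{2k}$ alternating-quantifier predicate defined by the $\PHcc$ protocol.

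First, for each $x \in \domainX$ I would form the $2^T$-dimensional Boolean vector $a_x = (A(x,\vec z))_{\vec z\in\{0,1\}^T}$, and likewise $b_y=(B(y,\vec z))_{\vec z}$ from Bob's decision function. Writing $c = a_x \odot b_y$ for the entrywise product, we have $F(x,y) = \Phi(c)$, where
\[
\Phi(c) \;=\; \bigvee_{z_1}\bigwedge_{z_2}\cdots\bigvee_{z_{2k-1}}\bigwedge_{z_{2k}}\; c_{(z_1,\ldots,z_{2k})}
\]
is the alternating $\OR/\AND$ formula of total size $2^T$ encoded by the protocol. The central step is to produce a real multilinear polynomial $P(c_1,\ldots,c_{2^T})$ of degree at most $D$ with $|P(c)-\Phi(c)|\le\epsilon/10$ for every $c\in\{0,1\}^{2^T}$. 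Given such a $P$, every monomial $\prod_{i\in S}c_i$ with $|S|\le D$ factorises as $\bigl(\prod_{i\in S}a_x(i)\bigr)\bigl(\prod_{i\in S}b_y(i)\bigr)$, so writing $P(c) = \sum_{|S|\le D}\gamma_S\prod_{i\in S}c_i$ we obtain
\[
P(a_x\odot b_y) \;=\; \langle u(x),v(y)\rangle,\qquad u(x)_S := \gamma_S\prod_{i\in S}a_x(i),\quad v(y)_S := \prod_{i\in S}b_y(i),
\]
with both vectors lying in $\mathbb{R}^M$, $M = \binom{2^T}{\le D}$. Computational efficiency of the protocol means each coordinate of the sketches is computable in $\poly(T)$ time per input.

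The remainder of the algorithm then follows the blueprint of Theorem~\ref{thm:alg_amcc} essentially verbatim. I would partition $A$ and $B$ into $g = \Theta(\sqrt{\epsilon}\,n)$ groups of size $O(1/\sqrt{\epsilon})$, stack the per-group sketch sums into matrices $\mathcal{M}_A,\mathcal{M}_B\in\mathbb{R}^{M\times g}$, and compute $\mathcal{M}_A^{\top}\mathcal{M}_B$ using Theorem~\ref{theo:fast-matrix-mult-polylog}; the hypothesis $M\le n^{0.1}\le g^{0.1}$ ensures the rectangular multiplication runs in $O(g^2\polylog(g)) = O(\epsilon n^2\polylog(n))$ time. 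Each entry $(i,j)$ of the product is a $(\epsilon\cdot|A_i|\cdot|B_j|/10)$-accurate estimate of $\sum_{(a,b)\in A_i\times B_j}F(a,b)$, which is enough to flag groups containing a pair with $F=1$; $O(\log n)$ independent repetitions with majority vote, combined with a union bound over the $g^2$ entries, boost success to $1-1/\poly(n)$.

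The main obstacle, and what I expect to be the technical heart of the proof, is the construction of the degree-$D$ approximating polynomial for $\Phi$. A naive level-by-level composition of the Nisan--Szegedy degree-$O(\sqrt{N\log(1/\epsilon)})$ polynomials for $\OR_N$/$\AND_N$ at each of the $2k$ alternation levels yields total degree $\prod_i O(\sqrt{2^{m_i}\log(1/\epsilon')}) \approx 2^{T/2}$, which is vastly too large. The correct construction has to exploit that the $i$-th prover communicates only $m_i$ bits with $\sum_i m_i \le T$, charging each alternation level only $\poly(m_i,\log(1/\epsilon))$ in degree rather than the generic $\sqrt{2^{m_i}}$. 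A plausible route is to mimic the Poisson-sampling / approximate-set-size idea from Lemma~\ref{lm:AMcc-MGapIP}: replace each $\OR/\AND$ over $2^{m_i}$ literals by a hashing-based polynomial of degree $O(m_i\log(1/\epsilon))$ which is correct whenever the ``witness set'' at that level is nonempty, and track the error budget carefully so that the $2k$ degrees telescope multiplicatively to $O((2T)^{2k}\log(1/\epsilon))$ after standard error amplification across the quantifier tree.
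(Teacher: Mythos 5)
There is a genuine gap in the central step. You posit a \emph{deterministic} multilinear polynomial $P$ of degree $D = 10\cdot(2T)^{2k}\log(1/\epsilon)$ with $|P(c)-\Phi(c)|\le \epsilon/10$ for \emph{every} $c\in\{0,1\}^{2^T}$. Such a polynomial cannot exist: the pointwise approximate degree of $\OR_N$ (with any error bounded away from $1/2$) is $\Theta(\sqrt{N})$, and the top gate of $\Phi$ is an $\OR$ of fan-in $2^{m_1}$; a straightforward restriction of $\Phi$ forces its approximate degree to be $\Omega(\sqrt{2^{m_1}})$, which is exponential in $T$, not $\poly(T)$. You correctly observe in your last paragraph that Nisan--Szegedy composition is far too lossy and that the Poisson/hashing viewpoint ought to give degree $O(m_i\log(1/\epsilon))$ per level; but that construction is the Razborov--Smolensky \emph{probabilistic} polynomial (degree $O(\log(1/\delta))$ per gate, independent of fan-in, correct at each input with probability $1-\delta$), not a pointwise approximation, and it is exactly what the paper uses. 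The paper simply notes that $\Phi$ is a depth-$2k$, size-$\le 2^{T+1}$ $\textsf{AC}^0$ formula, invokes the standard probabilistic polynomial of degree $D$ for it, and cites~\cite{razborov1989rigid} for the resulting low $\epsilon$-probabilistic rank of $M_F$.

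Switching to a probabilistic polynomial is not a drop-in replacement for your framework. You can no longer argue that each block sum $\sum_{(a,b)\in A_i\times B_j}P(a_x\odot b_y)$ is an additive $(\epsilon|A_i||B_j|/10)$-estimate of the true count; when the random polynomial errs on a pair, the error magnitude is uncontrolled, so the sum-based estimate breaks. The correct move, made explicitly in the paper (via Remark~\ref{rem:prob-rank-to-algo} and the proof structure of Theorem~\ref{theo:AWY15-revisit} rather than Theorem~\ref{thm:alg_amcc}), is to choose groups of size $O(1/\sqrt{\epsilon})$ so that a union bound over the $O(1/\epsilon)$ pairs within a block yields, with constant probability over the sampled polynomial and auxiliary random vectors, an \emph{exact} detection of whether the block contains a satisfying pair, and then repeat $O(\log n)$ times with a majority vote. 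So the algorithmic skeleton you describe is right, but you have transplanted the wrong probabilistic object and the wrong notion of correctness into it; fixing this requires replacing the deterministic approximation by a probabilistic polynomial and replacing ``approximate block counts'' by ``exact block detection with high probability.''
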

\begin{proof}
	By Remark~\ref{rem:prob-rank-to-algo}, we only need to argue the probabilistic rank of the communication matrix $M_F$ is small, which is already established in~\cite{razborov1989rigid}. In the following, we follow the proof structure of Theorem~\ref{theo:AWY15-revisit}.
	
	Recall that $T = \sum_{i=1}^{2k} m_i$. Let $M = 2^T$ and $z \in \{0,1\}^{M}$. With a natural bijection between $[M]$ and $\{0,1\}^T$, we can use a string $x \in \{0,1\}^T$ to index $z$.
	
	We define the following $\textsf{AC}^0$ function
	\[
	F(z) := \vee_{x_1 \in \{0,1\}^{m_1}} \wedge_{x_2 \in \{0,1\}^{m_2}} \cdots \vee_{x_{2k - 1} \in \{0,1\}^{m_{2k - 1}}} \wedge_{x_{2k} \in \{0,1\}^{m_{2k}}} z_{(x_1 \circ x_2 \circ \dotsc x_{2k})},
	\]
	where $x_1 \circ x_2 \circ \dotsc x_{2k}$ means the concatenation of the $x_i$'s.
	
	By standard polynomial method~\cite{Raz87,Smo87}, there is a
	\[
	D = 10 \cdot (2 \cdot T)^{2k} \cdot \log(1/\epsilon)
	\]
	degree, $\epsilon$-error probabilistic polynomial for the function $F$. Formally, there is an efficiently-sampleable distribution on $D$-degree polynomials $\mathcal{P}_\eps$, such that for all $z \in \{0,1\}^M$,
	\[
	\Pr_{P \sim \mathcal{P}_\eps} [ P(z) = F(z) ] \ge 1-\epsilon.
	\]
	
	In particular, this means the $\epsilon$-probabilistic rank of $M_F$ is smaller than 
	\[
	r \le \binom{M}{\le D},
	\]
	and the corresponding distribution on low-rank matrices has efficiently computable decomposition as specified in Proposition~\ref{prop:AWY-prob-rank}. Then we can proceed exactly as in Theorem~\ref{theo:AWY15-revisit}.
\end{proof}

Finally, we are ready to prove Theorem~\ref{theo:condtional-lowb-for-PH}.

\begin{proofof}{Theorem~\ref{theo:condtional-lowb-for-PH}}
	We proceed similarly as in~Theorem~\ref{thm:consequence-LCS}. Below we only discuss $\DistLCS$, the proof for $\DistEdit$ is exactly the same.
	
	For a given formula $\mathcal{F}$ of size $s = \poly(n)$, we first enumerate all $2^{n / 2}$ possible assignments to first $n / 2$ variables in $\mathcal{F}$ and all possible assignments to last $n / 2$ variables in $\mathcal{F}$. For each $a \in \{0, 1\}^{n / 2}$ corresponding to an assignment to first $n / 2$ variables in $\mathcal{F}$ and $b \in \{0, 1\}^{n / 2}$ corresponding to an assignment to last $n / 2$ variables in $\mathcal{F}$ we calculate $G(a)$ and $\overline{G}(b)$ using Theorem \ref{thm:lcs_red}. We can assume that both $G(a)$ and $\overline{G}(b)$ have length $\ell = \poly(s) = \poly(n)$.
	
	Now suppose $\DistLCS$ with $\tau = Y$ has a computationally efficient $\PHcc$ protocol with $T = \polylog(\ell) = \polylog(n)$ and the number of provers $2k$ ($k$ is a constant).
	We set $\epsilon$ such that
	\[
	\binom{2^T}{\le 10 \cdot (2 \cdot T)^{2k} \cdot \log(1/\epsilon)} \le 2^{n/20}.
	\]
	The above can be satisfied if
	\[
	2^{T \cdot 10 \cdot (2 \cdot T)^{2k} \cdot \log(1/\epsilon)} = 2^{\polylog(n) \cdot \log(1/\epsilon)} \le 2^{n/20}.
	\]
	Finally, we set $\epsilon = 2^{-n^{1-\delta/2}}$ for the $\delta > 0$ in the hypothesis.
	Now we can complete the proof by applying Theorem~\ref{theo:algo-from-PHcc-protocols}.
\end{proofof}

\end{document}